\newtheorem{definition}{Definition}[section]
\newtheorem{example}[definition]{Example}
\newtheorem{theorem}[definition]{Theorem}
\newtheorem{corollary}[definition]{Corollary}
\newtheorem{lemma}[definition]{Lemma}
\newtheorem{remark}[definition]{Remark}
\title{
Model Checking Linear Temporal Logic with Standpoint Modalities}
\author{
Rajab Aghamov$^1$ \and Christel Baier$^1$ \and Toghrul Karimov$^2$ \and Rupak Majumdar$^2$ \and Joël Ouaknine$^2$ \and Jakob Piribauer$^{1,3}$ \and Timm Spork$^1$ \\
\affiliations
$^1$Technische Universität Dresden, Dresden, Germany\\
$^2$Max Planck Institute for Software Systems, Saarbrücken, Germany \\
$^3$Universität Leipzig, Germany\\
\emails
\{rajab.aghamov, christel.baier\}@tu-dresden.de, \{toghs, rupak, joel\}@mpi-sws.org, jakob.piribauer@uni-leipzig.de, timm.spork@tu-dresden.de
}
\newcommand{\optional}[1]{#1}
\newcommand{\Tree}{\mathit{Tree}}
\newcommand{\StpNodes}{\mathit{StpNod}}
\newcommand{\pow}{\mathit{pow}}
\newcommand{\obs}{\mathit{obs}}
\newcommand{\Obsset}{\mathfrak{O}}
\newcommand{\ad}{\mathit{ad}}
\newcommand{\StMod}[2]{\langle \!\! \langle #1 \rangle \!\!\rangle #2}
\newcommand{\DualStMod}[2]{[\![ #1 ]\!] #2}
\newcommand{\Ag}{\mathit{Ag}}
\newcommand{\proj}[2]{#1|_{#2}}
\newcommand{\Nat}{\mathbb{N}}
\newcommand{\StpLTL}{\mathrm{SLTL}}
\newcommand{\true}{\mathit{true}}
\newcommand{\neXt}{\bigcirc}
\newcommand{\Until}{ \, \mathrm{U}\, }
\newcommand{\primed}[1]{#1'}
\newcommand{\ltlk}[1]{#1_{\text{\rm \tiny LTLK}}}
\newcommand{\ltlkstep}[1]{#1_{\text{\rm \tiny LTLK}}^{\textrm{\tiny step}}}
\newcommand{\LTLmodels}{\models_{\text{\rm \tiny LTL}}}
\newcommand{\LTLKmodels}{\models_{\text{\rm \tiny LTLK}}}
\newcommand{\Stmodels}[1]{\models_{\scriptscriptstyle #1}}
\newcommand{\QStmodels}[2]{\models_{\scriptscriptstyle #1}^{\scriptscriptstyle #2}}
\newcommand{\step}{\text{\rm step}}
\newcommand{\stepmodels}{\Stmodels{\step}}
\newcommand{\probs}{\text{\rm pobs}}
\newcommand{\pobs}{\probs}
\newcommand{\probsmodels}{\Stmodels{\pobs}}
\newcommand{\pobsmodels}{\probsmodels}
\newcommand{\public}{\text{\rm public}}
\newcommand{\publicmodels}{\Stmodels{\public}}
\newcommand{\decr}{\text{\rm decr}}
\newcommand{\decrmodels}[1]{\QStmodels{\decr}{\scriptscriptstyle #1}}
\newcommand{\incr}{\text{\rm incr}}
\newcommand{\incrmodels}[1]{\QStmodels{\incr}{\scriptscriptstyle #1}}
\newcommand{\tinydecr}{\scriptscriptstyle {\tiny \decr}}
\newcommand{\tinyincr}{\scriptscriptstyle {\tiny \incr}}
\newcommand{\tinypublic}{\scriptscriptstyle {\tiny \public}}
\newcommand{\tinypobs}{\scriptscriptstyle {\tiny \pobs}}
\newcommand{\tinystep}{\scriptscriptstyle {\tiny \step}}
\newcommand{\InnerModels}[1]{\models_{\scriptscriptstyle #1}}
\newcommand{\SLTL}[2]{\StpLTL^{\text{\tiny #1}}_{#2}}
\newcommand{\Sat}{\mathit{Sat}}
\newcommand{\Hist}{\mathit{Hist}}
\newcommand{\PSPACE}{\mathit{PSPACE}}
\newcommand{\Paths}{\mathit{Paths}}
\newcommand{\Traces}{\mathit{Traces}}
\newcommand{\trace}{\mathit{trace}}
\newcommand{\Reach}{\mathit{Reach}}
\newcommand{\init}{\mathit{init}}
\newcommand{\Init}{\mathit{Init}}
\newcommand{\last}{\mathit{last}}
\newcommand{\first}{\mathit{first}}
\newcommand{\suffix}[2]{#1[#2\ldots \infty]}
\newcommand{\prefix}[2]{#1[0\ldots #2]}
\newcommand{\cA}{\mathcal{A}}
\newcommand{\cD}{\mathcal{D}}
\newcommand{\cL}{\mathcal{L}}
\newcommand{\cP}{\mathcal{P}}
\newcommand{\cT}{\mathcal{T}}
\newcommand{\cX}{\mathcal{X}}
\newcommand{\fT}{\mathfrak{T}}
\begin{document}

\maketitle

\begin{abstract}
	Standpoint linear temporal logic ($\StpLTL$) is a recently introduced extension of classical linear temporal logic (LTL) with standpoint modalities. Intuitively, these modalities allow to express that, from agent $a$'s standpoint, it is conceivable that a given formula holds. 
	Besides the standard interpretation of the standpoint modalities  we introduce four  new semantics, which differ in the information an agent can extract from the  history. We provide a general model checking algorithm applicable to $\StpLTL$ under any of the five semantics. Furthermore we analyze the computational complexity of the corresponding model checking problems, obtaining PSPACE-completeness in three cases, which stands in contrast to the known EXPSPACE-completeness of the $\StpLTL$ satisfiability problem.
\end{abstract}


\section{Introduction}

Automated reasoning about the dynamics of scenarios in which multiple agents with access to different information interact is a key problem
in artificial intelligence and  formal verification. Epistemic temporal logics  are prominent, expressive formalisms to specify properties of such scenarios (see, e.g., \cite{HalVar-STOC86,Halpern-Overview86,Reasoning-about-knowledge-MIT-2004,MC-LTLK-and-beyond-2024}).
The resulting algorithmic problems, however, often have  non-elementary  complexity or are even undecidable (see, e.g., \cite{vanderMeydenS1999,Dima2009,MC-LTLK-and-beyond-2024}).

Aiming to balance expressiveness and computational tractability, \cite{SL-FOIS21,phdGomez} defines
static \emph{standpoint logics} that extend propositional logic with modalities
$\StMod{a}{\varphi}$ expressing that ``according to agent $a$, it is conceivable that $\varphi$'' and the dual modalities
$\DualStMod{a}{\varphi}$ expressing that ``according to $a$, it is unequivocal that $\varphi$''.
Standpoint logics and their extensions have proven useful to, e.g., reason about inconsistent  formalizations of concepts 
in the medical domain to align different ontologies and in a forestry application, where different sources disagree about the global extent of forests \mbox{\cite{SL-FOIS21,phdGomez,Gomez2022}}.

Recently introduced combinations of linear temporal logic (LTL) with standpoint modalities
\cite{SLTL-KR23,Complexity-SLTL-ECAI24} enables reasoning about dynamical aspects of multi-agent systems.
The focus of \cite{SLTL-KR23,Complexity-SLTL-ECAI24}  
is the satisfiability problem for the resulting  \emph{standpoint LTL} ($\StpLTL$).
In this paper, we consider the model-checking problem that asks whether all executions of a transition system satisfy a given $\StpLTL$-formula. To the best of our knowledge, this problem has not been addressed in the literature.

Whether the formula $\StMod{a}{\varphi}$ holds  after some finite history, i.e., whether it is plausible for agent $a$ that property $\varphi$ holds on the future execution, depends on $a$'s standpoint
 as well as what was observable to $a$ from the history. 
 To illustrate this, consider a situation in which different political agents have different perceptions of how actions taken by the state influence future developments.
 These perceptions are the standpoints of the agents. After a series of events, i.e., a history, the agents deem different future developments possible according to their standpoint
 as well as the aspects of the history that they are actually aware of.  When reasoning about each other's standpoints, there are  different ways in which information might be exchanged between the agents.
 For example, in a discussion agents might learn about aspects of the history they were not aware of; or they might completely ignore what other agents have observed in the past.
 
 Besides the semantics for $\StpLTL$ proposed in \cite{SLTL-KR23,Complexity-SLTL-ECAI24}, we introduce four additional semantics that differ in the amount of information agent $a$ can access from the history and how information is transferred between agents.
  To formalize these semantics, we follow  a natural
approach   for standpoint logics that
uses separate
transition systems $\cT_a$ describing the executions that are consistent with $a$'s standpoint  as in 
\mbox{\cite{SLTL-KR23,Complexity-SLTL-ECAI24}}, together with a main transition system $\cT$ modeling the actual system.
Unlike \cite{SLTL-KR23,Complexity-SLTL-ECAI24}, 
we assume the labels of the states in $\cT_a$ to be from 
a subset $P_a$ of the set $P$ of atomic propositions of $\cT$.
The difference
to the  use of indistinguishablity
relations $\sim_a$ on states of a single transition system for all agents $a \in \Ag$, common in epistemic temporal logics,
 is mostly
of syntactic nature as we will show by translations of families of 
indistinguishablity
relations $(\sim_a)_{a \in \Ag}$ on a  transition system to 
separate transition systems $(\cT_a)_{a \in \Ag}$,  and vice versa.

Under all five semantics, the intuitive meaning of $\StMod{a}{\varphi}$ is that
there is a state $s$ in $\cT_a$, which is one of the 
potential current states from agent $a$'s view of the history, 
and a path $\pi$ in $\cT_a$ from $s$ such that $\pi$ satisfies $\varphi$
when agent $a$ makes nondeterministic guesses for truth values of 
the atomic propositions outside $P_a$. Informally, the different semantics are as follows:

\noindent
  -- The \emph{step semantics} $\stepmodels$   agrees with the 
  semantics  proposed 
  in \cite{SLTL-KR23,Complexity-SLTL-ECAI24}. 
  It assumes that only
  the number of steps  performed in the past are accessible
  to the agents.

\noindent
  -- The \emph{pure observation-based semantics} $\probsmodels$ 
  is in the spirit of
  the perfect-recall LTLK (LTL extended with knowledge operators) semantics  of \cite{MC-LTLK-and-beyond-2024} 
  where  $a$ can access 
  exactly the truth values of
  the atomic propositions in $P_a$ from the history.

\noindent
  -- The \emph{public-history semantics} $\publicmodels$ can be seen
  as a perfect-recall variant of the LTLK semantics 
  where all agents have
  full access to the history.

\noindent
  -- The \emph{decremental semantics} $\decrmodels{}$ is a variant of
  $\probsmodels$ where 
  standpoint subformulas $\StMod{b}{\psi}$ 
  of a  formula $\StMod{a}{\varphi}$ 
  are interpreted from the view of agent $a$, when $a$ knows the transitions
  system of $b$, but can access only 
  the atomic propositions
  in $P_a \cap P_b$ to guess what agent $b$ knows from the history.

\noindent
  -- The \emph{incremental semantics} $\incrmodels{}$ is as $\decrmodels{}$, but under the assumption that standpoint subformulas $\StMod{b}{\psi}$ 
  of a standpoint formula $\StMod{a}{\varphi}$ 
  are interpreted from the view of the coalition $\{a,b\}$, i.e.,
  that $a$ can access
  the atomic propositions
  in $P_a \cup P_b$ to guess what agent $b$ knows from the history.

The decremental and incremental semantics share ideas of distributed knowledge and the ``everybody knows'' operator of epistemic logics \cite{Reasoning-about-knowledge-MIT-2004}.

\paragraph*{Main contributions.} Besides introducing the four new semantics for $\StpLTL$ (Section \ref{sec:logic}), our main contributions are 

\noindent
--  a generic model-checking algorithm that is applicable for all five semantics
  (Section \ref{sec:model-checking})

\noindent
-- complexity-theoretic results for the model checking problem of $\StpLTL$ under the different semantics (Section \ref{sec:complexity}).
More precisely we show
PSPACE-completeness for full $\StpLTL$ under $\stepmodels$ and $\publicmodels$, and for $\StpLTL$ formulas of alternation depth 1 under $\pobsmodels$, $\decrmodels{}$ and $\incrmodels{}$. This stands in contrast to the EXPSPACE-completeness of the satisfiability problem for SLTL under the step semantics \cite{Complexity-SLTL-ECAI24}.
Furthermore, our results yield an EXPTIME upper bound for $\decrmodels{}$. The same holds for $\pobsmodels$ under the additional assumption that the $P_a$'s are pairwise disjoint.
We show that $\StpLTL$ under all five semantics can be embedded into LTLK. For the case of $\incrmodels{}$, the embedding yields an $(N{-}1)$-EXPSPACE upper bound where $N=|\Ag|$. For the case of $\pobsmodels$ and the $\StpLTL$ fragment of alternation depth at most $d$, the embedding into LTLK implies 
$(d{-}1)$-EXPSPACE membership.

While our algorithm relies on similar ideas as the
LTLK model-checking algorithm as in \cite{MC-LTLK-and-beyond-2024} 
(even for the richer logic CTL*K), it
 exploits the simpler nature of
$\StpLTL$ compared to LTLK and generates smaller
history-automata than those that would have been constructed when
applying iteratively 
the powerset constructions of \cite{MC-LTLK-and-beyond-2024}.
As such, our algorithm can be seen as an adaption of 
\cite{MC-LTLK-and-beyond-2024} that takes a more fine-grained approach for the different $\StpLTL$ semantics resulting in the 
different complexity bounds described above.


Omitted proofs and details can be found in the appendix.


\section{Preliminaries}

\label{sec:prelim}

Throughout the paper, we assume some familiarity with
linear temporal logic interpreted over transition systems
and automata-based model checking, see e.g.~\cite{CGKPV18,BK08}.

\paragraph*{Notations for strings.}
Given an alphabet $\Sigma$, 
we write $\Sigma^*$ for the set of finite strings over $\Sigma$,
$\Sigma^{\omega}$ for the set of infinite strings over $\Sigma$
and $\Sigma^{\infty}$ for $\Sigma^* \cup \Sigma^{\omega}$.
As usual, $\Sigma^+ = \Sigma^* \setminus \{\varepsilon\}$ where
$\varepsilon$ denotes the empty string.
Given a (in)finite string 
$\varsigma = H_0 \, H_1 \ldots H_n$ or
$\varsigma = H_0 \, H_1 \, \ldots$ over $\Sigma$,
let $\first(\varsigma)=H_0$. 
If $\varsigma = H_0 \ldots H_n$ is finite then 
$\last(\varsigma)=H_n$.
Given $i, j \in \Nat$ we write $\varsigma[i \ldots j]$ 
for the substring $H_i \ldots H_j$ if $i \leqslant j$
(and assuming $j \leqslant n$ if $\varsigma$ is a finite string of length $n$)
and $\varsigma[i \ldots j]=\varepsilon$ if $i > j$.
If $i=j$ then $\varsigma[i \ldots i] = \varsigma[i] = H_i$.
So, $\prefix{\varsigma}{j}$ denotes the prefix $H_0 \ldots H_j$. If
 $\varsigma$ is infinite then $\suffix{\varsigma}{j} = H_j \, H_{j+1} \, H_{j+2} \ldots$.

If $\Sigma = 2^P$ is the powerset of $P$ and $R \subseteq P$ then the projection function 
$\proj{}{R} : (2^P)^{\infty} \to (2^R)^{\infty}$ 
is obtained by applying the projection 
$2^P \to 2^R$, $H \mapsto H \cap R$, 
elementwise, i.e., if $\varsigma = H_0 \, H_1 \, H_2 \ldots$ then
$\proj{\varsigma}{R} = 
   (H_0 \cap R) \, (H_1 \cap R) \, (H_2 \cap R) \ldots$.


\paragraph*{Transition systems.}
A transition system is a tuple $\cT = (S,\to,\Init,R,L)$ where
$S$ is a finite state space, ${\to} \, \subseteq \, S \times S$ a total transition
relation (where totality means that every state $s$ has at least one outgoing transition $s \to s'$), $\Init \subseteq S$ the set of initial states, $R$ a finite set of atomic propositions and $L : S \to 2^R$ the labeling function.
If $\Init$ is a singeleton, say $\Init = \{\init\}$, we simply write
$\cT = (S,\to,\init,R,L)$.

A path in $\cT$ is a (in)finite string $\pi = s_0 \, s_1 \ldots s_n \in S^+$ or $\pi = s_0 \, s_1 \, s_2 \ldots \in S^{\omega}$ such that $s_i \to s_{i+1}$ for all $i$. $\pi$ is initial if $\first(\pi)\in \Init$. 
The trace of $\pi$ is
$\trace(\pi) = L(s_0) \, L(s_1) \ldots \in (2^R)^+ \cup (2^R)^{\omega}$.
If $s \in S$ then 
$\Paths(\cT,s)$ denotes the set of infinite paths in $\cT$ starting in $s$ and
$\Traces(\cT,s)= \{\trace(\pi) : \pi \in \Paths(\cT,s) \}$.
If $P$ is a superset of $R$ then
\begin{center}
  $\Traces^P(\cT,s) =
     \bigl\{  \rho \in  \bigl(2^{P}\bigr)^{\omega} : 
         \proj{\rho}{R} \in \Traces(\cT,s)  \bigr\}$.
\end{center}
Thus, $\Traces(\cT,s)\subseteq (2^R)^{\omega}$, while
$\Traces^P(\cT,s)\subseteq (2^P)^{\omega}$. 
Moreover,
$\Paths(\cT) = \bigcup_{s\in \Init} \Paths(\cT,s)$.
$\Traces(\cT)$ and $\Traces^P(\cT)$ have the analogous meaning.
If $h \in (2^P)^+$ then $\Reach(\cT,h)$ denotes the
set of states $s$ in $\cT$ that are reachable from $\Init$ via a path $\pi$ 
with $\trace(\pi)=\proj{h}{R}$.


\paragraph*{Linear temporal logic (LTL).}

The syntax of LTL over $P$ is given by (where $p \in P$): 
\begin{center}
 $\begin{array}{lclcr}
  \varphi & ::= & 
  \true \ \ \big| \ \ p \ \ \big| \ \ \neg \varphi \ \ \big| \ \ \varphi_1 \wedge \varphi_2
              \ \ \big| \ \neXt \varphi \ \ \big| \ \ \varphi_1 \Until \varphi_2 
 \end{array}$
\end{center}
Other Boolean connectives are derived as usual, e.g.,
$\varphi_1 \vee \varphi_2 = \neg (\neg \varphi_1 \wedge \neg \varphi_2)$.
The modalities $\Diamond$ (eventually) and $\Box$ (always) are defined by
$\Diamond \varphi = \true \Until \varphi$ and
$\Box \varphi = \neg \Diamond \neg \varphi$.
The standard LTL semantics is formalized by a satisfaction relation
$\LTLmodels$ where formulas are interpreted over infinite traces 
(i.e., elements of $(2^P)^{\omega}$), see e.g. \cite{CGKPV18,BK08}.
An equivalent semantics can be provided
using a satisfaction relation $\models$
that interprets
formulas over trace-position pairs 
$(\rho,n) \in (2^P)^{\omega} \times \Nat$ such that
$(\rho,n) \models \varphi$ iff $\suffix{\rho}{n} \LTLmodels \varphi$.

We use here another equivalent formalisation of the semantics of LTL 
(and later its extension $\StpLTL$) that
interpretes formulas over \emph{future-history pairs} 
$(f,h) \in (2^P)^{\omega}\times (2^P)^+$ with $\last(h)=\first(f)$, see the upper part of Figure \ref{fig:semantics}
where $f[1\ldots 0]=\varepsilon$.
Then, $f \LTLmodels \varphi$ iff $(f, \first(f)) \models \varphi$.
\begin{figure*}[ht]
\begin{center}
  \begin{tabular}{l}
   \begin{tabular}{ll}
   \begin{tabular}{lcl}
     $(f,h) \models p$ & iff & $p \in f[0]$ 
     \\[1ex]

     $(f,h) \models \varphi_1 \wedge \varphi_2$ & iff &
     $(f,h) \models \varphi_1$ and $(f,h)\models \varphi_2$ 
     \\[1ex]
    \end{tabular}
    & \hspace*{0.5cm}
    \begin{tabular}{lcl}
     $(f,h) \models \neg \varphi$ & iff &
     $(f,h) \not\models \varphi$
     \\[1ex]

     $(f,h) \models \neXt \varphi$ & iff &
     $(\suffix{f}{1},h f[1]) \models \varphi$
     \\[1ex]
    \end{tabular}
   \end{tabular}
    \\
 
   \begin{tabular}{l}
    \begin{tabular}{lcl}
     $(f,h) \models \varphi_1 \Until \varphi_2$ & iff &
     there exists $\ell \in \Nat$ such that
     $(\suffix{f}{\ell}, h f[1\ldots \ell]) \models \varphi_2$
     and
     $(\suffix{f}{j}, h f[1\ldots j]) \models \varphi_1$
     for all $j < \ell$
     \\[1ex]

     $(f,h) \models \StMod{a}{\varphi}$ & iff &
     there exists $h' \in (2^{P})^+$, 
     $t\in \Reach(\cT_a,h')$ and
     $f'\in \Traces^P(\cT_a,t)$ such that
     $\last(h')=\first(f')$, \\
     & &
     $\obs_a(h)=\obs_a(h')$ and $(f',h') \models \varphi$ 
   \end{tabular}
   \end{tabular}
  \end{tabular}
\end{center}
\vspace{-10pt}
\caption{Satisfaction relation $\models$ for $\StpLTL$
    over future-history pairs $(f,h)\in (2^P)^{\omega}\times (2^P)^+$
    with $\last(h)=\first(f)$.
    \vspace{-10pt}}
\label{fig:semantics}
\end{figure*}
For interpreting LTL formulas, the history is irrelevant: $f \LTLmodels \varphi$ iff $(f, \first(f)) \models \varphi$
iff $(f,h) \models \varphi$ for some $h$ with
$\last(h)=\first(f)$
iff $(f,h) \models \varphi$ for all $h$ with
$\last(h)=\first(f)$.

If $\cT=(S,\to,\Init,R,L)$ is a transition system with $R \subseteq P$ 
and $\varphi$ an LTL formula over $P$ then
$\cT \LTLmodels \varphi $ iff $f \LTLmodels \varphi$ 
for each $f \in \Traces^P(\cT)$.
$\Sat_{\cT}(\exists \varphi)$ denotes the set of states $s\in S$ where
$\{ f \in \Traces^P(\cT,s) : f \LTLmodels \varphi\} \not= \varnothing$.


\section{$\StpLTL$: LTL with standpoint modalities}

\label{sec:logic}

Standpoint LTL ($\StpLTL$)
extends LTL by standpoint modalities $\StMod{a}{\varphi}$ 
where $a$ is an agent and $\varphi$ a formula.

\subsection{Syntax}

Given a finite set $P$ of atomic propositions and
a finite set $\Ag$ of agents, say $\Ag= \{a, b, \ldots\}$,
the syntax of $\StpLTL$ formulas over $P$ and $\Ag$ for $p \in P$ and $a \in \Ag$ is given by
\begin{center}
 $\begin{array}{lclcr}
  \varphi & ::= & 
  \true  \ \big|  \ p \ \big| \ \neg \varphi  \ \big| \ \varphi_1 \wedge \varphi_2
  \ \big| \ \neXt \varphi  \ \big|  \ \varphi_1 \Until \varphi_2  \ \big|  \ 
   \StMod{a}{\varphi}
 \end{array}$
\end{center}
The intuitive meaning of $\StMod{a}{\varphi}$
is that from agent $a$'s standpoint it is conceivable that $\varphi$ 
will hold, in the sense that there are indications from $a$'s view 
that there is a path
starting in the current state that fulfills $\varphi$. 
The dual standpoint modality is defined by
$\DualStMod{a}{\varphi} \ = \ \neg \StMod{a}{\neg \varphi}$ and
has the intuitive meaning that from the standpoint of $a$,
$\varphi$ is unequivocal.
That is, under $a$'s view
all paths starting in the current state fulfill $\varphi$.

Formulas of the shape $\StMod{a}{\varphi}$ are 
called \emph{standpoint formulas}.
If $\varphi$ is a $\StpLTL$ formula, then \emph{maximal standpoint subformulas} of $\varphi$
are subformulas that have the form $\StMod{a}{\phi}$ 
and that are not in the scope
of another standpoint operator.
For example, 
$\varphi = (p \wedge \neXt \chi_1) \vee \psi$
with $\chi_1 = \StMod{a}{ (\StMod{b}{q} \Until r)}$ and
 $\psi =
     \DualStMod{c}{ \neXt (r \wedge \StMod{a}{p \Until \StMod{a}{q}})}$
has two maximal standpoint subformulas, namely
$\chi_1$ and
$\chi_2 = 
 \StMod{c}{ \neg \neXt (r \wedge \StMod{a}{p \Until \StMod{a}{q}})}$.

The \emph{alternation depth} $\ad(\varphi)$ of $\varphi$
is the maximal number of alternations between standpoint modalities
for different agents. 
For instance, 
$\ad(\StMod{a}{(p \wedge \neXt \DualStMod{b}{q})})=2$, while
$\ad(\StMod{a}{(p \wedge \neXt \DualStMod{a}{q})})= 
 \ad(\StMod{a}{p} \wedge \neXt \DualStMod{b}{q}) = 1$ 
if $a \not= b$.
The precise definition is provided in
Definition \ref{def:alternation-depth} in the appendix.
Each $\StpLTL$ formula over a singleton agent set $\Ag=\{a\}$ has
alternation depth at most 1. 
For $d \in \Nat$, let $\StpLTL_d$ denote the sublogic of $\StpLTL$ where all formulas $\varphi$ satisfy $\ad(\varphi) \leq d$. In particular, $\StpLTL_0$ is LTL.

\begin{remark}
\label{remark:sharpening}
{\rm
The original papers \cite{SLTL-KR23,Complexity-SLTL-ECAI24} 
on standpoint LTL have an additional 
type of formula $a \preceq b$ 
where $a, b \in \Ag$.
The intuitive semantics of $a \preceq b$ is that 
the standpoint of $a$ is sharper than that of $b$.
In our setting, where we are given
transition systems for $a$ and $b$, formulas $a \preceq b$ are
either true or false depending on whether the set of traces
of the transition system representing $a$'s view is contained in
the set of $b$'s traces or not. 
As the trace inclusion problem is PSPACE-complete \cite{KanSmo90}
like the LTL model-checking problem
\cite{SistlaClarke-JACM85},
the complexity results of Section \ref{sec:complexity} 
are not affected when adding sharpening statements $a \preceq b$ to $\StpLTL$.
  }
\end{remark}


\subsection{Semantics of $\StpLTL$}
\label{sec:semantics}

We will consider different semantics of the standpoint modality
that differ in what the agents can observe from the history.


\paragraph*{$\StpLTL$ structures.}
\label{sec:structures}
$\StpLTL$ structures are
tuples 
$\fT = \bigl(\cT_0,(\cT_a)_{a\in \Ag}\bigr)$ with $\cT_0 = (S_0,\to_0,\init_0,P_0,L_0)$ a 
transition system 
over the full set $P_0=P$ of atomic propositions, and
$\cT_a= (S_a,\to_a,\init_a,P_a,L_a)$ transition systems 
for the agents $a\in \Ag$ over some $P_a \subseteq P$.
For simplicity, we assume these transition systems to have
unique initial states.


\paragraph*{Semantics of the standpoint modalities.}

We consider five different semantics for
$\StMod{a}{\varphi}$, using satisfaction relations
$\stepmodels$ (step semantics as in \cite{SLTL-KR23,Complexity-SLTL-ECAI24}), 
$\probsmodels$ 
(pure observation-based semantics which is essentially the
perfect-recall partial information semantics for LTLK
\cite{MC-LTLK-and-beyond-2024} 
adapted for $\StpLTL$ structures),
$\publicmodels$ 
(public-history semantics which can be seen as a 
perfect-recall semantics where agents have full information about the
history),
$\decrmodels{Q}$ and $\incrmodels{Q}$
(variants of $\pobsmodels$ with decremental resp. incremental knowledge)
where $Q \subseteq P$.

We deal here with an interpretation over future-history pairs 
(see Section \ref{sec:prelim}).
Let $\models$ be one of the five satisfaction relations.
The semantics of $\StpLTL$
can be presented in a uniform manner
as shown in  Figure \ref{fig:semantics}.
For the dual standpoint operator we obtain:
$(f,h) \models \DualStMod{a}{\varphi}$ iff $(f',h') \models \varphi$ 
for all $h' \in (2^{P})^+$, $t\in \Reach(\cT_a,h')$ 
and $f'\in \Traces^P(\cT_a,t)$ with
$\last(h')=\first(f')$ and
$\obs_a(h)=\obs_a(h')$.

The five semantics rely on different observation functions
$\obs_a$. In all cases, 
$\obs_a$ is a projection $\obs_a: (2^P)^+ \to (2^{\Obsset_a})^+$, 
$\obs_a(h) = \proj{h}{\Obsset_a}$
for some $\Obsset_a \subseteq P$.
Intuitively, $\Obsset_a$ formalizes which of the propositions are visible to agent $a$
in the history. (For the decremental and
incremental semantics,
both $\Obsset_a$ and the induced observation function $\obs_a$
do not only depend on $a$, but on the context of the
formula $\StMod{a}{\varphi}$ as will be explained later.)
Before presenting the specific choices of $\Obsset_a$ in the five 
$\StpLTL$ semantics, we make some general observations:

\begin{lemma}
\label{future-ind-lemma}
  If $f_1, f_2\in (2^P)^{\omega}$ and $h \in (2^P)^+$ with 
  $\last(h)=\first(f_1)=\first(f_2)$ then
    $(f_1,h) \models \StMod{a}{\varphi}$  iff 
    $(f_2,h) \models \StMod{a}{\varphi}$. 
\end{lemma}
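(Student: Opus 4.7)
The plan is essentially a direct inspection of the semantic clause for $\StMod{a}{\varphi}$ in Figure \ref{fig:semantics}. Reading the right-hand side of that clause, the future-history pair $(f,h)$ enters only through $h$: one quantifies over a fresh finite string $h' \in (2^P)^+$, a state $t \in \Reach(\cT_a, h')$ and a trace $f' \in \Traces^P(\cT_a, t)$, and one then requires $\last(h') = \first(f')$, $\obs_a(h) = \obs_a(h')$, and $(f', h') \models \varphi$. The symbol $f$ never occurs in these three conditions.

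Hence I would proceed as follows. First I would fix an arbitrary $\StpLTL$ structure $\fT$, traces $f_1, f_2 \in (2^P)^\omega$ and a history $h \in (2^P)^+$ satisfying the compatibility condition $\last(h) = \first(f_1) = \first(f_2)$ (this is what makes both pairs $(f_1,h)$ and $(f_2,h)$ legal inputs to $\models$). Assuming without loss of generality that $(f_1, h) \models \StMod{a}{\varphi}$, I would unfold the semantic clause to obtain witnesses $h'$, $t$, $f'$ meeting the three conditions above. Since those conditions mention $h$ but not $f_1$, the same witnesses $h'$, $t$, $f'$ certify $(f_2, h) \models \StMod{a}{\varphi}$. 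The converse direction is symmetric.

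The only point where one has to be slightly careful is that the lemma is stated uniformly, while Section~\ref{sec:semantics} introduces five distinct satisfaction relations. Across all five, however, $\obs_a$ is defined as a projection $\obs_a(h) = \proj{h}{\Obsset_a}$ of the history onto some set $\Obsset_a \subseteq P$ (possibly depending on the enclosing context for the decremental/incremental semantics, but in any case not on $f$). Consequently, $\obs_a(h)$ is unaffected by the choice between $f_1$ and $f_2$, and the semantic clause in Figure~\ref{fig:semantics} applies uniformly, so the argument above goes through for each of $\stepmodels$, $\pobsmodels$, $\publicmodels$, $\decrmodels{Q}$ and $\incrmodels{Q}$.

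I do not expect a real obstacle here: this lemma is a ``futures are irrelevant'' observation whose content lies in the design of the semantics rather than in any nontrivial inductive argument. Its role is presumably to license later definitions that evaluate standpoint subformulas solely on the history, so the proof can be kept to essentially one paragraph.
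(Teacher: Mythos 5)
Your proof is correct and coincides with the paper's own (essentially omitted, one-line) argument: the semantic clause for $\StMod{a}{\varphi}$ in Figure~\ref{fig:semantics} quantifies only over $h'$, $t$ and $f'$ with conditions referring to $h$ alone, so any witnesses for $(f_1,h)$ serve verbatim for $(f_2,h)$, uniformly for all five choices of $\obs_a$. No further comment is needed.
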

Lemma \ref{future-ind-lemma} permits to drop the $f$-component 
and to write $(*,h)\models \StMod{a}{\varphi}$
when $(f,h)\models \StMod{a}{\varphi}$ for some (each) future
$f$ with $\first(f)=\last(h)$. 
The truth values of standpoint formulas $\StMod{a}{\varphi}$
only depend on agent $a$'s observation of the history:

\begin{lemma}
 \label{obs-a-lemma}
    If $h_1, h_2 \in (2^P)^+$ with $\obs_a(h_1)=\obs_a(h_2)$  
    then
    $(*,h_1) \models \StMod{a}{\varphi}$ iff 
    $(*,h_2) \models \StMod{a}{\varphi}$.
\end{lemma}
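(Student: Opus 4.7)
The plan is to read off the conclusion directly from the uniform semantic clause for $\StMod{a}{\varphi}$ in Figure~\ref{fig:semantics}. First I would invoke Lemma~\ref{future-ind-lemma} to justify the notation $(*,h) \models \StMod{a}{\varphi}$, so that we can ignore the future component $f$ entirely. Next I would inspect the defining clause
\begin{center}
$(f,h) \models \StMod{a}{\varphi}$ iff there exist $h'\in (2^P)^+$, $t\in \Reach(\cT_a,h')$, $f'\in \Traces^P(\cT_a,t)$ with $\last(h')=\first(f')$, $\obs_a(h)=\obs_a(h')$, and $(f',h')\models\varphi$,
\end{center}
and observe that the only occurrence of $h$ on the right-hand side is in the side condition $\obs_a(h)=\obs_a(h')$. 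Under the hypothesis $\obs_a(h_1)=\obs_a(h_2)$, the set of witnesses
\[
W \;=\; \{\, h' \in (2^P)^+ : \obs_a(h')=\obs_a(h_1) \,\}
\]
is literally identical to the corresponding set for $h_2$, so the existential statements defining $(*,h_1)\models \StMod{a}{\varphi}$ and $(*,h_2)\models \StMod{a}{\varphi}$ range over exactly the same $(h',t,f')$ triples and the same associated truth condition $(f',h')\models\varphi$. Hence the two sides are equivalent.

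The only mildly delicate point is that under the decremental and incremental semantics the observation function $\obs_a$ is context-dependent, so in principle ``the same $\obs_a$'' could mean different things in different places. This is not a real obstacle: the lemma compares the truth values of one and the same formula $\StMod{a}{\varphi}$ at the two histories, so the enclosing standpoint context is identical on both sides, and consequently $\obs_a$ (both in the hypothesis $\obs_a(h_1)=\obs_a(h_2)$ and in the side condition appearing in the semantics) refers to the same function throughout. With this observation the argument goes through uniformly for all five semantics, and I expect no technical obstacle beyond unfolding definitions.
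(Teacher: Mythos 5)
Your proposal is correct and uses essentially the same argument as the paper: unfold the semantic clause for $\StMod{a}{\varphi}$ and note that the history $h$ enters only through the condition $\obs_a(h)=\obs_a(h')$, so any witness triple $(h',t,f')$ for $h_1$ also works for $h_2$ (the paper phrases this as transferring a single witness by symmetry, you phrase it as identity of witness sets). Your extra remark on the fixed context resolving the parameter $Q$ in the decremental/incremental case is a sound clarification that the paper leaves implicit.
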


\begin{remark}
\label{remark:standard-SLTL-semantics}
  {\rm
$\StpLTL$ 
structures defined in \cite{SLTL-KR23,Complexity-SLTL-ECAI24}
assign to each agent $a$ a nonempty subset $\lambda(a) \subseteq (2^P)^{\omega}$
and assume a universal agent, called *, such that 
$\lambda(a)\subseteq \lambda(*)$ for all other agents $a$.
The latter is irrelevant for our purposes.
Assuming transition system representations for the sets $\lambda(a)$
is natural for the model checking problem.
In contrast to \cite{SLTL-KR23,Complexity-SLTL-ECAI24}, 
we suppose here that the 
standpoint transition systems $\cT_a$ are defined over some $P_a \subseteq P$, which appears more natural for defining the information that an agent can extract
from the history.
In the semantics of $\StMod{a}{\varphi}$, we thus switch from
$\Traces(\cT_a,t) \subseteq (2^{P_a})^{\omega}$ to
$\Traces^P(\cT_a,t) \subseteq (2^{P})^{\omega}$ which essentially means
that $a$ may guess the truth values of the atomic propositions
in $P \setminus P_a$ to predict whether $\varphi$ can hold in the future.
Alternatively, one could define $\StpLTL$ structures
as tuples $(\cT_0, (\cT_a,P_a)_{a \in \Ag})$ where
$\cT_0$ is as before, the $\cT_a$'s are transition systems 
over some $R_a \subseteq P$, and $P_a \subseteq R_a$ where
the $P_a$ serves to define the functions
$\obs_a$ for the histories.
With $R_a=P$ 
and $\lambda(a)=\Traces(\cT_a)$,
$\StpLTL$ under $\stepmodels$ agrees with the logic 
considered 
in \cite{SLTL-KR23,Complexity-SLTL-ECAI24} (except for sharpening
statements; see Remark \ref{remark:sharpening}).
Our model checking algorithm can easily be
adapted for this more general type of $\StpLTL$ structures 
without affecting our complexity results.
   }
\end{remark}


\paragraph*{Step semantics:}
The semantics of the standpoint modality 
$\StMod{a}{\varphi}$ 
introduced in 
\cite{SLTL-KR23,Complexity-SLTL-ECAI24} 
relies on the assumption that the only information that the agents can extract
from the history is the number
of steps that have been performed in the past.
They formulate the semantics in terms of trace-position pairs 
$(\rho,n)\in (2^P)^{\omega}\times \Nat$ and define
$(\rho,n) \models \StMod{a}{\varphi}$ iff $(\rho',n)\models \varphi$
for some $\rho'\in \Traces(\cT_a)$.
Reformulated to our setting,
the set of propositions visible for agent $a$ in the history
is $\Obsset_a=\varnothing$, which yields the 
observation function $\obs_a : (2^P)^+ \to \Nat$, $\obs_a(h)=|h|$.
An equivalent formulation for future-history pairs is:
  $(f,h) \stepmodels \StMod{a}{\varphi}$ iff
     there exist a word $h' \in (2^{P})^+$,
     a state 
     $t\in \Reach(\cT_a,h')$ 
     and a trace
     $f'\in \Traces^P(\cT_a,t)$ such that 
     $\last(h')=\first(f')$,
     $|h|=|h'|$ and $(f',h') \stepmodels \varphi$.


\paragraph*{Pure observation-based semantics:}
In the style of the (dual of the) classical K-modality in LTLK \cite{HalVar-STOC86,HalVar-JCSS89,MC-LTLK-and-beyond-2024} (cf. Section \ref{sec:embedding-StpLTL-into-LTLK}) we can deal with
the observation function that projects the given history $h$ to the
observations that agent $a$ can make when exactly the propositions in $P_a$
are visible for $a$. That is, for the pure observation-based semantics,  $\Obsset_a=P_a$. Then,
  $(f,h) \probsmodels \StMod{a}{\varphi}$ iff 
     there exist a word $h' \in (2^{P})^+$,
     a state 
     $t\in \Reach(\cT_a,h')$ 
     and a trace
     $f'\in \Traces^P(\cT_a,t)$ such that 
     $\last(h')=\first(f')$,
     $\proj{h}{P_a}=\proj{h'}{P_a}$ and $(f',h') \probsmodels \varphi$.


\paragraph*{Public-history semantics:}
In the public-history semantics all agents $a$ have full information
about the history $h$, i.e., $\Obsset_a=P$ 
and 
$\obs_a(h)=h$. Then,
  $(f,h) \publicmodels \StMod{a}{\varphi}$ iff 
     there exist 
     a state $t\in \Reach(\cT_a,h)$ 
     and a trace
     $f'\in \Traces^P(\cT_a,t)$ such that 
     $\last(h)=\first(f')$ and
     $(f',h) \publicmodels \varphi$.


\paragraph*{Decremental semantics:}
\label{semantics-decremental-obs-based}
The decremental semantics is a variant of the pure observation-based 
semantics with a different meaning of nested standpoint subformulas:
A standpoint formula $\StMod{a}{\varphi}$ 
interpretes
maximal standpoint subformulas $\StMod{b}{\psi}$ of $\varphi$ from the view of
agent $a$. The assumption is that $a$ 
knows agent $b$'s transition system $\cT_b$ and thus
can make the same guesses for the future as $b$, but can access
only the truth values of 
joint atomic propositions in $P_a' \cap P_b$ from the history to make a guess
what $b$ has observed in the past.
Here, $P_a' \subseteq P_a$ is the set of atomic propositions accessible
when interpreting $\StMod{a}{\varphi}$ 
which can itself be a subformula
of a larger standpoint formula $\StMod{c}{\chi}$ 
(in which case $\StMod{a}{\varphi}$ is interpreted from the view of agent $c$
and
$P_a' \subseteq P_c \cap P_a$).
Formally, we use a parametric satisfaction relation
$\decrmodels{Q}$ for $Q \subseteq P$.
The intuitive meaning of $\StMod{a}{\varphi}$ under $\decrmodels{Q}$ is that  $a$ can extract from the history 
$h$ exactly the truth values of the propositions in 
$\Obsset_a^Q= Q \cap P_a$. So,
$(f,h) \decrmodels{Q} \StMod{a}{\varphi}$ iff
     there exist a word $h' \in (2^{P})^+$, 
     a state 
     $t\in \Reach(\cT_a,h')$ 
     and a trace
     $f'\in \Traces^P(\cT_a,t)$ such that 
     $\last(h')=\first(f')$,
     $\proj{h}{Q \cap P_a}= \proj{h'}{Q \cap P_a}$
     and $(f',h') \decrmodels{Q \cap P_a} \varphi$.
For the satisfaction over a $\StpLTL$ structure, we start with $Q=P$.


\paragraph*{Incremental semantics:}
The incremental semantics also relies on a parametric satisfaction relation
$\incrmodels{Q}$ where $Q \subseteq P$.
The intuitive meaning of 
$\StMod{a}{\varphi}$ 
under $\incrmodels{Q}$ 
is 
that agent $a$ can extract from the history $h$ exactly the propositions
in $\Obsset_a^Q= Q \cup P_a$ and interpretes $\varphi$
over $Q \cup P_a$.
Thus, when interpreting nested standpoint subformulas
$\StMod{b}{\psi}$ of $\StMod{a}{\varphi}$  then agents $a$ and $b$ build a
coalition to extract the information from the history.
Formally, $(f,h) \incrmodels{Q} \StMod{a}{\varphi}$ iff
     there exist a word $h' \in (2^{P})^+$, 
     a state 
     $t\in \Reach(\cT_a,h')$ 
     and 
     a trace
     $f'\in \Traces^P(\cT_a,t)$ such that 
     $\last(h')=\first(f')$,
     $\proj{h}{Q \cup P_a}= \proj{h'}{Q \cup P_a}$
     and $(f',h') \incrmodels{Q \cup P_a} \varphi$.
For the satisfaction over a $\StpLTL$ structure, we start with $Q=\varnothing$.

\begin{example}
	\normalfont
Let us  illustrate  the differences between the semantics:
Consider two agents $a$ and $b$ with $P_a=\{p,r\}$, $P_b=\{q,r\}$, and the following standpoint transition systems:
\begin{center}
\begin{tikzpicture}[->,>=stealth',shorten >=1pt,auto, semithick]
				\tikzstyle{every state} = [text = black]
				
				\node[state,scale=0.6] (lroot1) [inner sep = 0pt, fill = yellow!50] {$\{p\}$};
				\node[state, scale=0.6] (l1) [below of = lroot1, node distance = 1.35cm, fill = magenta!50] {$\{r\}$};
				\node[scale = 0.6] (linit1) [left of = lroot1, node distance = 1.2cm] {};
				
				\node[state,scale=0.6] (lroot2) [inner sep = 0pt, right of = lroot1, node distance = 2.5cm] {$\emptyset$};
				\node[state, scale=0.6] (l2) [below of = lroot2, node distance = 1.35cm] {$\emptyset$};
				\node[scale = 0.6] (linit2) [left of = lroot2, node distance = 1.2cm] {};
				
				\path 
					(linit1) edge (lroot1)
					(lroot1) edge (l1)
					(l1) edge [loop left] (l1)
					(linit2) edge (lroot2)
					(lroot2) edge (l2)
					(l2) edge [loop right] (l2)
				
				;

				\node[state,scale=0.6] (rroot1) [inner sep = 0pt, fill = cyan!50, right of = lroot2, node distance = 5cm] {$\{q\}$};
				\node[state, scale=0.6] (r1) [below of = rroot1, node distance = 1.35cm] {$\emptyset$};
				\node[scale = 0.6] (rinit1) [left of = rroot1, node distance = 1.2cm] {};
				
				\node[state,scale=0.6] (rroot2) [inner sep = 0pt, right of = rroot1, node distance = 2.5cm] {$\emptyset$};
				\node[state, scale=0.6] (r2) [below of = rroot2, node distance = 1.35cm, fill = magenta!50] {$\{r\}$};
				\node[scale = 0.6] (rinit2) [left of = rroot2, node distance = 1.2cm] {};
				
				\path 
				(rinit1) edge (rroot1)
				(rroot1) edge (r1)
				(r1) edge [loop left] (r1)
				(rinit2) edge (rroot2)
				(rroot2) edge (r2)
				(r2) edge [loop right] (r2)
				
				;
				
				\node [left of = linit1, node distance = 0.35cm, scale = 0.8] {$\cT_{a}$};
				\node [left of = rinit1, node distance = 0.35cm, scale = 0.8] {$\cT_{b}$};
				
			\end{tikzpicture}
\end{center}
We consider  formulas $\phi=\StMod{a}{\neXt \StMod{b}{x}}$ with $x\in P = P_a \cup P_b$.
As these formulas contain only the existential $\StMod{}{}$ modality, satisfaction becomes  easier if less restrictive  requirements are imposed  on histories.
So, for all $\rho=(f,h)$ and all $x$ as above:
$
\rho \publicmodels \phi \Rightarrow \rho \incrmodels{} \phi \Rightarrow \rho \probsmodels \phi \Rightarrow \rho \decrmodels{} \phi \Rightarrow \rho \stepmodels \phi$.

Now, we show that the reverse implications do not hold. Let $\rho=(f,h)$ with $h=\{p,q\}$ and $f=\{p,q\}\{r\}^\omega$.
For $\phi=\StMod{a}{\neXt \StMod{b}{r}}$, we have
$\rho \not\publicmodels \phi $: In the quantification of $\StMod{a}$, all possible future traces $f'$ have to start with $\{p,q\}$. So, according to $\cT_a$, the future has to contain $r$ in the second step.
When evaluating $\StMod{b}$, these first two steps have to be respected, i.e., the history under consideration is $h'=\{p,q\} X$ where $X$ contains $r$. But there is no trace in $\cT_b$ with $q$ in the first and $r$ in the second step.
However, $\rho \incrmodels{} \phi $ as we can choose any trace $\{p\}\{r\}...$ as the future trace in the quantification of $\StMod{a}$. When projecting to $P_b$ we obtain a trace in $\cT_b$ satisfying $r$ at the second step.

Next, consider $\phi=\StMod{a}{\neXt \StMod{b}{p}}$ and $\rho$ as above.
Then, $\rho \not\incrmodels{} \phi $ as the future trace $f'$ quantified in $ \StMod{a}$ cannot have $p$ in the second step because there is no such trace in $\cT_a$. In the incremental semantics, the history chosen for $\StMod{b}$ 
has to agree with $f'$  on $p$ in the second step and hence can also not contain $p$ there.
In contrast, $\rho \probsmodels \phi $ as here agent $b$ can ``choose''   the truth value of $p$ arbitrarily as $p\not \in P_b$.

With similar reasoning, $\probsmodels$ and $\decrmodels{}$ can be distinguished by the formula $\phi=\StMod{a}{\neXt \StMod{b}{(q\land r)}}$ and $\rho$ as above:
$\rho \not \probsmodels \phi$ as $q$ is  false on any state reachable in $\cT_b$ after one step. So, no matter which history $h'$ and future $f'$ is chosen in the quantification of
$ \StMod{a}$, there is no  future-history pair $(f'',h'')$ that agrees with a trace in $\cT_b$ on all atomic propositions in $P_b$ and is such that $q$ holds at the second position. 
In contrast, $\rho \decrmodels{} \phi$: We can choose $h'=\{p\}$ and $f'=\{p\}\{r\}^\omega$ for the quantification of $ \StMod{a}$ as $h'$ agrees with $h=\{p,q\}$ on all atomic propositions in $P_a$. Then, the quantification of $ \StMod{b}$ in $\phi$ can choose $h''=\emptyset \{r,q\}^\omega$ and $f''=\{r,q\}^\omega$ as this agrees with the trace of the right-hand side path in $\cT_b$ for all atomic propositions in $P_a\cap P_b = \{r\}$.

Finally, for $\decrmodels{}$ and  $\stepmodels$ consider  $\phi=\StMod{a}{\neXt \StMod{b}{\neg r}}$ and $\rho$ as above.
Under $\decrmodels{}$, in the first quantification of $ \StMod{a}$, the history $h'$ has to contain $p$ and potentially $q$ and so the future $f'$ has to contain $r$ in the second step as only the left initial state in $\cT_a$ is possible. So, the history $h''$ quantified in $ \StMod{b}$ also has to contain $r$ in the second step as $r\in P_a\cap P_b$ and so $\phi$ cannot hold, i.e., $\rho \not \decrmodels{} \phi$.
In the step semantics, however, the quantification for $\StMod{b}$ can simply choose history  $h''=\{q\}\emptyset$ and future  $f''=\emptyset^\omega$ and so $\rho \stepmodels \phi$. 
\end{example}


\paragraph*{Satisfaction of SLTL formulas over structures.}
Given a structure $\mathfrak{T}= \bigl(\cT_0,(\cT_a)_{a\in \Ag}\bigr)$
as in Section \ref{sec:structures}, an $\StpLTL$ formula $\phi$
and 
${\models}  \in \{{\stepmodels}, 
  {\probsmodels}, {\publicmodels}, {\decrmodels{}}, {\incrmodels{}} \}$:
\begin{center}
  $\mathfrak{T}\models \phi$
  \quad iff \quad
  $(f,\first(f)) \models_* \phi$   
  for all $f \in \Traces(\cT_0)$
\end{center}
where 
$\models_*$ equals $\models$ for the step, pure observation-based
and public-history semantics,
$\models_*$ equals $\decrmodels{P}$ for the decremental semantics
and $\models_*$ equals $\incrmodels{\varnothing}$ 
for the incremental semantics.


\begin{remark}
\label{remark:context-dependency-syntax-tree}
{\rm
The meanings of $\StMod{a}{\varphi}$ under 
$\decrmodels{Q}$ and $\incrmodels{Q}$ are context-dependent through
the parameter $Q$.
If, e.g.,  
$\phi = \StMod{a}{\varphi} \vee \StMod{b}{(q \wedge \neXt \StMod{a}{\varphi})}$ for $\varphi$ an $\StpLTL$ formula
then the first occurrence of $\StMod{a}{\varphi}$ 
is interpreted over $\decrmodels{P}$, 
while the second is interpreted over $\decrmodels{P_b}$.
The set $Q = Q_{\chi}^*$ with $*\in \{\decr,\incr\}$ 
over which occurrences of subformulas $\chi$ 
of $\phi$ are interpreted
can be derived from the syntax tree of $\phi$.

Formally, we assign a set $Q_v^*$ with $*\in \{\decr,\incr\}$ 
to each node $v$ in the tree. Let $\phi_v$ denote the subformula represented by $v$.
If $v$ is the root node then $\phi_v=\phi$ and
$Q_{v}^{\tinydecr}=P$ while
$Q_{v}^{\tinyincr}=\varnothing$.
Let now $v$ and $w$ be nodes such that $v$ 
is the father of $w$.
If $v$ is not labeled by a standpoint modality then
$Q_v^*=Q_w^*$. Otherwise $\phi_v$ has the form $\StMod{a}{\phi_w}$
for some $a\in \Ag$
and $Q_w^{\tinydecr} = Q_v^{\tinydecr} \cap P_a$ while 
$Q_w^{\tinyincr} = Q_v^{\tinyincr} \cup P_a$.
With abuse of notations,
we will simply write $Q_{\chi}^*$ instead of $Q_v^*$ for a node $v$ 
representing the particular occurrence of $\chi = \phi_v$ in $\phi$.
  }
\end{remark}


Obviously, the five semantics agree on the LTL fragment.
More precisely,
for $\models$ as above,  
$\fT$ a $\StpLTL$ structure and $\phi$ an LTL formula
we have 
$\fT \models \phi$ iff $\cT_0 \LTLmodels \phi$ in the sense that
$\Traces(\cT_0) \subseteq \{ f \in (2^P)^{\omega} : f \LTLmodels \phi\}$. 
For $\StpLTL$ formulas
with at least one standpoint subformula, but no alternation
of standpoint modalities,
the pure observation-based, decremental
and incremental semantics agree,
but might yield different truth values than the step or public-history 
semantics:

\begin{lemma}
\label{lemma:pobs=decr=incr-for-alternation-depth-1}
If $\phi$ is a $\StpLTL$ formula of alternation depth $1$ then
$\fT \pobsmodels \phi$ 
iff $\fT \decrmodels{} \phi$ 
iff $\fT \incrmodels{} \phi$, while
$\fT \not\pobsmodels \phi$ 
and $\fT \models \phi$ 
(or vice versa)
is possible where ${\models} \in \{\stepmodels,\publicmodels\}$.
Likewise,
$\fT \not\publicmodels \phi$ 
and $\fT \stepmodels \phi$ (or vice versa) is possible.
\end{lemma}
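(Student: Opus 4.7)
The plan is to split the lemma into two independent halves: (a) the agreement $\pobsmodels = \decrmodels{} = \incrmodels{}$ on $\StpLTL_1$, proved by a structural induction whose engine is a combinatorial property of the parameters $Q^{\tinydecr}$ and $Q^{\tinyincr}$ along the syntax tree; and (b) three small $\StpLTL$ structures of alternation depth $1$ witnessing the claimed separations involving $\stepmodels$ and $\publicmodels$.

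For part (a), I would first establish the following fact on the syntax tree of any $\phi \in \StpLTL_1$: at every node $v$ labelled $\StMod{a}{\cdot}$ one has $Q_v^{\tinydecr} \cap P_a = P_a = Q_v^{\tinyincr} \cup P_a$. Since $\ad(\phi) = 1$, every standpoint ancestor of $v$ must also be $\StMod{a}$, as any other choice would produce an $a/b$-alternation. Propagating the recurrence of Remark \ref{remark:context-dependency-syntax-tree} downwards from the root, the $\tinydecr$-parameter at $v$ is either $P$ (no standpoint ancestor yet crossed) or $P_a$ (after the first $\StMod{a}$ is crossed, stable thereafter since $P_a \cap P_a = P_a$); symmetrically the $\tinyincr$-parameter is either $\varnothing$ or $P_a$. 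In every case the observation set at $v$ equals $P_a$, matching the $\Obsset_a$ used by $\pobsmodels$. Having this, a routine structural induction on subformulas $\chi$ of $\phi$ proves $(f,h) \pobsmodels \chi \Leftrightarrow (f,h) \decrmodels{Q_\chi^{\tinydecr}} \chi \Leftrightarrow (f,h) \incrmodels{Q_\chi^{\tinyincr}} \chi$: the Boolean and temporal cases are immediate because the three relations are defined uniformly, and the standpoint case $\StMod{a}{\psi}$ reduces to noting that the three existential clauses in Figure \ref{fig:semantics} range over the same set of witnesses $(h',t,f')$ (by the observation above) and that the evaluation of $\psi$ at $(f',h')$ falls under the induction hypothesis, with the new $Q$-parameters again satisfying the invariant.

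For part (b) I would exhibit three small single-agent structures. To separate $\stepmodels$ from both $\pobsmodels$ and $\publicmodels$, take $P = P_a = \{p\}$, let $\cT_a$ have an initial $\{p\}$-state with two successors, one labelled $\{p\}$ and self-looping and one labelled $\varnothing$ and self-looping, and let $\cT_0$ be a single $\{p\}$-self-loop. On $\neXt \DualStMod{a}{p}$, the observed prefix $\{p\}\{p\}$ under $\pobsmodels$ and $\publicmodels$ pins down the $\{p\}$-branch of $\cT_a$, so the formula holds; under $\stepmodels$ every length-$2$ history is admissible, and the $\{p\}\varnothing$-trace of $\cT_a$ witnesses failure. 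To separate $\pobsmodels$ from $\publicmodels$, take $P = \{p,q\}$, $P_a = \{p\}$, let $\cT_a$ be a single $\{p\}$-self-loop, and let $\cT_0$ be a single $\{p,q\}$-self-loop. On $\StMod{a}{\Box \neg q}$, the history $h' = \{p\}$ together with the future $\{p\}^\omega$ witnesses truth under $\pobsmodels$, while under $\publicmodels$ the fixed $\last(h) = \{p,q\}$ forces $q$ at position $0$ of every admissible future and the formula fails. The main obstacle is the stability property of $Q_v^{\tinydecr}$ and $Q_v^{\tinyincr}$; once it is isolated the induction of part (a) is straightforward and the counterexamples of part (b) are small ad-hoc constructions.
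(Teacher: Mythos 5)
Your proposal is correct and follows essentially the same route as the paper: the agreement of the three semantics on alternation-depth-1 formulas rests on exactly the paper's key observation that every standpoint subformula in the scope of some $\StMod{a}{}$ is evaluated with observation set $P_a$ under $\pobsmodels$, $\decrmodels{}$ and $\incrmodels{}$ (you derive this from the $Q$-parameter recurrence and then run the routine induction that the paper leaves implicit), and the separations are established by small ad-hoc single-agent counterexamples just as in the paper. Your two witnessing structures (you announce three but give two) differ in detail from the paper's even simpler ones, which use the formula $\StMod{a}{p}$ with mismatched initial-state labels, but they are correct and cover all three required separations.
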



\subsection{From pure observation-based $\StpLTL$ to LTLK}

\label{sec:embedding-StpLTL-into-LTLK}

LTLK (also called CKL$_m$) 
\cite{HalVar-STOC86,HalVar-JCSS89,MC-LTLK-and-beyond-2024} 
is an extension of LTL by an unary knowledge modality $K_a$ for 
every agent $a \in \Ag$ and a common knowledge operator $C_A$ for 
coalitions $A \subseteq \Ag$.
We drop the latter and deal with LTLK where the grammar 
for formulas is the same as for $\StpLTL$ when
$\StMod{a}{\varphi}$ is replaced with $K_a \varphi$.
The alternation depth of LTLK formulas and the sublogics LTLK$_d$ are defined as for $\StpLTL$.

LTLK structures are tuples 
$(\cT,(\sim_a)_{a\in \Ag})$ where 
$\cT = (S,\to,\init,R,L)$ is a transition system and the
$\sim_a$, $a \in \Ag$, are equivalence relations on $S$. The intended meaning
is that $s \sim_a t$ if agent $a$ cannot distinguish the states $s$ and $t$.

The perfect-recall semantics of LTLK extends
the standard LTL semantics formulated for path-position pairs $(\pi,n)$ 
consisting of a path $\pi = s_0 s_1 s_2 \ldots \in \Paths(\cT)$ 
and a position $n \in \Nat$
by $(\pi,n) \LTLKmodels K_a\varphi$ iff 
$(\pi',n) \LTLKmodels \varphi$ for all paths 
$\pi' = s_0' s_1' s_2' \ldots \in \Paths(\cT)$ with
$s_i \sim_a s_i'$ for $i=0,1,\ldots ,n$.
As such, the dual standpoint modality
$\DualStMod{a}{}$ under the pure observation-based semantics
resembles the $K_a$ modality of LTLK.
Indeed, $\StpLTL$ under $\probsmodels$ can be considered
as a special case of LTLK under the perfect-recall semantics:

\begin{lemma}
\label{lemma:embdding-pobs-in-LTLK}
\label{lemma:embedding-pobs-in-LTLK}
  Given a pair $(\fT,\phi)$ consisting of a $\StpLTL$ structure 
 $\fT = (\cT_0, (\cT_a)_{a\in \Ag})$ and a $\StpLTL$ formula $\phi$, 
 one can construct in polynomial time
 an LTLK structure $\ltlk{\fT} = (\cT',(\sim_a)_{a\in \Ag})$
 and an LTLK formula $\ltlk{\phi}$ such that:
\begin{enumerate}
\item [(1)] 
   $\fT \pobsmodels \phi$ if and only if 
   $\ltlk{\fT} \LTLKmodels \ltlk{\phi}$ 
\item [(2)] 
   $\phi$ and $\ltlk{\phi}$ have the same alternation depth.
\end{enumerate}
\end{lemma}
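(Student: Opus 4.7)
The plan is to construct the LTLK structure $\ltlk{\fT}$ as a disjoint-union combination of $\cT_0$ with augmented copies of each $\cT_a$, equip it with indistinguishability relations $\sim_a$ keyed on the $P_a$-projection of labels, and translate $\phi$ homomorphically by sending each $\StMod{a}{\varphi}$ to $\Kdiamond{a}(m_a \wedge \ltlk{\varphi})$, where $m_a$ is a fresh marker guarding the $\cT_a$-block. Specifically, I take $P' = P \cup \{m_0\} \cup \{m_a : a \in \Ag\}$ and let $\cT'$ be the disjoint union of $\cT_0$ (relabelled by $L'(s) = L_0(s) \cup \{m_0\}$) with, for each agent $a$, a block $\widetilde{\cT_a}$ whose state space is $S_a \times 2^{P \setminus P_a}$, with transitions $(u,X) \to (u',X')$ whenever $u \to_a u'$, labels $L'(u,X) = L_a(u) \cup X \cup \{m_a\}$, and initial states $\{(\init_a, X) : X \subseteq P \setminus P_a\}$. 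The indistinguishability is $s \sim_a t$ iff $L'(s) \cap P_a = L'(t) \cap P_a$, and the top-level translation prepends an $m_0$-guard so that the claimed equivalence reduces universal LTLK quantification to the $\cT_0$-branch.

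Claim (1) then falls out of structural induction on $\phi$ with the strengthened invariant that for every $\pi \in \Paths(\cT')$ and every position $n$, with $(h,f)$ read off $\pi$ by projecting each label to $P$, one has $(f,h) \pobsmodels \phi$ iff $(\pi,n) \LTLKmodels \ltlk{\phi}$. Boolean and temporal cases are routine. The standpoint case $\StMod{a}{\varphi}$ exploits the disjointness and transition-closedness of the blocks: an initial path $\pi'$ that is $\sim_a$-equivalent to $\pi$ up to $n$ and satisfies $m_a$ at position $n$ must live entirely in $\widetilde{\cT_a}$, so stripping markers from $\pi'$ assembles a witness $(h',t,f')$ with $\last(h') = \first(f')$, $t \in \Reach(\cT_a, h')$, $f' \in \Traces^P(\cT_a, t)$, and $\proj{h'}{P_a} = \proj{h}{P_a}$; this last equation is just the $\sim_a$-matching unrolled. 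Conversely, any such triple $(h',t,f')$ lifts to an admissible $\pi'$ by choosing the second coordinate at step $i$ to be $h'[i] \setminus P_a$ (and $f'[i-n] \setminus P_a$ beyond $n$). The correspondence is a bijection, and the inductive hypothesis on $\varphi$ closes the step. Claim (2) is then immediate, as each $\StMod{a}$ is replaced one-to-one by the single modality $\Kdiamond{a}$ and the rest of the syntax is untouched, so the pattern of agent alternations is preserved verbatim.

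The main technical hurdle is the standpoint case: the augmented block $\widetilde{\cT_a}$ has to realise $\Traces^P(\cT_a,\cdot)$ faithfully -- so that the free guessing of non-$P_a$ truth values inherent in the $\pobsmodels$ semantics is matched by the nondeterminism over the $X$-coordinate -- while $\sim_a$ must simultaneously forget exactly those non-$P_a$ bits, so that equality of $a$-observations coincides with equality of $P_a$-projections of histories; and this bookkeeping must survive nested standpoint subformulas, each of which re-triggers the same construction for a different agent. To obtain the polynomial bound claimed in the statement, the $|S_a| \cdot 2^{|P \setminus P_a|}$ expansion above is replaced by a more compact encoding that distributes the nondeterministic choice of non-$P_a$ bits across transitions rather than packing it into the state, but this optimisation is orthogonal to the semantic argument sketched here.
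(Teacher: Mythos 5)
Your proposal establishes the semantic equivalence (1) and the alternation-depth claim (2) correctly, but by a genuinely different construction than the paper's. The paper first normalizes the initial labelings, then completes each $\cT_a$ (and $\cT_0$) with sink states $\bot_a^Q$, $Q\subseteq P_a$, so that every $P_a$-word becomes realizable, takes the synchronous product of all completions with label consistency on shared propositions, lets $\sim_a$ compare $P_a\cup\{\bot_a\}$-labels, and translates $\StMod{a}{\varphi}$ to $\Kdiamond{a}(\primed{\varphi}\wedge\Box\neg\bot_a)$ with top-level guard $\Box\neg\bot_0$; conformance to $\cT_a$ is thus expressed temporally via the $\bot_a$ flag inside a single product system. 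You instead take a disjoint union of $\cT_0$ and per-agent blocks over $S_a\times 2^{P\setminus P_a}$ that realize $\Traces^P(\cT_a)$ directly, identify blocks by fresh marker propositions, let $\sim_a$ compare only $P_a$-labels, and translate $\StMod{a}{\varphi}$ to $\Kdiamond{a}(m_a\wedge\ltlk{\varphi})$. Your route dispenses with the initial-consistency preprocessing and with the product over agents, and since the current path never carries a hypothetical run of $\cT_a$, the knowledge witness re-runs $\cT_a$ from scratch, which matches the $\pobsmodels$ clause very directly; your block-closedness argument in the standpoint case, including its use for nested modalities (possible because every block's states carry full $P$-valuations), is sound.

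Where you fall short of the statement is the polynomial-time claim. Your blocks have $|S_a|\cdot 2^{|P\setminus P_a|}$ states, and the suggested repair of ``distributing the nondeterministic choice of non-$P_a$ bits across transitions'' is not available: LTLK structures are state-labeled, the propositions in $P\setminus P_a$ must be readable off the state at every position (both by $\varphi$ itself and by $\sim_b$-comparisons for nested standpoint subformulas), so any block realizing $\Traces^P(\cT_a)$ needs exponentially many differently labeled states, and no relabeling of transitions exists in this framework. To be fair, the paper's own construction incurs a blowup of the same nature (the completions contribute $2^{|P_a|}$ states $\bot_a^Q$ per agent and $2^{|P|}$ for agent $0$), so your architecture is not worse on this count; but you should not assert that the exponential factor can be optimized away by the stated trick, and you need to address it explicitly rather than call it orthogonal, since polynomiality is part of the lemma and feeds the complexity corollaries. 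A further cosmetic mismatch: the paper's LTLK structures have a unique initial state, so your exponentially many block-initial states $(\init_a,X)$ should be funneled through a fresh root state with a $\neXt$-shift of the formula, a normalization the paper itself uses.
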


\begin{proofsketch}
	Assume w.l.o.g. that the initial state labelings in $\cT_0$ and $\cT_a$ are consistent for all $a$. 
   First, for each $a \in \Ag \cup \{0\}$, we extend $\cT_a$ to its completion $\cT_a^{\bot}$ by adding new states $\bot_a^Q$ for each $Q \subseteq P_a$, where the set of atomic propositions in $\cT_a^{\bot}$ is 
	$P_a^{\bot}=P_a \cup \{\bot_a\}$.
	The labeling of the original states is unchanged, i.e.,
	$L_a^{\bot}(s)=L_a(s)$, while
	for the fresh states $L_a^{\bot}(\bot_a^Q)=Q \cup \{\bot_a\}$. 
   We then construct the synchronous product $\cT' = S_0^{\bot}\times\prod_{a \in \Ag} \cT_a^{\bot}$	such that $L_0^{\bot}(s) \cap P_a = L_a^{\bot}(s_a)\cap P_a$ for each $a \in \Ag$.
	The set of atomic propositions is 
	$P'=P \cup \{\bot_a : a \in \Ag \cup \{0\}\}$, and 
	$L'(s, (s_a)_{a\in \Ag}) = L_0(s) \cup 
	\{\bot_a : a \in \Ag \cup \{0\}, \bot_a \in L_a^{\bot}(s_a)  \}$.
	The initial state of $\cT'$ is 
	$\init' = (\init_0, (\init_a)_{a\in \Ag})$, and the transitions are defined synchronously. The equivalence relations $\sim_a$ on $S'$ satisfy $\sigma \sim_a \theta$ iff $L'(\sigma) \cap P_a^{\bot} = L'(\theta) \cap P_a^{\bot}$.
	We then inductively
	translate $\StpLTL$ formulas $\varphi$ into ``equivalent'' LTLK formulas $\primed{\varphi}$:
	$
	\primed{\varphi} = \varphi, \ \primed{(\neg \varphi)} = \neg \primed{\varphi}, \  \primed{(\varphi \wedge \psi)} = \primed{\varphi} \wedge \primed{\psi}, \ \primed{(\neXt \varphi)} = \neXt\primed{\varphi}, \ \primed{(\varphi_1 \Until \varphi_2)} = \primed{\varphi_1} \! \! \! \Until \primed{\varphi_2}, \ \primed{(\StMod{a}{\varphi})} = \overline{K}_a (\primed{\varphi} \wedge \Box \neg \bot_a).
	$
	Finally, we define $\ltlk{\phi} = \Box \neg \bot_0 \to \primed{\phi}$.
\end{proofsketch}


\section{Model checking $\StpLTL$}

\label{sec:model-checking}

Let $\models$ be one of the five satisfaction relations defined in
Section \ref{sec:semantics}. The task of the $\StpLTL$ model checking problem
is as follows. Given a $\StpLTL$ structure 
$\fT = \bigl(\cT_0,(\cT_a)_{a\in \Ag}\bigr)$
and an $\StpLTL$ formula $\phi$, decide whether
$\fT \models \phi$.

We present a generic model checking algorithm for the five semantics,
which can be seen as an adaption of the standard CTL*-like model checking 
procedure \cite{EmerSistla-CTLstar-MC-InfCon84,Emerson-Lei-87} 
and its variant for LTLK and CTL*K under the perfect-recall semantics
\cite{MC-LTLK-and-beyond-2024}.
The essential idea is an inductive approach to compute 
deterministic finite automata (DFA)
$\cD_\chi$ for all standpoint subformulas $\chi = \StMod{a}{\varphi}$
of $\phi$. 
The automaton for $\chi$, called history-DFA, represents the language 
of all histories $h \in (2^P)^+$ with
$(*,h) \InnerModels{\chi} \chi$
where $\InnerModels{\chi}$ stands for one of the satisfaction
relations
$\stepmodels$, $\pobsmodels$, $\publicmodels$,
$\decrmodels{Q}$ or $\incrmodels{Q}$ 
with $Q \in \{ Q_{\chi}^{\tinydecr}, Q_{\chi}^{\tinyincr}\}$ as in Remark \ref{remark:context-dependency-syntax-tree}.

The inductive computation corresponds to a bottom-up approach where the innermost standpoint  subformulas, i.e., 
subformulas $\chi = \StMod{a}{\psi}$ where 
$\psi$ is an LTL formula, are treated first.
When computing $\cD_{\chi}$ for
a subformula $\chi=\StMod{a}{\psi}$ where $\psi$ contains 
standpoint modalities,
we may assume that history-DFA 
for the maximal standpoint subformulas $\chi_1,\ldots,\chi_k$
of $\psi$ have been computed before.


\paragraph{Preprocessing for $\decrmodels{}$ and $\incrmodels{}$.}
When dealing with the decremental and incremental semantics,
our iterative bottom-up approach requires the computation of the relevant parameters $Q \subseteq P$ of 
the satisfaction relations $\decrmodels{Q}$ and $\incrmodels{Q}$
over which subformulas of $\phi$ are interpreted.
This can be done in time linear in the length of $\phi$ by a top-down analysis
of the syntax tree of $\phi$ 
(see Remark \ref{remark:context-dependency-syntax-tree}).
In what follows, we shall write $Q_{\chi}$ to denote
the corresponding subset $Q_{\chi}^{\tinydecr}$ or $Q_{\chi}^{\tinyincr}$, respectively,
of $P$ for (an occurrence of) a
subformula $\chi$. 
\optional{
Thus, if $\chi$ is 
not in the scope of a standpoint modality
or a maximal standpoint subformulas of $\phi$ then
$Q_{\chi} =P$ under the decremental semantics and
$Q_{\chi} =\varnothing$ under the incremental semantics.
If $\chi= \StMod{a}{\psi}$ is (an occurrence) of a 
standpoint subformula of $\phi$ and
$\chi'$ a subformula of $\psi$ that -- as a subformula of $\psi$ -- is 
not in the scope of a standpoint modality or
a maximal standpoint subformula of $\psi$ then
$Q_{\chi'} =Q_{\chi} \cap P_a$ for the decremental semantics
and $Q_{\chi'}=Q_{\chi}\cup P_a$ for the incremental semantics.}

\paragraph{\bf Observation sets $\Obsset_{\chi}$.}
Our algorithm 
uses subsets $\Obsset_{\chi}\subseteq P$ for the standpoint subformulas
$\chi = \StMod{a}{\varphi}$ of $\phi$, whose
definition 
depends on the considered semantics:
$\Obsset_{\chi} = \varnothing$ for $\stepmodels$, 
$\Obsset_{\chi} = P_a$ for $\probsmodels$,
$\Obsset_{\chi} = P$ for $\publicmodels$,
$\Obsset_{\chi} = Q_{\chi}^{\tinydecr} \cap P_a$ for $\decrmodels{}$
and $\Obsset_{\chi} = Q_{\chi}^{\tinyincr} \cup P_a$ for $\incrmodels{}$.
Let $\obs_{\chi} : (2^P)^* \to (2^{\Obsset_{\chi}})^*$ 
be the induced observation function
$\obs_{\chi}(h)=\proj{h}{\Obsset_{\chi}}$.

\paragraph{Transition systems $\cT_a^R$.}
From $\Obsset_{\chi}$ we will derive a superset $R = R_{\chi} = P_a \cup \Obsset_{\chi}$ of $P_a$
and switch from $\cT_a$ to a transition system $\cT_a^R$ over $R$ 
that behaves as
$\cT_a$, but makes nondeterministic guesses for the truth values
of propositions in $R \setminus P_a$ for the starting state, as well as
in every step of a computation.
So, the states of $\cT_a^R$ are pairs
$(s,O)$ with $s \in S_a$ and $O \subseteq R$ such that
$L_a(s) = O \cap P_a$.
For the precise definition of $\cT_a^R$, see
Definition \ref{def:T-a-R} in the appendix.
The switch from $\cT_a$ to $\cT_a^R$ 
will be needed to deal with the constraints
$\last(h')=\first(f')$ in the definitions of
$\publicmodels{}$ and $\incrmodels{}$.


\begin{remark}
\label{T-a-R}
{\rm 
If $R =P_a$ then $\cT_a^R$ and $\cT_a$ are isomorphic, so the switch from
$\cT_a$ to $\cT_a^R$ is obsolete. (Note that then $O=L_a(s)$ for all states $(s,O)$ in $\cT_a^R$.)
This applies to $\stepmodels$ where $\Obsset_{\chi}=\varnothing$,
$\probsmodels$ where $\Obsset_{\chi}= P_a$,
and $\decrmodels{}$ where 
$\Obsset_{\chi} \subseteq P_a$.
  }
\end{remark}


\begin{definition}
 \label{def:history-DFA}
  Let $\chi$ be a standpoint subformula of $\phi$.
  A history-DFA for $\chi$ is a DFA $\cD$ over the alphabet $2^P$
  such that 
  \begin{enumerate}
  \item [(1)] 
        The language of $\cD$ is
        $\{h \in (2^P)^+ : (*,h) \InnerModels{\chi} \chi \}$. 
  \item [(2)] 
        $\cD$ is $\Obsset_{\chi}$-deterministic 
        in the following sense:
        Whenever $h_1,h_2 \in (2^P)^+$ such that
        $\proj{h_1}{\Obsset_{\chi}}=\proj{h_2}{\Obsset_{\chi}}$ then 
        $\delta_{\chi}(\init_{\chi},h_1)=\delta_{\chi}(\init_{\chi},h_2)$, where  $\delta_{\chi}$ denotes the transition relation of $\cD$.
   \end{enumerate}
\end{definition}


\subsubsection*{Generic $\StpLTL$ model checking algorithm}


\paragraph{Basis of induction.}
In the basis of induction we are given a standpoint formula
$\chi = \StMod{a}{\varphi}$ with $\varphi$ an LTL formula.
Let $\cT_{\chi} = \cT_a^R$ where $R = R_{\chi}= P_a \cup \Obsset_{\chi}$.
We apply a mild variant of standard LTL model checking techniques 
\cite{VarWolp-LICS86,VarWolp-InfComp94,BK08,CGKPV18}
(see Section \ref{sec:LTL-MC} in the appendix)
to compute the set $\Sat_{\cT_a^R}(\exists \varphi)$ 
of states $(s,O)$ in $\cT_a^R$ for which 
there is a $f \in \Traces^P(\cT_a^R,(s,O))$ with 
$f \LTLmodels \varphi$.
We then use a powerset construction
applied to $\cT_a^{R}$ (see Definition \ref{def:default-history-DFA} below) to construct an $\Obsset_{\chi}$-deterministic 
history-DFA $\cD_{\chi}$.


\paragraph{Step of induction.}
Suppose now 
$\chi = \StMod{a}{\varphi}$ and $\varphi$ has standpoint subformulas.
Let $\chi_1,\ldots,\chi_k$ be the maximal standpoint subformulas of $\varphi$,
say $\chi_i = \StMod{b_i}{\psi_i}$ for $i=1,\ldots,k$, and let
$\cD_1=\cD_{\chi_1}, \ldots, \cD_k=\cD_{\chi_k}$ be their history-DFAs 
over
$2^P$.
Let $\cD_i = (X_i,\delta_i,\init_i,F_i)$ and $\Obsset_i = \Obsset_{\chi_i}$. Further, let $R_{\chi} = P_a \cup \Obsset_{\chi}$.
We consider the transition system
\begin{center}
  $\cT_{\chi} = \cT_a^{R_{\chi}} \bowtie \cD_1 \bowtie \ldots \bowtie \cD_k$
\end{center}
that is obtained by putting $\cT_a^{R_{\chi}}$ in parallel to
the product of the $\cD_i$'s. 
For this, we introduce pairwise distinct, fresh atomic propositions
$p_1,\ldots,p_k$ for each of the $\chi_i$'s and define the components
$\cT_{\chi} = (Z_{\chi},\to_{\chi},\Init_{\chi},\cP_{\chi},L_{\chi})$ 
as follows (where we write $R$ instead of $R_{\chi}$).
  The state space is 
  $Z_{\chi} \ = \ S_a^{R} \times X_1 \times \ldots \times X_k$.
  The transition relation satisfies:
  $((s,O),x_1,\ldots,x_k) \to_{\chi} ((s',O'),x_1',\ldots,x_k')$
  iff there are $H, H' \subseteq P$ such that
  $H \cap \Obsset_{\chi} = H' \cap \Obsset_{\chi}$, 
  $H \cap R=O$, $(s,O) \to_a^{R} (s',O')$  
  and
  $x_i'=\delta_i(x_i,H')$ for $i=1,\ldots,k$.
  The set of atomic propositions is $\cP_{\chi} = R \cup \{p_1,\ldots,p_k\}$.
  The labeling function $L_{\chi} : Z_{\chi} \to 2^{\cP_{\chi}}$ is given by
  $L_{\chi}((s,O),x_1,\ldots,x_k) \cap R = O$ 
  and 
  $p_i\in L_{\chi}(s,x_1,\ldots,x_k)$ iff $x_i \in F_i$ for $i=1,\ldots,k$.
  Lastly, $\Init_{\chi}$ contains all states
  $((\init_a,O),x_1,\ldots,x_k)$ such that 
  $(\init_a,O)\in \Init_a^R$  (i.e., $O\cap P_a = L_a(\init_a)$)
  and
  there exists $H' \subseteq P$
  with $x_i=\delta_i(\init_i,H')$ for $i=1,\ldots,k$ and
  $O\cap \Obsset_{\chi}= H'\cap \Obsset_{\chi}$.

We now replace $\varphi$ with the LTL formula 
$\varphi' \ = \ \varphi[\chi_1/p_1,\ldots,\chi_k/p_k]$
over $\cP=P \cup \{p_1,\ldots,p_k\}$ that results
from $\varphi$ by syntactically 
replacing the maximal standpoint subformulas $\chi_i$ of $\varphi$ with $p_i$.
We then apply standard LTL model checking techniques 
(see Section \ref{sec:LTL-MC} in the appendix)
to compute $\Sat_{\cT_{\chi}}(\exists \varphi') = \{ z \in Z_{\chi} : \exists f \in 
\Traces^{\cP}(\cT_{\chi},z) \text{ s.t. } f \LTLmodels \varphi' \}$.

Having computed $\Sat_{\cT_{\chi}}(\exists \varphi')$,  
the atomic propositions $p_1,\ldots,p_k$ are no longer needed.
We therefore switch from $\cT_{\chi}$ to
$\cT_{\chi}'= \proj{\cT_{\chi}}{R}$ 
which agrees with $\cT_{\chi}$ except that $\cP_{\chi}' = R$ and 
$L_{\chi}'(s,x_1,\ldots,x_k)= L_{\chi}(s,x_1,\ldots,x_k) \cap R$.
By applying a powerset construction to $\cT_{\chi}'$ we get
an $\Obsset_{\chi}$-deterministic history DFA $\cD_{\chi}$
for the language $\{h \in (2^P)^+ : (*,h)\InnerModels{\chi} \chi\}$
(see Definition \ref{def:default-history-DFA} below, 
applied to $\cT=\cT_{\chi}'$,
$\Obsset_{\chi}$ and 
the set $U_{\chi}=\Sat_{\cT_{\chi}}(\exists \varphi')$ for the declaration
of the final states in $\cD_{\chi}$).


\paragraph{Final step.}
After treating all maximal standpoint subformulas of $\phi$,
we need to check whether $\phi$ holds for all traces
of $\cT_0$. This is done via similar techniques as in the first part 
of the step of induction, i.e., we build the product $\cT_{\phi}$ 
of $\cT_0$ with the history-DFAs that have been
constructed for the maximal standpoint subformulas of $\phi$ and
replace them with fresh atomic propositions, yielding an LTL formula
$\phi'$ over an extension of $P$. We then apply
standard LTL model checking techniques to check whether all traces
of $\cT_{\phi}$ satisfy $\phi'$. 
\optional{%
This can be done using an algorithm that is polynomially space-bounded 
in the size of $\cT_{\phi}$ and the length of $\phi'$ (and $\phi$).}


\subsubsection*{Computation of history-DFA}

\label{sec:default-history-DFA}

The history-DFA $\cD_{\chi}$ 
for standpoint subformulas $\chi=\StMod{a}{\varphi}$ 
can be obtained by applying a
powerset construction to the transition system 
$\cT=\cT_{\chi}$ over $R=R_{\chi}$.
This construction is an adaption of the classical powerset construction
for nondeterministic finite automata and can be seen as a one-agent variant
of the powerset construction introduced by Reif \cite{Reif-JCSS84}
for partial-information
two-player games and its variant for CTL*K model checking 
\cite{MC-LTLK-and-beyond-2024}.
The acceptance condition in $\cD_{\chi}$ is derived from  
the set $U = \Sat_{\cT}(\exists \varphi)$ in the basis of induction and
$U = \Sat_{\cT}(\exists \varphi')$ in the step of induction.


\begin{definition}
\label{def:default-history-DFA}
Let $\cT = (S,\to,\Init,R,L)$ 
with $R \subseteq P$,
$\Obsset \subseteq P$ an observation set,
and $U \subseteq S$.
Then,
$\pow(\cT,\Obsset,U)$
is the following
DFA $\cD= (\cX_{\cD},\delta_{\cD},\init_{\cD},F_{\cD})$ over the alphabet $2^P$.
   The state space is $\cX_{\cD} = 2^S \cup \{\init_{\cD}\}$.
  The transition function $\delta_{\cD} : \cX_{\cD} \times 2^P \to \cX_{\cD}$ is 
  given by (where $x \in 2^S$):
  \begin{center}
  \begin{tabular}{l}
  \begin{tabular}{r}
    $\delta_{\cD}(x,H)$  =          
    $\bigr\{ s' \in S : \text{ there exists } s \in x \text{ with }$
      \ \ \\
        
        $s \to s'$ \text{ and } $L(s') \cap \Obsset = H \cap \Obsset \bigr\}$
    \\[1ex]
   \end{tabular}   
   \\
   \begin{tabular}{l}
   $\delta_{\cD}(\init_{\cD},H)$  =   
        $\bigl\{ s \in \Init :  L(s) \cap \Obsset = H \cap \Obsset 
        \bigr\}$
   \end{tabular}
   \end{tabular}
  \end{center}
  The set of final states is
  $F_{\cD}= 
   \bigl\{  x \in \cX_{\cD} :  x \cap U \not= \varnothing  \bigr\}$.
\end{definition}


For the soundness proof, see
Section \ref{sec:soundness-history-DFA} in the Appendix.


\section{Complexity-theoretic results}

\label{sec:complexity}

At first glance the time complexity of
our generic model checking algorithm 
is  $d$-fold exponential for
$\StpLTL$ formulas $\varphi$ where $d$ is the maximal nesting depth of standpoint modalities in $\varphi$. 
This yields a nonelementary upper bound for the
$\StpLTL$ model checking problem.

For $*\in \{\step,\pobs,\public,\decr,\incr\}$, let $\StpLTL^*$ denote $\StpLTL$ under the semantics w.r.t. $\models_*$.
We now investigate the complexity 
of the different $\StpLTL^*$ model checking problems. 



\subsubsection*{Bounds on the sizes of history-DFA and
                time bounds}

Obviously, it suffices to construct the reachable fragment of the
default history-DFA. Furthermore, we may apply a 
standard poly-time minimization
algorithm to the default history-DFA. Thus, to establish better complexity
bounds it suffices to provide elementary 
bounds on the sizes of minimal history-DFA
for standpoint subformulas.
Let us start with a simple observation on how to exploit the property that
the history-DFA $\cD_{\chi}$ are $\Obsset_{\chi}$-deterministic.

\begin{remark}
\label{remark:exploit-obs-determinism}
{\rm
With the notations of the step of induction for $\chi=\StMod{a}{\varphi}$,
the history-DFA $\cD_{\chi}$ for $\chi$ is obtained by 
(the reachable fragment) of $\pow(\cT_{\chi}', \Obsset_{\chi}, U_{\chi})$ where 
$U_{\chi}=\Sat_{\cT_{\chi}}(\exists \varphi')$.
Recall that $\cT_{\chi}'$ has been defined via a product construction
$\cT_a^R \bowtie \cD_1 \bowtie \ldots \bowtie \cD_k$ where the $\cD_j$'s
are history-DFAs for the maximal subformulas $\chi_j = \StMod{b_j}{\psi_j}$ 
of $\varphi$.

Suppose now that $i\in \{1,\ldots,k\}$ such that $\cD_i$
is $\Obsset_{\chi}$-deterministic.
Then, for all histories $h \in (2^P)^+$  and states
$((s,O),x_{1},\ldots,x_{k})$ 
in $\Reach(\cT_{\chi}',h)$ we have
$x_i = \delta_i(\init_i,h)$.
Hence, if $x$ is a reachable state in $\cD$, 
say $x=\delta_{\cD_{\chi}}(\init_{\cD},h)$, then for all states
$((s,O),x_{1},\ldots,x_{k}) \in x$ we have 
$x_i = \delta_i(\init_i,h)$.
Thus, we may redefine the default history-DFA $\cD_{\chi}$ as a product 
$\pow(\cT_a^R \bowtie \prod_{j \not= i} \cD_j,\Obsset_{\chi}) 
    \bowtie \cD_i$ 
rather than a powerset construction of
$\cT_a^R \bowtie \cD_1 \bowtie \ldots \bowtie \cD_k$.
From now on, we write $\pow(\cT,\Obsset)$ for the
structure $(\cX_{\cD},\delta_{\cD},\init_{\cD})$ defined as in
Definition \ref{def:default-history-DFA}. The acceptance condition 
$F_{\cD_{\chi}}$
is (re)defined as the set of all states $(x,x_i)$ 
in the above product (i.e., $x$ is a subset of the state space of 
$\cT_a^R \bowtie \prod_{j \not= i} \cD_j$ 
and $x_i$ a state in $\cD_i$) such that 
$\{ (\xi,x_i) : \xi \in x\} \cap U_{\chi} \neq \varnothing$.
Let $I$ denote the set of indices $i \in \{1,\ldots,k\}$ where
the history-DFA for $\chi_i=\StMod{b_i}{\psi_i}$ 
is $\Obsset_{\chi}$-deterministic.
W.l.o.g. $I=\{1,\ldots,\ell\}$.
Then, we can think of $\cD_{\chi}$ as (the reachable
fragment of) a product
$\pow(\cT_a^R \bowtie \cD_{\ell+1} \bowtie \ldots \bowtie \cD_k, \Obsset_{\chi}) 
    \bowtie \cD_1 \bowtie \ldots \bowtie \cD_{\ell}$.
  }
\end{remark}

If $\chi=\StMod{a}{\varphi}$ then
$\Obsset_{\chi_i} = \Obsset_{\chi}$
for all maximal standpoint subformulas $\chi_i = \StMod{b_i}{\psi_i}$
of $\varphi$ where $a=b_i$.
Thus, by induction on $d=\ad(\chi)$,
Remark \ref{remark:exploit-obs-determinism}
yields that
the size of the reachable fragment of the default 
history-DFA for a standpoint formula
$\chi$ is $d$-fold exponentially bounded. Hence:

\begin{lemma}
\label{d-exp-time-bound}
  Under all five semantics,
  the time complexity of the 
  algorithm of Section \ref{sec:model-checking}
  is at most $d$-fold exponential when $d=\ad(\phi)$.
\end{lemma}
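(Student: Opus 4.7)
The strategy is to prove, by strong induction on the size of the standpoint subformula $\chi$, that the reachable fragment of $\cD_\chi$ has size at most $T(\ad(\chi))$, where $T(d)$ denotes a fixed $d$-fold exponential in $|\fT|+|\phi|$. The claim on time complexity then follows: the bottom-up construction computes each $\cD_\chi$, forms a product $\cT_\chi$ whose size is dominated by the already-constructed $\cD_{\chi_i}$'s, and calls standard LTL model checking, which is polynomial in the structure and singly exponential in a sub-formula of $\phi$; all of these costs are absorbed into $T(\ad(\phi))$.

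For the base case $\chi=\StMod{a}{\varphi}$ with $\varphi$ an LTL formula, the algorithm applies the powerset construction to the polynomially-sized $\cT_a^R$, giving $|\cD_\chi|\le 2^{\mathrm{poly}(|\fT|)}\le T(1)$. In the inductive step, with $\ad(\chi)=d$ and maximal standpoint subformulas $\chi_1,\ldots,\chi_k$, I partition the indices into $I=\{i:b_i=a\}$ and $J=\{j:b_j\neq a\}$. For $j\in J$, the top-level agent change forces $\ad(\chi_j)\le d-1$, so the induction hypothesis gives $|\cD_{\chi_j}|\le T(d-1)$. For $i\in I$, only $\ad(\chi_i)\le d$ is guaranteed, but since $\chi_i$ is a strictly smaller subformula, the strong induction hypothesis still yields $|\cD_{\chi_i}|\le T(d)$; moreover $\Obsset_{\chi_i}=\Obsset_\chi$, so $\cD_{\chi_i}$ is $\Obsset_\chi$-deterministic. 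Invoking Remark~\ref{remark:exploit-obs-determinism}, $\cD_\chi$ can be written as $\pow(\cT_a^R\bowtie\prod_{j\in J}\cD_j,\Obsset_\chi)\bowtie\prod_{i\in I}\cD_i$. The powerset factor has size at most $2^{\mathrm{poly}(|\fT|)\cdot T(d-1)^{|J|}}$, which is $T(d)$-bounded because one exponential of a polynomially scaled product of $(d-1)$-fold exponentials is again $d$-fold exponential; the remaining product multiplies by at most $T(d)^{|I|}$, absorbed by closure of the class under polynomial products.

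The main obstacle will be handling the same-agent subformulas $\chi_i$ at the same alternation depth: a naive induction on $\ad(\chi)$ alone cannot bound $|\cD_{\chi_i}|$ when $\ad(\chi_i)=d$, so one must either carry an inner induction on nesting depth within a single agent or, as above, strengthen the primary induction to be on $|\chi|$ and rely on Remark~\ref{remark:exploit-obs-determinism} to keep such $\cD_{\chi_i}$'s out of the powerset. Once $T(d)$ is fixed with enough slack (for instance $T(d)=2^{T(d-1)\cdot c\cdot(|\fT|+|\phi|)}$ for a suitable polynomial $c$), verifying the closure of $T(d)$ under polynomial products and under one exponentiation of a polynomial scaling of $T(d-1)^{|\phi|}$ is a routine calculation, completing the induction and hence the lemma.
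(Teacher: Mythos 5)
Your route is the same as the paper's: you use that $\Obsset_{\chi_i}=\Obsset_{\chi}$ for same-agent maximal standpoint subformulas (which indeed holds under all five semantics) together with Remark~\ref{remark:exploit-obs-determinism} to pull those DFAs out of the powerset, so that the powerset is applied only to DFAs of alternation depth at most $d-1$, and you correctly identify the subtlety of same-agent subformulas sitting at the same alternation depth. The genuine problem is that the quantitative invariant you induct on does not close. You claim $|\cD_\chi|\le T(\ad(\chi))$ for a \emph{fixed} function $T$, e.g.\ $T(d)=2^{T(d-1)\cdot c\cdot(|\fT|+|\phi|)}$, while your own inductive step only yields $|\cD_\chi|\le 2^{\mathrm{poly}\cdot T(d-1)^{|J|}}\cdot T(d)^{|I|}$, which you declare ``absorbed''. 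No fixed per-depth bound can absorb this: $T(d)^{|I|}\le T(d)$ already fails for $|I|\ge 2$ (and even for $|I|=1$ once the powerset factor is multiplied in), and $2^{\mathrm{poly}\cdot T(d-1)^{|J|}}\le 2^{c\cdot n\cdot T(d-1)}$ fails for $|J|\ge 2$, since $T(d-1)^{2}$ dwarfs $c\cdot n\cdot T(d-1)$. So the ``routine calculation'' you defer is false for the $T$ you propose, and the induction as written breaks exactly at its key step. The repair is standard and cheap: make the bound depend on the subformula as well, e.g.\ prove $|\cD_\chi|\le T(\ad(\chi))^{|\chi|}$ (equivalently, track the number of standpoint nodes of $\chi$), with $T(1)$ single exponential and $T(d)$ an exponential of a polynomial power of $T(d-1)$. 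Then each exponentiation level only accumulates polynomially many factors (there are at most $|\phi|$ standpoint nodes in total), so the resulting bound is still $d$-fold exponential, which is all the lemma claims; this is the bookkeeping the paper leaves implicit behind ``by induction on $d$''.

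A secondary inaccuracy: in the base case you justify $|\cD_\chi|\le 2^{\mathrm{poly}(|\fT|)}$ by calling $\cT_a^R$ ``polynomially sized''. Under the public-history semantics one has $R=P$ already at alternation depth $1$, so $\cT_a^R$ has up to $|S_a|\cdot 2^{|P\setminus P_a|}$ states and a blind powerset would be doubly exponential, contradicting the single-exponential bound needed when $\ad(\phi)=1$. What saves the bound is that whenever $R\subseteq\Obsset_\chi$ (as for $\publicmodels$ and $\incrmodels{}$), all elements of a reachable macro-state of $\pow(\cT_a^R,\Obsset_\chi,\cdot)$ share the same $O$-component, so the reachable fragment has at most $2^{|S_a|}\cdot 2^{|P|}$ states; this is the observation behind Lemma~\ref{lemma:public-semantics-history-DFA-single-exp}, and you should invoke it explicitly rather than the incorrect size claim for $\cT_a^R$. (For the step, pure observation-based and decremental semantics $R=P_a$, so there your base case is fine as stated.)
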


This $d$-fold exponential upper time bound will now be improved
using further consequences of 
Remark \ref{remark:exploit-obs-determinism}.

We start with $\stepmodels$.
Here, we have $\Obsset_{\chi} = \varnothing$ for all $\chi$.
With the notations used in Remark \ref{remark:exploit-obs-determinism},
the history-DFA $\cD_1,\ldots,\cD_k$ are
$\Obsset_{\chi}$-deterministic. Thus, we can think of
$\cD_{\chi}$ as a product 
$\pow(\cT_a,\varnothing) \bowtie \cD_1 \bowtie \ldots \bowtie \cD_k$.
But now, each of the $\cD_i$'s also has this shape.
In particular, if \( \chi_i = \StMod{a}{\psi_i} \), then the projection of the first coordinate of the states and the transitions between them in the reachable fragment of \( \cD_i \) matches exactly those of \( \pow(\cT_a, \varnothing) \).
Thus, one can incorporate the information on final states
and drop $\pow(\cT_a,\varnothing)$ from the product.
As a consequence, the default history-DFA $\cD_{\chi}$ 
can be redefined for $\stepmodels$ such that
the state space of $\cD_{\chi}$ is contained in
$\prod_{b\in \Ag(\chi)} 2^{S_b}$ 
where $\Ag(\chi)$ denotes the set of agents $b\in \Ag$ 
such that $\chi$ has a (possibly non-maximal) standpoint subformula
of the form $\StMod{b}{\psi}$. 
  (See 
   Lemma \ref{lemma:step-semantics-history-DFA-single-exp} in
  the appendix for details.)
The situation is similar for $\publicmodels$
(where $\Obsset_{\chi} = P$ for all $\chi$) and
for $\decrmodels{}$ (where $\Obsset_{\chi_i} \subseteq \Obsset_{\chi}$
for all standpoint subformulas $\chi_i$ of $\chi$).
In the case of $\publicmodels$ there are history-DFA
where the state space
is contained in $(\prod_{b\in \Ag(\chi)} 2^{S_b}) \times 2^P$.
For $\decrmodels{}$, we assign history-DFA $\cD_v$ to the nodes
in the syntax tree of $v$ that have the shape
$\prod_{w} \pow(\cT_{a_w},\Obsset^{\tinydecr}_w)$
where $w$ ranges 
over all nodes in the syntax subtree of $v$ 
 such that the formula given by $w$ is a standpoint formula
$\StMod{a_w}{\varphi_w}$.
  (See 
   Lemma \ref{lemma:public-semantics-history-DFA-single-exp} 
   and Lemma \ref{lemma:decr-semantics-history-DFA-single-exp}
   in the appendix.)
For $\probsmodels$ under the additional assumption
that $P_a \cap P_b =\varnothing$ for $a,b\in \Ag$ with $a \not= b$,
one can simplify the transition rules for $\cT_{\chi}$
to obtain history-DFA $\cD_{\chi}$ for 
standpoint formulas $\chi=\StMod{a}{\varphi}$ 
where the state space is contained in 
$2^{S_a} \times \prod_{b \in \Ag(\varphi)} 2^{S_b}$.
  (See Lemma \ref{lemma:pobs-disjoint-semantics-history-DFA-single-exp}
  in the appendix.)


With these observations, we obtain (where we assume 
the above mentioned simplification of the default-history DFAs):

\begin{lemma}
 \label{lemma:single-exp-run-time-step-public-decr}
   For $*\in \{\step,\public,\decr\}$,
   the algorithm of Section \ref{sec:model-checking}
   for $\StpLTL^*$ model checking
   runs in (single) exponential time.
   The same holds for $*=\pobs$ 
   under the additional assumption that the $P_a$'s are pairwise disjoint.
\end{lemma}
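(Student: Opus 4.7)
The plan is to upgrade the naive $d$-fold exponential bound of Lemma~\ref{d-exp-time-bound} to a single exponential by combining the structural size bounds on history-DFAs established in the appendix (Lemmas~\ref{lemma:step-semantics-history-DFA-single-exp}, \ref{lemma:public-semantics-history-DFA-single-exp}, \ref{lemma:decr-semantics-history-DFA-single-exp}, and \ref{lemma:pobs-disjoint-semantics-history-DFA-single-exp}) with the standard complexity of LTL model checking. For each of the four cases, the referenced lemma gives a history-DFA $\cD_\chi$ whose state space embeds into a fixed product of powersets indexed by agents: into $\prod_{b\in\Ag(\chi)} 2^{S_b}$ for $\stepmodels$, into this product times $2^P$ for $\publicmodels$, into $\prod_w \pow(\cT_{a_w},\Obsset^{\tinydecr}_w)$ for $\decrmodels{}$, and into $2^{S_a}\times\prod_{b\in\Ag(\varphi)} 2^{S_b}$ for $\pobsmodels$ with pairwise disjoint $P_a$'s. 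In all four cases one obtains $|\cD_\chi| \leqslant 2^{O(|\fT|)}$, \emph{independent} of the alternation depth.

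Given this uniform single-exponential bound, I would analyze the cost of one inductive step for a subformula $\chi=\StMod{a}{\varphi}$. The product transition system $\cT_\chi = \cT_a^{R_\chi}\bowtie\cD_1\bowtie\cdots\bowtie\cD_k$ has size at most $|S_a|\cdot 2^{|R_\chi|}\cdot\prod_i |\cD_i|$, which collapses to $2^{O(|\fT|)}$ because the appendix bounds guarantee that in each semantics the individual $\cD_i$ share their state-space factors with $\cT_\chi$, so no genuine multiplication of exponentials occurs. Standard automata-theoretic LTL model checking on $\cT_\chi$ against $\varphi' = \varphi[\chi_1/p_1,\ldots,\chi_k/p_k]$ then runs in time polynomial in $|\cT_\chi|$ and exponential in $|\varphi'|\leqslant |\phi|$, hence in time $2^{O(|\fT|+|\phi|)}$. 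The subsequent powerset construction produces $\cD_\chi$ in time proportional to its reachable fragment, which is again $2^{O(|\fT|)}$.

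Finally, summing the per-subformula cost over the at most $|\phi|$ standpoint subformulas of $\phi$ yields $|\phi|\cdot 2^{O(|\fT|+|\phi|)} = 2^{O(|\fT|+|\phi|)}$ for the whole algorithm, and the last step (building $\cT_\phi$ and running LTL model checking for the top-level formula) fits within the same bound.

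The main obstacle, which the appendix lemmas do the real work of dispatching, is controlling the interaction of nested products: a priori, each $\cD_j$ in the step of induction could itself be exponential in $|\fT|$, so forming $\cD_1\bowtie\cdots\bowtie\cD_k$ and then taking a powerset would give a double exponential, and iterating across depth would reproduce the $d$-fold blow-up. The key content behind all four cases is Remark~\ref{remark:exploit-obs-determinism}: whenever $\cD_i$ is $\Obsset_\chi$-deterministic, it can be factored out of the powerset and shared across the recursion, so the reachable fragment depends only on the set $\Ag(\chi)$ of agents rather than on the number of standpoint subformulas. Once that factoring is made precise for each of the four semantics, the bookkeeping above is routine and yields the claimed single-exponential bound.
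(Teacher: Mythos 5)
Your proposal is correct and follows essentially the same route as the paper: the paper likewise derives this lemma directly from the structural history-DFA bounds of Lemmas~\ref{lemma:step-semantics-history-DFA-single-exp}--\ref{lemma:pobs-disjoint-semantics-history-DFA-single-exp}, whose common engine is the factoring of $\Obsset_{\chi}$-deterministic components out of the powerset construction (Remark~\ref{remark:exploit-obs-determinism}), combined with the single-exponential cost of the LTL model checking subroutine. The only nitpicks are cosmetic: for $\decrmodels{}$ the DFA size is $2^{O(|\fT|\cdot|\phi|)}$ rather than $2^{O(|\fT|)}$ (still single exponential in the input), and the disjoint-$P_a$ case for $\pobsmodels$ rests on the disjointness-specific simplification of $\cT_{\chi}$ in Lemma~\ref{lemma:pobs-disjoint-semantics-history-DFA-single-exp} rather than on $\Obsset_{\chi}$-determinism alone.
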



Let us now look at $\incrmodels{}$. 
Given a node $v$ in the syntax tree of $\phi$, let $\phi_v$ denote the subformula
of $\phi$ that is represented by $v$, let $\pi_v$ denote
the unique path $\pi_v=v_0 v_1 \ldots v_r$ in the syntax tree from
the root $v_0$ to $v_r = v$, and 
let $\Ag_v$ denote the set of agents
$b\in \Ag$ such that $\phi_w$ has the form $\StMod{b}{\psi_w}$ for some
$w \in \{v_1,\ldots, v_{r-1}\}$.
Then, $Q_{v}^{\tinyincr} = \bigcup_{b \in \Ag_v} P_b$
and $\Obsset_{v}^{\tinyincr} = Q_v^{\tinyincr} \cup P_a$ if
$\phi_v = \StMod{a}{\varphi}$ (see Remark \ref{remark:context-dependency-syntax-tree}).
Let 
$\pi_v'= w_1\ldots w_{\ell}$ be the sequence 
resulting from $\pi_v$
when all nodes $w$ where $\phi_w$ is not a standpoint subformula
are removed.
The observation sets along $\pi_v'$ are increasing, i.e.,
$\varnothing = \Obsset_{w_1}^{\tinyincr} \subseteq \Obsset_{w_2}^{\tinyincr}
    \subseteq \ldots \subseteq \Obsset_{w_{\ell}}^{\tinyincr} \subseteq P$.
Thus, the number of indices $j \in \{2,\ldots,\ell\}$ where
$\Obsset_{w_{j-1}}^{\tinyincr}$ and $\Obsset_{w_j}^{\tinyincr}$ 
are different is bounded
by $N{-}1$ where $N = |\Ag|$. If
$\Obsset_{w_{j-1}}^{\tinyincr} = \Obsset_{w_j}^{\tinyincr}$, 
the history-DFA constructed for $\phi_{w_j}$ is
$\Obsset_{w_{j-1}}^{\tinyincr}$-deterministic.
We can now again apply Remark \ref{remark:exploit-obs-determinism} 
to $\chi=\phi_{w_{j-1}}$ and $\chi_i = \phi_{w_j}$. This yields:

\begin{lemma}
 \label{lemma:N-EXP-incremental-semantics}
   With $N=|\Ag|$,
   our model checking algorithm for $\StpLTL^{\tinyincr}$ 
   is $N$-fold exponentially time bounded.
\end{lemma}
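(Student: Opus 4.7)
The plan is to sharpen the $d$-fold exponential bound of Lemma \ref{d-exp-time-bound} by systematically invoking Remark \ref{remark:exploit-obs-determinism} at every syntactic level where the observation set does not strictly grow, so that an exponential blow-up is charged only for each \emph{distinct} value of $\Obsset^{\tinyincr}$ appearing along a chain of standpoint-formula nodes. As already noted in the paragraph preceding the lemma, along any such chain $\pi'_v = w_1 \ldots w_\ell$ the sequence $\Obsset^{\tinyincr}_{w_1} \subseteq \cdots \subseteq \Obsset^{\tinyincr}_{w_\ell}$ is monotone, and each strict inclusion must introduce the propositions $P_b$ of an agent $b \in \Ag$ not yet accumulated; hence at most $N$ distinct values occur along any chain.

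For each standpoint subformula $\chi$ of $\phi$, define $h(\chi)$ to be the maximum number of distinct $\Obsset^{\tinyincr}$-values along any chain of standpoint nodes starting at $\chi$ and descending into the syntax subtree rooted at $\chi$. Then $1 \leq h(\chi) \leq N$. I would prove by structural induction on $\chi$ (bottom-up) that the refined history-DFA $\cD_\chi$ has size at most $h(\chi)$-fold exponential in $|\fT|+|\phi|$. At the step $\chi = \StMod{a}{\varphi}$ with maximal standpoint subformulas $\chi_1,\ldots,\chi_k$, partition them into $I = \{i : \Obsset^{\tinyincr}_{\chi_i} = \Obsset^{\tinyincr}_\chi\}$ and its complement. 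By Definition \ref{def:history-DFA}(2), each $\cD_{\chi_i}$ with $i \in I$ is $\Obsset^{\tinyincr}_\chi$-deterministic, so by Remark \ref{remark:exploit-obs-determinism} it can be moved outside the powerset as a product component; inductively, $|\cD_{\chi_i}| \leq h(\chi_i)$-fold exp with $h(\chi_i) \leq h(\chi)$. For $i \notin I$ the strict increase of the observation set forces $h(\chi_i) \leq h(\chi){-}1$, so the powerset applied to $\cT_a^R \bowtie \prod_{i \notin I} \cD_{\chi_i}$ produces a DFA of size at most $h(\chi)$-fold exponential, and the subsequent product with the pulled-out $\cD_{\chi_i}$, $i \in I$, preserves this bound.

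Since $h(\chi) \leq N$ uniformly, all history-DFAs produced by the algorithm are of size at most $N$-fold exponential in the input, and so is the top-level product $\cT_\phi$. The final LTL model-checking call against the propositionalised formula $\phi'$ of length $O(|\phi|)$ runs in time $|\cT_\phi| \cdot 2^{O(|\phi|)}$ using the standard automata-based procedure, which remains within $N$-fold exponential time overall.

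The main technical point to verify is the elementary arithmetic observation that multiplying a $k$-fold exponential quantity by another $k$-fold exponential quantity remains $k$-fold exponential --- this is precisely what legitimises pulling the $\Obsset^{\tinyincr}_\chi$-deterministic children out of the powerset without inflating the tower height. Once this is confirmed, the bound $h(\chi) \leq N$, combined with the single-exponential cost contributed by each genuine powerset layer, yields the $N$-fold exponential time bound.
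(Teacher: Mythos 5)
Your argument is correct and takes essentially the same route as the paper's own justification (the paragraph preceding the lemma): the incremental observation sets grow monotonically along chains of standpoint nodes, at most $N$ distinct values can occur since each strict increase must add a fresh agent's propositions, and Remark \ref{remark:exploit-obs-determinism} lets the $\Obsset_{\chi}$-deterministic children be pulled out of the powerset so that only strict increases cost an exponential. Your explicit induction on the measure $h(\chi)$ is just a slightly more formal packaging of that sketch.
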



\subsubsection*{Space bounds}

Lemma \ref{lemma:embdding-pobs-in-LTLK} 
shows that the $\SLTL{\pobs}{d}$ model checking problem 
is polynomially reducible to the LTLK$_{d}$ model checking
problem under the perfect-recall semantics.
The latter is known to be $(d{-}1)$-EXPSPACE-complete 
for $d \geqslant 2$ and PSPACE-complete for $d=1$
\cite{MC-LTLK-and-beyond-2024}.
In combination with the known
PSPACE lower bound for the LTL model checking problem
\cite{SistlaClarke-JACM85} and 
Lemmas \ref{lemma:pobs=decr=incr-for-alternation-depth-1}
and \ref{lemma:embdding-pobs-in-LTLK} 
we obtain:

\begin{corollary} 
 \label{cor:complexity-pobs}
    For $*\in \{\pobs,\decr,\incr\}$,
    the $\StpLTL^*_1$ model checking problem is PSPACE-complete.
    For $d \geqslant 2$,
    the $\SLTL{$\pobs$}{d}$ model checking problem 
    is in $(d{-}1)$-EXPSPACE.
\end{corollary}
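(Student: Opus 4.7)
The plan is to combine three ingredients already established in the paper: the polynomial-time, alternation-depth-preserving embedding of $\StpLTL$ under $\probsmodels$ into LTLK under the perfect-recall semantics (Lemma~\ref{lemma:embdding-pobs-in-LTLK}); the equivalence of $\probsmodels$, $\decrmodels{}$ and $\incrmodels{}$ on alternation-depth-1 formulas (Lemma~\ref{lemma:pobs=decr=incr-for-alternation-depth-1}); and the cited complexity of LTLK$_d$ model checking under perfect-recall from~\cite{MC-LTLK-and-beyond-2024}, which is PSPACE-complete for $d=1$ and in $(d{-}1)$-EXPSPACE for $d\geqslant 2$.

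For the upper bounds, given an instance $(\fT,\phi)$ with $\phi$ of alternation depth $d$, I will apply Lemma~\ref{lemma:embdding-pobs-in-LTLK} to obtain in polynomial time an LTLK instance $(\ltlk{\fT},\ltlk{\phi})$ whose formula has alternation depth $d$. The cited LTLK complexity bounds then give PSPACE membership when $d=1$ and $(d{-}1)$-EXPSPACE membership when $d\geqslant 2$. For $*\in \{\decr,\incr\}$ at alternation depth $1$, Lemma~\ref{lemma:pobs=decr=incr-for-alternation-depth-1} lets me pass to the $\probsmodels$ instance with the same structure and formula, transferring the PSPACE upper bound to these two semantics as well.

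For the matching PSPACE lower bound, I reduce from plain LTL model checking, which is PSPACE-hard by~\cite{SistlaClarke-JACM85}. Given an LTL instance $(\cT,\psi)$, form the $\StpLTL$ structure $\fT$ with $\cT_0 = \cT$ and empty set of agents (or any dummy collection, since $\psi$ contains no standpoint operators). All five semantics agree with $\LTLmodels$ on LTL formulas, so $\fT \models_* \psi$ iff $\cT \LTLmodels \psi$, and PSPACE-hardness transfers to $\StpLTL^*_1$ for $*\in\{\pobs,\decr,\incr\}$. No genuine obstacle is expected: the work is really bookkeeping on top of the preceding lemmas, with the alternation-depth-preservation clause of Lemma~\ref{lemma:embdding-pobs-in-LTLK} being the point that must be invoked carefully so that the LTLK bounds apply layer by layer.
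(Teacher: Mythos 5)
Your proposal is correct and follows essentially the same route as the paper: the alternation-depth-preserving polynomial reduction to LTLK of Lemma~\ref{lemma:embdding-pobs-in-LTLK}, the known PSPACE/$(d{-}1)$-EXPSPACE bounds for LTLK$_d$ from \cite{MC-LTLK-and-beyond-2024}, the transfer to $\decrmodels{}$ and $\incrmodels{}$ at alternation depth $1$ via Lemma~\ref{lemma:pobs=decr=incr-for-alternation-depth-1}, and PSPACE-hardness inherited from plain LTL model checking \cite{SistlaClarke-JACM85} since all semantics coincide with $\LTLmodels$ on the LTL fragment.
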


For the step and public-history semantics, the
reduction provided in the proof of Lemma \ref{lemma:embdding-pobs-in-LTLK}
can be adapted by dealing with an LTLK structure over a single agent.
Thus, the generated LTLK formula has alternation depth at most 1.
See Lemma \ref{lemma:step-public-in-LTLKone} in the appendix.

\begin{theorem}
 \label{thm:step-public-PSPACE-completeness}
   For $*\in \{\step,\public\}$,
   the $\StpLTL^*$ model checking problem 
   is polynomially reducible to the LTLK$_1$ model checking
   problem, and therefore PSPACE-complete.
\end{theorem}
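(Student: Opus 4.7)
The plan is to adapt the embedding of Lemma~\ref{lemma:embedding-pobs-in-LTLK}, reusing its product construction $\cT'$, its completion flags $\bot_a$, and its syntactic translation $\varphi \mapsto \primed{\varphi}$, but collapsing all agent-indexed equivalence relations into a single equivalence relation $\sim_*$ on the product state space. Since the resulting LTLK structure has only one agent, any LTLK formula over it has alternation depth at most $1$, which is exactly what is needed to land in LTLK$_1$. The key design question is how to choose $\sim_*$ so that perfect-recall equivalence of histories under $\sim_*$ coincides with the observation function $\obs_a$ of the target semantics, uniformly over all $a \in \Ag$; this is possible because in both semantics $\obs_a$ does not actually depend on $a$.

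For the step semantics, where $\obs_a(h) = |h|$ for every $a$, I would take $\sim_*$ to be the total relation on the state space of $\cT'$. Under perfect recall, two paths are then $\sim_*$-equivalent up to position $n$ iff their prefixes have the same length $n{+}1$, which matches the step observation exactly. For the public-history semantics, where $\obs_a(h) = h$, I would take $s \sim_* t$ iff $L'(s) \cap P = L'(t) \cap P$; perfect-recall equivalence then coincides with pointwise equality of the $P$-projected histories, again matching $\obs_a$. In both cases, each standpoint subformula $\StMod{a}{\psi}$ is rewritten as $\overline{K}_*\bigl(\primed{\psi} \wedge \Box \neg \bot_a\bigr)$, and the outer wrapper $\Box \neg \bot_0 \to \primed{\phi}$ from Lemma~\ref{lemma:embedding-pobs-in-LTLK} is retained. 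Since only the equivalence relation and the agent index on the knowledge modality change, and the $\bot_a$-completions are the same as before, the construction remains polynomial-time.

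Correctness proceeds by induction on formula structure, mirroring the argument of Lemma~\ref{lemma:embedding-pobs-in-LTLK}; the only substantive change is that the observation test in the semantic clause of $\StMod{a}{\varphi}$ is matched by $\sim_*$ rather than by the agent-specific $\sim_a$, which by the choices above is correct. Once the reduction is in place, the PSPACE upper bound follows from PSPACE-completeness of LTLK$_1$ model checking under perfect recall \cite{MC-LTLK-and-beyond-2024}, and PSPACE-hardness follows from PSPACE-hardness of LTL model checking \cite{SistlaClarke-JACM85}, since LTL embeds trivially into $\StpLTL$ under either semantics. The main point requiring care is that the $\bot_a$-guards continue to enforce $\cT_a$-consistency even though $\sim_*$ no longer distinguishes agents: one must verify that a witness path inside $\overline{K}_*(\primed{\psi} \wedge \Box \neg \bot_a)$ is simultaneously $\sim_*$-equivalent to the outer history on the past prefix (by perfect recall) and avoids $\bot_a$ forever (by the guard), so that its projection to the $\cT_a$-coordinate is a genuine $\cT_a$-trace whose observation agrees with $\obs_a$ of the outer history, thereby reproducing the semantic clause of $\StMod{a}{\varphi}$ under $\stepmodels$ and $\publicmodels$, respectively.
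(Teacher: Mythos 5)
Your proposal follows essentially the same route as the paper's own proof (Lemma~\ref{lemma:step-public-in-LTLKone}): reuse the product $\cT'$ and the $\bot_a$-guarded translation from Lemma~\ref{lemma:embedding-pobs-in-LTLK}, collapse to a single knowledge agent (total relation for $\stepmodels$, label-based relation for $\publicmodels$), observe that single-agent LTLK formulas have alternation depth $1$, and conclude PSPACE-completeness from the known bounds for LTLK$_1$ and LTL. The only cosmetic difference is that for the public-history case you define the equivalence via $L'(\sigma)\cap P = L'(\theta)\cap P$ rather than full $L'$-equality as in the paper, which is an equally natural (indeed arguably more directly faithful) encoding of $\obs_a(h)=h$.
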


Lemma \ref{lemma:embdding-pobs-in-LTLK} and 
Theorem \ref{thm:step-public-PSPACE-completeness} show that the
$\StpLTL^*$ model checking problem for $*\in \{\step,\public,\pobs\}$
can be viewed as an instance of the LTLK model checking problem.
An analogous statement holds for $\decrmodels{}$ and $\incrmodels{}$.
The idea is to modify the reduction described in the
proof of Lemma \ref{lemma:embdding-pobs-in-LTLK} by adding new
agents 
for all standpoint subformulas $\chi = \StMod{a}{\psi}$ where
$P_a \not= \Obsset_{\chi}$.
This yields a polynomial reduction of the 
$\SLTL{*}{d}$ model checking problem 
to the LTLK$_M$ model checking
problem where $M = \min \{|\Ag|,d\}$ and $*\in \{\incr, \decr\}$.
See Lemma \ref{lemma:embedding-incr-in-LTLK} in the
appendix.
With the above mentioned results of \cite{MC-LTLK-and-beyond-2024}, we can improve
the complexity-theoretic upper bound stated in
Lemma \ref{lemma:N-EXP-incremental-semantics}:

\begin{corollary}
\label{cor:incr-N-1-EXPSPACE}  
  The $\StpLTL^{\tinyincr}$ model checking problem is in
  $(N{-}1)$-EXPSPACE where $N = |\Ag|$.
\end{corollary}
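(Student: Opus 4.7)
The plan is to combine the LTLK embedding for the incremental semantics (the polynomial reduction of Lemma \ref{lemma:embedding-incr-in-LTLK}) with the known complexity bounds of \cite{MC-LTLK-and-beyond-2024} for perfect-recall LTLK model checking. The crucial point is to control the alternation depth of the LTLK formula that the reduction produces.

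First, I would recall how the embedding works for $\incrmodels{}$: each standpoint subformula $\chi = \StMod{a}{\psi}$ is translated into an LTLK modality relative to an agent whose indistinguishability relation corresponds to the observation set $\Obsset_{\chi}^{\tinyincr} = Q_{\chi}^{\tinyincr} \cup P_a \subseteq P$. Distinct values of $\Obsset_{\chi}^{\tinyincr}$ appearing in nested positions along the syntax tree then translate into distinct LTLK agents, so the alternation depth of the resulting LTLK formula is bounded by the number of distinct observation sets encountered along any root-to-leaf branch of standpoint operators in $\phi$.

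Next, I would invoke the monotonicity property already exploited in the proof of Lemma \ref{lemma:N-EXP-incremental-semantics}: along any such branch $w_1, \ldots, w_{\ell}$ of standpoint nodes, the sets $\Obsset_{w_j}^{\tinyincr}$ form a nondecreasing chain in $2^P$, and each strict increase must add at least one new $P_b$ that was not previously contained. Since there are only $N = |\Ag|$ agents, the chain contains at most $N$ distinct sets. Hence the LTLK formula obtained from the embedding has alternation depth at most $M = \min\{N, d\}$, where $d = \ad(\phi)$. Combining this with the result of \cite{MC-LTLK-and-beyond-2024} that LTLK$_M$ model checking is in $(M{-}1)$-EXPSPACE for $M \geqslant 2$ (and PSPACE for $M \leqslant 1$), we obtain the bound $(N{-}1)$-EXPSPACE for $\StpLTL^{\tinyincr}$.

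The main obstacle is ensuring that the embedding in Lemma \ref{lemma:embedding-incr-in-LTLK} faithfully reflects the contextual parameter $Q$ of $\incrmodels{Q}$ as an LTLK agent without artificially inflating the alternation depth: distinct occurrences of $\StMod{a}{\cdot}$ in different scopes yield different observation sets and must be translated as different LTLK agents, yet two scopes with equal $\Obsset^{\tinyincr}$ should share the same agent so as not to generate spurious alternations. Once this bookkeeping is set up correctly, the monotonicity of $\Obsset^{\tinyincr}$ along nested standpoint operators bounds the number of agents actually needed along any branch by $N$, and the complexity bound follows directly from the known LTLK results.
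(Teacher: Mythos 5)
Your proposal is correct and takes essentially the same route as the paper: the corollary is obtained there by combining Lemma \ref{lemma:embedding-incr-in-LTLK} (whose proof introduces LTLK agents for the nested scopes, uses the monotonicity of the observation sets along branches of the syntax tree to bound the alternation depth of the translated formula by $M=\min\{|\Ag|,\ad(\phi)\}$) with the $(M{-}1)$-EXPSPACE bound for LTLK$_M$ from \cite{MC-LTLK-and-beyond-2024}. The only cosmetic difference is that you index the new LTLK agents by the observation sets $\Obsset^{\tinyincr}_{\chi}$, whereas the paper indexes them by the coalitions $A_v \subseteq \Ag$; this is equivalent bookkeeping since the indistinguishability relation of a coalition $A$ depends only on $P_A$ (together with the $\bot$-propositions), and both indexings yield at most $N$ distinct agents per branch.
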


\section{Conclusion}

We considered five different semantics for $\StpLTL$ that differ in the amount of information the agents can extract from the history, and presented a generic model-checking algorithm applicable to all five semantics.
The computational complexity of the algorithm, however, varies between the semantics due to the different numbers of necessary applications of the powerset construction.
More precisely,
the generic $\StpLTL^*$ model checking algorithm 
is $m$-fold exponentially time-bounded
with $m=1$ for $* \in \{\step,\public,\decr\}$, $m = |\Ag|$ for $* = \incr$,
and $m = \ad(\phi)$ for $*=\pobs$. 

Algorithms with improved space complexity for $\stepmodels$, $\publicmodels$,
$\pobsmodels$, and $\incrmodels{}$ are obtained via embeddings into LTLK.
To match the space bounds,
the model-checking algorithm of Section 5 can be adapted 
by combining classical on-the-fly
automata-based LTL model checking techniques with an on-the-fly
construction of history-DFA, 
similar to the
techniques proposed in \cite{MC-LTLK-and-beyond-2024} for CTL*K (for more details, see Appendix \ref{sec:m-minus-1-EXPSPACE}).
Analyzing if analogous techniques are applicable to obtain
a polynomially space-bounded algorithm for $\decrmodels{}$, 
providing lower bounds beyond PSPACE for $\incrmodels{}$ and $\probsmodels$, as well as an experimental evaluation of the presented algorithm remain as future work.


\section*{Acknowledgments}
We thank our colleagues for their valuable discussions. Joël Ouaknine is also part of Keble College, Oxford, as an Emmy.network Fellow. The authors were supported by DFG grant 389792660 as part of TRR 248 (see \url{https://perspicuous-computing.science}) and by BMBF (Federal Ministry of Education and Research) in DAAD project 57616814 (SECAI, School of Embedded and Composite AI) as part of the program Konrad Zuse Schools of Excellence in Artificial Intelligence.

\newpage

\bibliographystyle{plain}
\bibliography{references}


\newpage


\begin{appendix}

\section*{Appendix}

\section{Trace-position versus future-history pairs}

The standard satisfaction relation $\LTLmodels$ of LTL interpretes formulas over
traces. An alternative way often used for extensions of LTL (such as standpoint LTL \cite{SLTL-KR23,Complexity-SLTL-ECAI24}, LTL with past modalities or LTLK \cite{MC-LTLK-and-beyond-2024}) considers trace-position pairs $(\rho,n)\in (2^P)^{\omega}\times \Nat$ and defines the satisfaction relation via
\begin{center}
\begin{tabular}{lcl}
   $(\rho,n) \models true$ \\
   $(\rho,n) \models p$ & iff & $p\in \rho[n]$ \\
   $(\rho,n) \models \varphi_1 \wedge \varphi_2$ & 
    iff & $(\rho,n) \models \varphi_1$ and $(\rho,n) \models \varphi_2$ 
   \\
   $(\rho,n) \models \neg \varphi$ & 
    iff & $(\rho,n) \not\models \varphi$
   \\
   $(\rho,n) \models \neXt \varphi$ & 
    iff & $(\rho,n{+}1) \models \varphi$
   \\
   $(\rho,n) \models \varphi_1 \Until \varphi_2$ & 
    iff & there exists $\ell \in \Nat$ with $\ell \geqslant n$,
   \\
   & & $(\rho,k) \models \varphi_1$ for all 
   $k \in \{n,\ldots,\ell{-}1\}$
   \\
   & & and $(\rho,\ell) \models \varphi_2$ 
\end{tabular}
\end{center}

We use here a semantics over future-history pairs 
$(f,h)\in (2^P)^{\omega}\times (2^P)^+$ with
$\last(h)=\first(h)$.

Obviously, there is a one-to-one correspondence between such 
future-history pairs $(f,h)$ and trace-position pairs $(\rho,n)$,
namely, $(f,h) \mapsto (h \circ f, |h|{-}1)$ and
$(\rho,n) \mapsto (\suffix{\rho}{n},\prefix{\rho}{n})$
where $h \circ f$ arises from the concatenation of $h$ and $f$ by removing the
duplicated symbol $\last(h)=\first(h)$.

Moreover, the semantics of LTL given by the rules
shown in Figure \ref{fig:semantics} and the satisfaction relation over
trace-position pairs is consistent with these 
translations of future-history pairs into
trace-position pairs and vice versa.

\section{Further details and proofs for Section \ref{sec:logic}}

\label{sec:app-logic}


\subsection{Alternation depth}

The \emph{alternation depth} $\ad(\varphi)$ of $\StpLTL$ formula $\varphi$ 
is the maximal number of alternations between standpoint modalities
for different agents in $\varphi$. 
For instance, 
$\ad(\StMod{a}{(p \wedge \neXt \DualStMod{b}{q})})=2$, while
$\ad(\StMod{a}{(p \wedge \neXt \DualStMod{a}{q})})= 
 \ad(\StMod{a}{p} \wedge \neXt \DualStMod{b}{q}) = 1$ 
if $a \not= b$.

\begin{definition}[Alternation depth of $\StpLTL$ formulas]
\label{def:alternation-depth}
{\rm
The alternation depth of $\StpLTL$ formulas is defined inductively:
\begin{itemize}
\item
   $\ad(\true)=\ad(p)=0$ for $p\in P$,
\item
   $\ad(\varphi_1 \wedge \varphi_2)=\ad(\varphi_1 \Until \varphi_2)=
      \max \{\ad(\varphi_1),\ad(\varphi_2)\}$,
\item
   $\ad(\neg \varphi)=\ad(\neXt \varphi)=\ad(\varphi)$.
\end{itemize}
For standpoint formulas $\phi = \StMod{a}{\varphi}$, 
the definition of $\ad(\phi)$ is as follows.
If $\varphi$ is an LTL formula then
$\ad(\StMod{a}{\varphi})= 1$.
Otherwise let $\chi_1=\StMod{b_1}{\psi}, \ldots, \chi_k=\StMod{b_k}{\psi_k}$ 
be the maximal standpoint subformulas of $\varphi$.
We may suppose an enumeration such that $a = b_1 = \ldots = b_{\ell}$ and
$a \notin \{b_{\ell +1},\ldots,b_k\}$.
Then, 
$\ad(\phi)$ is the maximum of
$\max \{ \ad(\chi_i): i =1,\ldots,\ell \}$ and
$\max \{ \ad(\chi_i): i =\ell{+}1,\ldots, k \} + 1$.
  }
\end{definition}


\subsection{Context-dependency of the decremental and incremental semantics}

   Figure \ref{fig:syntax-tree-incd-decr-semantics}  
   serves to 
   illustrate the definition of the decremental and incremental semantics
   using the sets
   $Q^{\incr}_v$ and  $Q^{\decr}_v$ for the nodes in
   the syntax tree of a given formula $\phi$ defined in
   Remark \ref{remark:context-dependency-syntax-tree}.

\begin{figure*}[ht]
	\begin{center}
		\begin{tikzpicture}[->,>=stealth',shorten >=1pt,auto, semithick]
			\tikzstyle{every state} = [text = black]
			
			\node[state,scale=0.75] (root) [fill = cyan!50, inner sep = 0pt] {$\lor$};
			\node[,scale=0.75] (temp) [below of = root, node distance = 1cm] {};
			\node[state,scale=0.75] (a) [left of = temp, node distance = 1.5cm, fill = cyan!50, inner sep = 0pt] {$\StMod{a}{}$}; 
			\node[state,scale=0.75] (b) [right of = temp, node distance = 1.5cm,fill=cyan!50, inner sep = 0pt] {$\StMod{b}{}$}; 
			\node[state,scale=0.75] (p1) [below of = a, node distance = 1.5cm, fill = magenta!50, inner sep = 0pt] {$p$}; 
			\node[state,scale=0.75] (and) [below of = b, node distance = 1.5cm, fill = green, inner sep = 0pt] {$\land$}; 
			\node[,scale=0.75] (temp2) [below of = and, node distance = 1cm, inner sep = 0pt] {};
			\node[state,scale=0.75] (q) [left of = temp2, node distance = 1cm, fill = green, inner sep = 0pt] {$q$}; 
			\node[state,scale=0.75] (next) [right of = temp2, node distance = 1cm, fill = green, inner sep = 0pt] {$\neXt$}; 
			\node[state,scale=0.75] (a2) [below of = next, node distance = 1.5cm , fill =green, inner sep = 0pt] {$\StMod{a}{}$}; 
			\node[state,scale=0.75] (p2) [below of = a2, node distance = 1.5cm, fill = yellow, inner sep = 0pt] {$p$}; 
			
			\path 
			(root) edge (a)
			(root) edge (b)
			(a) edge (p1)
			(b) edge (and)
			(and) edge (q)
			(and) edge (next)
			(next) edge (a2)
			(a2) edge (p2)
			;
			
			\node [right of = root, node distance = 0.75cm,scale=0.8] {$\decrmodels{P}$}; 
			\node [left of = a, node distance = 0.75cm,scale=0.8] {$\decrmodels{P}$}; 
			\node [left of = p1, node distance = 0.75cm,scale=0.8] {$\decrmodels{P_a}$}; 
			\node [right of = b, node distance = 0.75cm,scale=0.8] {$\decrmodels{P}$}; 
			\node [right of = and, node distance = 0.75cm,scale=0.8] {$\decrmodels{P_b}$}; 
			\node [left of = q, node distance = 0.75cm,scale=0.8] {$\decrmodels{P_b}$}; 
			\node [right of = next, node distance = 0.75cm,scale=0.8] {$\decrmodels{P_b}$}; 
			\node [right of = a2, node distance = 0.75cm,scale=0.8] {$\decrmodels{P_b}$}; 
			\node [right of = p2, node distance = 0.82cm,scale=0.8] {$\decrmodels{P_b \cap P_a}$}; 
			
			
			\node[state,scale=0.75] (rootr) [fill = cyan!50, inner sep = 0pt, right of = root, node distance = 8cm] {$\lor$};
			\node[,scale=0.75] (tempr) [below of = rootr, node distance = 1cm] {};
			\node[state,scale=0.75] (ar) [left of = tempr, node distance = 1.5cm, fill = cyan!50, inner sep = 0pt] {$\StMod{a}{}$}; 
			\node[state,scale=0.75] (br) [right of = tempr, node distance = 1.5cm,fill=cyan!50, inner sep = 0pt] {$\StMod{b}{}$}; 
			\node[state,scale=0.75] (p1r) [below of = ar, node distance = 1.5cm, fill = magenta!50, inner sep = 0pt] {$p$}; 
			\node[state,scale=0.75] (andr) [below of = br, node distance = 1.5cm, fill = green, inner sep = 0pt] {$\land$}; 
			\node [,scale=0.75] (temp2r) [below of = andr, node distance = 1cm, inner sep = 0pt] {};
			\node[state,scale=0.75] (qr) [left of = temp2r, node distance = 1cm, fill = green, inner sep = 0pt] {$q$}; 
			\node[state,scale=0.75] (nextr) [right of = temp2r, node distance = 1cm, fill = green, inner sep = 0pt] {$\neXt$}; 
			\node[state,scale=0.75] (a2r) [below of = nextr, node distance = 1.5cm , fill =green, inner sep = 0pt] {$\StMod{a}{}$}; 
			\node[state,scale=0.75] (p2r) [below of = a2r, node distance = 1.5cm, fill = yellow, inner sep = 0pt] {$p$}; 
			
			\path 
			(rootr) edge (ar)
			(rootr) edge (br)
			(ar) edge (p1r)
			(br) edge (andr)
			(andr) edge (qr)
			(andr) edge (nextr)
			(nextr) edge (a2r)
			(a2r) edge (p2r)
			;
			
			\node [right of = rootr, node distance = 0.75cm,scale=0.75] {$\incrmodels{\varnothing}$}; 
			\node [left of = ar, node distance = 0.75cm,scale=0.75] {$\incrmodels{\varnothing}$}; 
			\node [left of = p1r, node distance = 0.75cm,scale=0.75] {$\incrmodels{P_a}$}; 
			\node [right of = br, node distance = 0.75cm,scale=0.75] {$\incrmodels{\varnothing}$}; 
			\node [right of = andr, node distance = .75cm,scale=0.75] {$\incrmodels{P_b}$}; 
			\node [left of = qr, node distance = 0.75cm,scale=0.75] {$\incrmodels{P_b}$}; 
			\node [right of = nextr, node distance = 0.75cm,scale=0.75] {$\incrmodels{P_b}$}; 
			\node [right of = a2r, node distance = 0.75cm,scale=0.75] {$\incrmodels{P_b}$}; 
			\node [right of = p2r, node distance = 0.82cm,scale=0.75] {$\incrmodels{P_b \cup P_a}$}; 
		\end{tikzpicture}
		\caption{Syntax tree for
			$\phi = \StMod{a}{p} \vee 
			\StMod{b}{(q \wedge \neXt \StMod{a}{p})}$
			and the corresponding satisfaction relations
			$\decrmodels{Q}$ and $\incrmodels{Q}$ for the subformulas.}
		\label{fig:syntax-tree-incd-decr-semantics} 
\end{center}
\end{figure*}


\subsection{Properties of $\StpLTL$}

\begin{lemma}[(Cf.~Lemma \ref{obs-a-lemma})]
 \label{app:obs-a-lemma}
    If $h_1, h_2 \in (2^P)^+$ with $\obs_a(h_1)=\obs_a(h_2)$  
    then
    $(*,h_1) \models \StMod{a}{\varphi}$ iff 
    $(*,h_2) \models \StMod{a}{\varphi}$.
\end{lemma}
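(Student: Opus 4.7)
The plan is to observe that the lemma is essentially an immediate unpacking of the clause in Figure~\ref{fig:semantics}: the only way $h$ enters the definition of $(f,h)\models\StMod{a}{\varphi}$ is through the side condition $\obs_a(h)=\obs_a(h')$, so any two histories with the same $a$-observation are interchangeable on the left-hand side.

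First, I would reduce to the $(*,h)$-notation. By Lemma~\ref{future-ind-lemma}, the satisfaction of a standpoint formula $\StMod{a}{\varphi}$ at $(f,h)$ is independent of $f$, so it is legitimate to replace $(f_1,h_1)$ and $(f_2,h_2)$ by $(*,h_1)$ and $(*,h_2)$ and henceforth ignore the future component.

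Next, I would unfold the definition of $(*,h_i)\models\StMod{a}{\varphi}$ for $i=1,2$. In each case it asserts the existence of a witness triple $(h',t,f')$ with $h'\in(2^P)^+$, $t\in\Reach(\cT_a,h')$, $f'\in\Traces^P(\cT_a,t)$, $\last(h')=\first(f')$, $\obs_a(h_i)=\obs_a(h')$, and $(f',h')\models\varphi$. The only clause in which $h_i$ appears is $\obs_a(h_i)=\obs_a(h')$; the inner judgment $(f',h')\models\varphi$ is a statement purely about $h'$ and $f'$. Hence, under the hypothesis $\obs_a(h_1)=\obs_a(h_2)$, the set of admissible witness triples is exactly the same for $i=1$ and for $i=2$, so the two existential statements are logically equivalent.

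The only subtlety to check is uniformity across the five semantics. For $\stepmodels$, $\pobsmodels$, and $\publicmodels$ the observation function $\obs_a$ depends only on $a$, and the argument is immediate. For $\decrmodels{Q}$ and $\incrmodels{Q}$ the inner satisfaction switches to $\decrmodels{Q\cap P_a}$ respectively $\incrmodels{Q\cup P_a}$, and $\obs_a$ is replaced by the context-dependent $\obs_a^Q$; but the parameter $Q$ is determined by the position of the occurrence of $\StMod{a}{\varphi}$ in the formula (Remark~\ref{remark:context-dependency-syntax-tree}), not by the history, and the inner judgment still refers only to $h'$, so the same ``witness-set'' argument applies verbatim. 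I do not anticipate any genuine obstacle; the proof is a one-line syntactic observation about the definitional clause, and the bulk of the write-up is simply verifying that each of the five semantics shares the structural property that $h$ is accessed only through $\obs_a(h)$.
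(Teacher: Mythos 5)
Your proposal is correct and follows essentially the same argument as the paper: unfold the definitional clause for $\StMod{a}{\varphi}$ and observe that any witness $(h',t,f')$ for $h_1$ also serves for $h_2$ since $\obs_a(h_2)=\obs_a(h_1)=\obs_a(h')$, the history entering only through the side condition $\obs_a(h)=\obs_a(h')$. The paper's proof is exactly this one-line witness-transfer observation (stated via symmetry for one direction), so no further comparison is needed.
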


\begin{proof}
By symmetry it suffices to show that
$(*,h_1) \models \StMod{a}{\varphi}$ implies
$(*,h_2) \models \StMod{a}{\varphi}$.
Suppose $(*,h_1) \models \StMod{a}{\varphi}$. 
Hence, there exist $h' \in (2^{P})^+$,
a state $t\in \Reach(\cT_a,h')$ 
and a trace
$f'\in \Traces^P(\cT_a,t)$ such that
$\last(h')=\first(f')$,
$\obs_a(h_1)=\obs_a(h')$ and $(f',h') \models \varphi$.
But then, $\obs_a(h_2)= \obs_a(h_1)=\obs_a(h')$
and $(*,h_2) \models \StMod{a}{\varphi}$.
\end{proof}


\begin{lemma}[Cf.~Lemma \ref{lemma:pobs=decr=incr-for-alternation-depth-1}]
\label{app:lemma:pobs=decr=incr-for-alternation-depth-1}
If $\phi$ is a $\StpLTL$ formula of alternation depth 1 then
$\fT \pobsmodels \phi$ 
iff $\fT \decrmodels{} \phi$ 
iff $\fT \incrmodels{} \phi$, while
$\fT \not\pobsmodels \phi$ 
and $\fT \models \phi$ 
(or vice versa)
is possible where ${\models} \in \{\stepmodels,\publicmodels\}$.
Likewise,
$\fT \not\publicmodels \phi$ 
and $\fT \stepmodels \phi$ (or vice versa) is possible.
\end{lemma}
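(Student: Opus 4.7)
The plan is to establish the three equivalences by a single structural induction on subformulas of $\phi$, and then to exhibit two small single-agent counterexamples for the remaining separations. The strengthened inductive claim is that for every subformula $\psi$ of $\phi$ and every future-history pair $(f,h)$, $(f,h) \pobsmodels \psi$ iff $(f,h) \decrmodels{Q^{\tinydecr}_{\psi}} \psi$ iff $(f,h) \incrmodels{Q^{\tinyincr}_{\psi}} \psi$, where the $Q$'s are the context-dependent parameters from Remark~\ref{remark:context-dependency-syntax-tree}. The Boolean and temporal cases follow directly from the inductive hypothesis, so only the standpoint case $\psi = \StMod{a}{\varphi}$ needs attention.

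The heart of the argument is a side claim: if $\ad(\phi)=1$ and $\psi=\StMod{a}{\varphi}$ is any subformula of $\phi$, then every standpoint ancestor of $\psi$ in the syntax tree of $\phi$ uses the same agent $a$. This is proved by a short induction on $\phi$ using Definition~\ref{def:alternation-depth}: in the only nontrivial case $\phi=\StMod{b}{\phi_1}$, having $\ad(\phi)=1$ forces every maximal standpoint subformula of $\phi_1$ to use agent $b$, which by IH propagates downward. Combined with the inductive rules for $Q^{\tinydecr}_v$ and $Q^{\tinyincr}_v$, this yields that $Q^{\tinydecr}_{\psi}$ (an intersection of $P$ with copies of $P_a$) equals $P$ or $P_a$, and $Q^{\tinyincr}_{\psi}$ (a union of $\varnothing$ with copies of $P_a$) equals $\varnothing$ or $P_a$. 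Either way, $Q^{\tinydecr}_{\psi}\cap P_a = P_a = Q^{\tinyincr}_{\psi}\cup P_a$, so the three semantics quantify over exactly the same histories $h'$ (those with $\proj{h'}{P_a}=\proj{h}{P_a}$), and the parameters fed into the recursive call on $\varphi$ both collapse to $P_a$. The inductive hypothesis on $\varphi$ then closes the case.

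For the separations, I would use two tiny single-agent structures. First, take $P=\{p\}$, $P_a=\varnothing$, $\cT_0$ a single $\{p\}$-labelled state with a self-loop, and $\cT_a$ a single $\varnothing$-labelled state with a self-loop; then $\DualStMod{a}{p}$ holds under $\publicmodels$ (because $h'=h=\{p\}$ forces $\first(f')=\{p\}$) but fails under both $\pobsmodels$ and $\stepmodels$ (which admit the witness $h'=\varnothing$, yielding $\first(f')=\varnothing$). Second, take $P=P_a=\{p\}$, $\cT_0$ with unique trace $\{p\}^\omega$, and $\cT_a$ whose initial $\{p\}$-labelled state branches to a $\{p\}$-self-loop and to a $\varnothing$-self-loop; then $\neXt\DualStMod{a}{p}$ holds under $\publicmodels$ and $\pobsmodels$ (the observed $P_a$-history $\{p\}\{p\}$ leaves only the $\{p\}$-branch) but fails under $\stepmodels$ (the $\{p\}\varnothing$ witness of length $2$ is still available). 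Together, these two examples witness all three inequivalences.

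I do not anticipate any serious obstacle; the only mildly delicate point is the ancestor claim, where Definition~\ref{def:alternation-depth} must be applied carefully to rule out any maximal standpoint subformula of $\phi_1$ using an agent different from $b$ when $\phi = \StMod{b}{\phi_1}$ has $\ad(\phi)=1$. Everything else is routine bookkeeping about $Q$-sets along syntax-tree paths.
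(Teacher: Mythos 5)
Your proposal is correct and follows essentially the same route as the paper: the equivalence rests on the observation that alternation depth 1 forces every nested standpoint modality to use the same agent, so the observation sets under $\pobsmodels$, $\decrmodels{}$ and $\incrmodels{}$ all collapse to $P_a$ (your ancestor claim and induction just make the paper's one-sentence observation explicit), and the separations are shown by tiny single-agent counterexamples. Your concrete examples (using $\DualStMod{a}{p}$ and $\neXt\DualStMod{a}{p}$) differ in detail from the paper's (which use $\StMod{a}{p}$ with mismatched initial-state labels) but both are correct and serve the same purpose.
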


\begin{proof}
The first part follows from the observation that
for every occurrence of a subformula of $\phi$ 
that is in the scope of a standpoint modality
$\StMod{a}{}$, the observation set used to extract information
from the history is $P_a$ under the pure observation-based,
the decremental and the incremental semantics.

To distinguish the pure observation-based from the public semantics,
consider $P=\{p\}$, $P_a=\varnothing$, $\phi = \StMod{a}{p}$ and suppose 
$p \notin L_0(\init_0)$. Then, $\fT \pobsmodels \phi$ 
(as $\cT_a$ has no information on $\cT_0$'s initial state and
may guess the satisfaction of $p$ in the initial state)
and $\fT \not\publicmodels \phi$ 
(as $\cT_a$ has observed that $p$ does not hold in $\cT_0$'s initial state).

To distinguish $\pobsmodels$ and $\publicmodels$ from $\stepmodels$,
regard 
$P=P_a=\{p\}$, $\phi = \StMod{a}{p}$ and suppose
that $p \in L_a(\init_a) \setminus L_0(\init_0)$. 
Then, $\fT \not\pobsmodels \phi$ and $\fT \not\publicmodels \phi$,
while $\fT \stepmodels \phi$.
\end{proof}


\subsection{Embedding of $\StpLTL^{\tinypobs}$ into LTLK}

\begin{lemma}[Cf.~Lemma \ref{lemma:embdding-pobs-in-LTLK}]
	\label{app:lemma:embdding-pobs-in-LTLK}
	\label{app:lemma:embedding-pobs-in-LTLK}
	Given a pair $(\fT,\phi)$ consisting of a $\StpLTL$ structure 
	$\fT = (\cT_0, (\cT_a)_{a\in \Ag})$ and a $\StpLTL$ formula $\phi$, 
	one can construct in polynomial time
	an LTLK structure $\ltlk{\fT} = (\cT',(\sim_a)_{a\in \Ag})$
	and an LTLK formula $\ltlk{\phi}$ such that:
	\begin{enumerate}
		\item [(1)] 
		$\fT \pobsmodels \phi$ if and only if 
		$\ltlk{\fT} \LTLKmodels \ltlk{\phi}$ 
		\item [(2)] 
		$\phi$ and $\ltlk{\phi}$ have the same alternation depth.
	\end{enumerate}
\end{lemma}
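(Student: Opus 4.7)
The plan is to encode the multi-transition-system structure $\fT$ as a single LTLK structure via a synchronous product, with each indistinguishability relation $\sim_a$ determined by the observation set $P_a$. The standpoint modality $\StMod{a}{\cdot}$ under $\pobsmodels$ naturally corresponds to the dual knowledge modality $\overline{K}_a$: both quantify existentially over continuations indistinguishable (to agent $a$) from the current one. The main complication is that in $\fT$ each agent has its own transition system $\cT_a$, and when $\StMod{a}{\varphi}$ is evaluated the witness $(h',t,f')$ lives in $\cT_a$ rather than in $\cT_0$, so the product must run $\cT_0$ and every $\cT_a$ in parallel, with $\sim_a$ observing only the $\cT_a$-component.

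Concretely, I would first complete each $\cT_a$ (and $\cT_0$) to $\cT_a^{\bot}$ by adjoining fresh fallback states $\bot_a^Q$ labeled $Q \cup \{\bot_a\}$, one per subset $Q \subseteq P_a$ that needs to be available as a matching partner. The completion guarantees that every synchronous execution is well-defined even when no genuine state of $\cT_a$ carries the required $P_a$-label. I would then form the synchronous product $\cT'$ restricted to tuples whose component labels agree pairwise on each $P_a$, with $\sigma \sim_a \theta$ iff the components agree on the atoms in $P_a \cup \{\bot_a\}$. The translation is identity on LTL connectives and $\primed{(\StMod{a}{\varphi})} = \overline{K}_a(\primed{\varphi} \wedge \Box \neg \bot_a)$, and I would set $\ltlk{\phi} = \Box \neg \bot_0 \to \primed{\phi}$. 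Claim (2) follows by inspection of this translation, since standpoint modalities map one-to-one to knowledge modalities with the same agents.

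The main obstacle is the semantic correctness claim (1), which I would establish by strengthening the induction: for every subformula $\chi$ of $\phi$, every initial path $\pi$ in $\cT'$, and every $n \in \Nat$, show $(\pi,n) \LTLKmodels \primed{\chi}$ iff $(f,h) \pobsmodels \chi$, where $h$ and $f$ are the $P$-projections of the labels of $\prefix{\pi}{n}$ and $\suffix{\pi}{n}$. The LTL cases are routine. For $\chi = \StMod{a}{\varphi}$, the forward direction takes a $\pobsmodels$-witness $(h',t,f')$ in $\cT_a$ and lifts it to a $\cT'$-path $\pi'$ whose $a$-component witnesses $(h',t,f')$, whose $\cT_0$-component uses $\bot_0$-fallbacks at any position where no genuine $\cT_0$-state carries the needed labeling, and whose other agents' components similarly fall back to $\bot_b$-states; by construction $\pi \sim_a \pi'$ through position $n$ and $\pi'$ satisfies $\Box \neg \bot_a$. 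The reverse direction extracts from any $\sim_a$-equivalent $\cT'$-path satisfying $\Box \neg \bot_a$ a genuine $\cT_a$-trace whose $P_a$-projection agrees with $h$. The delicate point, and what dictates the shape of the translation, is nested standpoint modalities: when unfolding an outer $\overline{K}_a$, the witness path $\pi'$ may have its $\cT_0$-component in $\bot_0$-states, so the inductive hypothesis cannot rely on a global $\Box \neg \bot_0$ restriction; the inner $\Box \neg \bot_b$ guards are exactly what reinstate, for each nested modality, the restriction needed to apply the induction hypothesis to the next subformula.
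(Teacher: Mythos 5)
Your proposal is correct and follows essentially the same route as the paper's proof: completion of each $\cT_a$ (and $\cT_0$) with $\bot_a^Q$-fallback states, a synchronous product with $\sim_a$ defined by agreement on $P_a \cup \{\bot_a\}$, the translation $\primed{(\StMod{a}{\varphi})} = \overline{K}_a(\primed{\varphi} \wedge \Box\neg\bot_a)$ with $\ltlk{\phi} = \Box\neg\bot_0 \to \primed{\phi}$, and the same strengthened structural induction over path-position pairs. The only detail the paper adds is a preliminary normalization ensuring the initial-state labelings of $\cT_0$ and the $\cT_a$'s are consistent (prepending a fresh initial state and replacing $\phi$ by $\neXt\phi$), which your construction implicitly needs so that the product's initial state exists.
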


\begin{proof}
	We may assume w.l.o.g. that the labelings of the initial states
	in $\cT_0$ and the $\cT_a$'s are consistent in the sense that
	$L_0(\init_0) \cap P_a = L_a(\init_a)$ for each $a \in \Ag$.
	If this is not the case, we extend $\cT_a$ for $a\in \Ag \cup \{0\}$ by
	a fresh initial state $\init_a'$ with $L_a(\init_a') = \varnothing$ and
	transitions $\init_a' \to_a \init_a$ and replace the given $\StpLTL$ formula
	$\phi$ with $\neXt \phi$.
	
	For each $a \in \Ag \cup \{0\}$, 
	we first switch from $\cT_a$ to its completion $\cT_a^{\bot}$ which extends
	$\cT_a$ by states $\bot_a^Q$ for $Q \subseteq P_a$.
	The state space of $\cT_a^{\bot}$ is 
	$S_a^{\bot} = S_a \cup \{\bot_a^Q : Q \subseteq P_a\}$, and the initial state is $\init_a$.
	The transition relation of $\cT_a^{\bot}$ extends $\to_a$ by:
	\begin{itemize}
		\item
		$s_a \to_a \bot_a^Q$ if there is no state $s'$ in $\cT_a$ with
		$L_a(s')=Q$ and $s \to_a s'$ 
		\item
		$\bot_a^R \to_a \bot_a^Q$ for all $R, Q \subseteq P_a$.
	\end{itemize}
	The set of atomic propositions in $\cT_a^{\bot}$ is 
	$P_a^{\bot}=P_a \cup \{\bot_a\}$.
	The labeling of the original states is unchanged, i.e.,
	$L_a^{\bot}(s)=L_a(s)$, while
	for the fresh states $L_a^{\bot}(\bot_a^Q)=Q \cup \{\bot_a\}$.
	
	Thus, for each word $\rho = L_a(\init_a)\rho'$, where  $\rho'\in (2^{P_a})^{\omega}$ there exists a trace 
	$f \in \Traces(\cT^{\bot}_a)$
	with $\proj{f}{P_a}=\rho$. 
	(In this sense, 
	$\cT_a^{\bot}$ is the completion of $\cT_a$.) 
	Moreover, all $\rho \in \Traces(\cT_a^{\bot})$ satisfy the LTL formula
	$\Box \neg \bot_a \vee \Diamond \Box \bot_a$.
	The original traces of $\cT_a$ can be recovered from $\cT_a^{\bot}$
	as we have:
	\begin{center}
		$\Traces(\cT_a,s) \ = \ 
		\bigl\{ \proj{\rho}{P_a} :  
		\rho \in \Traces(\cT_a^{\bot},s), \ \rho \LTLmodels \Box \neg \bot_a  
		\bigr\}$
	\end{center}
	
	The transition system $\cT' = (S',\to',\init',P',L')$ 
	of the constructed LTLK structure $\ltlk{\fT}$
	arises by the synchronous product of
	the transition systems $\cT_a^{\bot}$. 
	That is, the state space $S'$ of $\cT'$ consists of all
	tuples $(s,(s_a)_{s\in \Ag}) \in S_0^{\bot} \times \prod_{a\in \Ag} S_a^{\bot}$ 
	such that $L_0^{\bot}(s) \cap P_a = L_a^{\bot}(s_a)\cap P_a$ for each $a \in \Ag$.
	The set $P'$ of atomic propositions is 
	$P'=P \cup \{\bot_a : a \in \Ag \cup \{0\}\}$.
	The labeling function is given by 
	$L'(s, (s_a)_{a\in \Ag}) = L_0(s) \cup 
	\{\bot_a : a \in \Ag \cup \{0\}, \bot_a \in L_a^{\bot}(s_a)  \}$.
	The initial state of $\cT'$ is 
	$\init' = (\init_0, (\init_a)_{a\in \Ag})$. By the assumption that
	the labelings of the initial states in $\fT$ are consistent, $\init'$ is indeed
	an element of $S'$.
	The transition relation $\to'$ in $\cT'$ is defined as follows. If
	$\sigma = (s,(s_a)_{a\in \Ag})$ and $\theta =(t,(t_a)_{a\in \Ag})$ are
	elements of $S'$ then
	$\sigma \to' \theta$
	if and only if $s \to_0 t$ in $\cT_0^{\bot}$ and $s_a \to_a t_a$ in 
	$\cT_a^{\bot}$.
	For $a \in \Ag$, the equivalence relation $\sim_a$ on $S'$ is given by:
	$\sigma \sim_a \theta$ if and only if 
	$L'(\sigma) \cap P_a^{\bot} = L'(\theta) \cap P_a^{\bot}$.
	By construction, for each trace $\rho \in (2^P)^{\omega}$ there
	is at least one path $\pi$ in $\cT'$ with $\proj{\trace(\pi)}{P}=\rho$.

	The essential idea for
	translating the given $\StpLTL$ formula $\phi$ into an ``equivalent''
	LTLK formula $\ltlk{\phi}$ is to identify 
	$\DualStMod{a}{\varphi}$ with 
	$K_a (\Box \neg \bot_a \, \to \, \varphi)$.
	The precise definition of $\ltlk{\phi}$ is as follows.
	We first provide a translation $\varphi \leadsto \primed{\varphi}$
	of $\StpLTL$ formulas $\varphi$ into LTLK formulas $\primed{\varphi}$
	by structural induction:
	If $\varphi \in \{\true\} \cup P$ then $\primed{\varphi}=\varphi$.
	In the step of induction, we put
	$\primed{(\neg \varphi)}=\neg \primed{\varphi}$, 
	$\primed{(\varphi \wedge \psi)} = \primed{\varphi}\wedge \primed{\psi}$,
	$\primed{(\neXt \varphi)}=\neXt \primed{\varphi}$,
	$\primed{(\varphi \Until \psi)} = \primed{\varphi} \Until \primed{\psi}$ and
	$\primed{(\StMod{a}{\varphi})} = 
	\overline{K}_a (\primed{\varphi} \wedge \Box \neg \bot_a)$
	where $\overline{K}_a$ is the dual of $K_a$ given by
	$\overline{K}_a \psi = \neg K_a \neg \psi$.
	Finally, we define $\ltlk{\phi}= \Box \neg \bot_0 \to \primed{\phi}$.
	
	Obviously, the translation of $(\fT,\phi)$ 
	into $(\ltlk{\fT},\ltlk{\phi})$ is polynomial.
	It remains to show that $\fT \pobsmodels \phi$ 
	iff $\ltlk{\fT} \LTLKmodels \ltlk{\phi}$. 
	By structural induction we obtain that for each path $\pi$ in $\cT'$
	and each $n \in \Nat$ we have:
	\begin{center}
		$(\pi,n) \LTLKmodels \primed{\varphi}$
		\ iff \ 
		$(\proj{\trace(\suffix{\pi}{n})}{P},
		\proj{\trace(\prefix{\pi}{n})}{P}) \pobsmodels \varphi$
	\end{center}
	where $\LTLKmodels$ refers to the LTLK semantics over the
	structure $\ltlk{\fT}$ and
	$\pobsmodels$ to the pure observation-based semantics of $\StpLTL$
	over the original $\StpLTL$ structure $\fT$.
	Then, $\ltlk{\fT} \LTLKmodels \ltlk{\phi}$ iff 
	$(\pi,0) \LTLKmodels \Box \neg \bot_0 \to \primed{\phi}$
	for every path $\pi \in \Paths(\cT')$
	iff
	$(\pi,0) \LTLKmodels \primed{\phi}$
	for every path $\pi \in \Paths(\cT')$ 
	with $\trace(\pi) \LTLmodels \Box \neg \bot_0$
	iff
	$(f,\first(f)) \pobsmodels \phi$ for every
	trace $f \in \Traces(\cT_0)$
	iff $\fT \pobsmodels \phi$.
	Here, we use the fact that $\Traces(\cT_0)$ equals 
	$\{\proj{\trace(\pi)}{P} : 
	\pi \in \Paths(\cT'), \trace(\pi) \LTLmodels \Box \neg \bot_0 \}$.
\end{proof}


\section{Further details and proofs for Section \ref{sec:model-checking}}

\label{sec:app-mc}

\subsection{Transition system $\cT_a^R$}

As explained in Section \ref{sec:model-checking}, the treatment
of the incremental and public-history semantics requires the
switch from the transition systems $\cT_a$ over $P_a$ to transition systems
$\cT_a^R$ for some superset $R$ of $P_a$.
The behavior of $\cT_a^R$ is the same as $\cT_a$, except that $\cT_a^R$
additionally makes nondeterministic guesses for the truth values
of the atomic propositions in $R \setminus P_a$.

\begin{definition}
\label{def:T-a-R}
{\rm
Let  $a \in \Ag$ and $R \in 2^P$ such that $P_a \subseteq R$.
Then,
$\cT_a^R = (S_a^R,\to_a^R,\Init_a^R,R,L_a^{R})$ is defined as follows.
\begin{itemize}
\item
  The state space $S_a^R$  
consists of the pairs $(s,O)$
with $s \in S_a$ and $O \subseteq R$ such that
$L_a(s) = O \cap P_a$.  
\item
  The set of initial states is $\Init_a^R = \{ (\init_a, O) :
   L_a(\init_a) = O \cap P_a \}$.  
\item
 The transition relation is defined as follows.
  If $(s,O), (s',O') \in S_a^R$ then
$(s,O) \to_a^R (s',O')$ if and only if $s \to_a s'$.
\item
  The labeling function is given by
$L_a^{R}(s,O)= O$.
\end{itemize}
  }
\end{definition}

Obviously, we then have
$\Traces^P(\cT_a^{R},(s,O)) \ = \ 
  \bigl\{ f \in \Traces^P(\cT_a,s) : \first(f) \cap R =O \bigr\}
$
for each state $s\in S_a$ and each $O \subseteq R$ with
$O \cap P_a=L_a(s)$. 
Furthermore, for each history $h \in (2^P)^+$:
$\Reach(\cT_a^R,h) \ = \ 
  \bigl\{ (s,O) : s \in \Reach(\cT_a,h), O = \last(h) \cap R \bigr\}$.


\subsection{LTL model checking subroutines}

\label{sec:LTL-MC}

For both the basis and the step of induction, we rely on (a mild variant of)
standard
LTL model checking techniques \cite{VarWolp-LICS86,VarWolp-InfComp94}
to evaluate the arguments $\varphi$ 
of standpoint subformulas $\StMod{a}{\varphi}$.
We state this in a uniform manner in Lemma \ref{lemma:MC-LTL-subformulas}
for arbitrary transition systems $\cT$ over a set of atomic propositions
$R$ and LTL formulas $\varphi$ over $\cP$ where $R$ is a subset of $\cP$.
Then, $(\cT,R,\cP)= (\cT_a^{R},R,P)$ in the basis of induction and
$(\cT,R,\cP) = (\cT_{\chi},P_{\chi},P\cup \{p_1,\ldots,p_k\})$ 
in the step of induction where the $p_i$'s are the fresh atomic
propositions that are introduced for the maximal standpoint
subformulas of $\varphi$.

\begin{lemma}
 \label{lemma:MC-LTL-subformulas}
Let $\cP$ be a set of atomic propositions and $R \subseteq \cP$.
Given a transition system $\cT = (S,\to,\Init,R,L)$ and
an LTL formula $\varphi$ over $\cP$,
the set 
\begin{center}
  $\Sat_{\cT}(\exists \varphi)
  \ = \ 
  \bigl\{ s \in S : 
           \exists f \in \Traces^{\cP}(\cT,s) \text{ s.t. } f\LTLmodels \varphi
  \bigr\}$
\end{center}
is computable in time exponential in the
length of $\varphi$ and polynomial in the size of $\cT$. 
Furthermore, the problem to decide whether
$\Init \cap \Sat_{\cT}(\exists \varphi) \not= \varnothing$ is in $\PSPACE$.
\end{lemma}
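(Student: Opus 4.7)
The plan is to reduce the claim to standard LTL model checking of \cite{VarWolp-LICS86,VarWolp-InfComp94}, accommodating the gap between the labeling alphabet $R$ of $\cT$ and the possibly larger proposition set $\cP$ of $\varphi$ by an auxiliary construction that nondeterministically guesses the truth values of propositions in $\cP \setminus R$.

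First I would construct an auxiliary transition system $\cT' = (S', \to', \Init', \cP, L')$ over the full set $\cP$ as follows. Its state space is $S' = \{(s, X) : s \in S, \ X \subseteq \cP \setminus R\}$, with initial states $\Init' = \{(s, X) : s \in \Init, \ X \subseteq \cP \setminus R\}$, with transitions $(s, X) \to' (s', X')$ iff $s \to s'$ in $\cT$ (for any $X, X' \subseteq \cP \setminus R$), and with labeling $L'(s, X) = L(s) \cup X$. By construction, $|S'| \leqslant |S| \cdot 2^{|\cP \setminus R|}$, and for every $s \in S$ and every $X \subseteq \cP \setminus R$ we have $\Traces(\cT', (s, X)) = \{f \in \Traces^{\cP}(\cT, s) : f[0] \cap (\cP \setminus R) = X\}$. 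Consequently
\begin{center}
$s \in \Sat_{\cT}(\exists \varphi) \ \text{iff} \ \exists X \subseteq \cP \setminus R : (s, X) \in \Sat_{\cT'}(\exists \varphi)$.
\end{center}

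Next I would apply the classical automata-based approach: build a nondeterministic B\"uchi automaton $\cA_\varphi$ over the alphabet $2^\cP$ with $2^{O(|\varphi|)}$ states accepting exactly the traces satisfying $\varphi$, form the synchronous product $\cT' \times \cA_\varphi$, and compute by standard graph algorithms the set $W$ of product states from which an accepting lasso is reachable. Then $\Sat_{\cT'}(\exists \varphi)$ equals the set of those $(s, X)$ for which some $(s, X, q_0) \in W$ with $q_0$ an initial state of $\cA_\varphi$ exists; projecting on the first component yields $\Sat_{\cT}(\exists \varphi)$. The total running time is polynomial in $|\cT'| \cdot |\cA_\varphi|$, hence polynomial in $|\cT|$ and exponential in $|\varphi|$, since the extra propositions in $\cP \setminus R$ occur in $\varphi$ and therefore $|\cP \setminus R| \leqslant |\varphi|$.

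For the PSPACE bound on deciding whether $\Init \cap \Sat_\cT(\exists \varphi) \neq \varnothing$, I would invoke the on-the-fly variant of the above procedure due to Sistla, Vardi and Wolper \cite{SistlaClarke-JACM85,VarWolp-LICS86}. The nondeterministic algorithm guesses an initial state $s \in \Init$, a set $X \subseteq \cP \setminus R$, and an initial state of $\cA_\varphi$; then, step by step, guesses successors in $\cT$, fresh guesses of $X' \subseteq \cP \setminus R$, and successor states in $\cA_\varphi$, storing only the current triple and sufficient counters/markers to detect an accepting lasso. Each triple has size polynomial in $|\cT| + |\varphi|$, so the procedure runs in nondeterministic polynomial space; by Savitch's theorem this gives PSPACE. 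The only non-routine observation is that the nondeterministic guessing of $X$ at each step faithfully enumerates all of $\Traces^{\cP}(\cT, s)$, which is immediate from the definition of $\cT'$ above; no substantial obstacle arises.
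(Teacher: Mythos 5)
Your proposal is correct and follows essentially the same route as the paper: automata-based LTL model checking where the truth values of the propositions in $\cP \setminus R$ are guessed nondeterministically, followed by an accepting-lasso search for the time bound and an on-the-fly nondeterministic search for the PSPACE bound. The only (cosmetic) difference is that you materialize the guessing in an intermediate transition system $\cT'$ with states $(s,X)$, whereas the paper folds the choice of the letter $H \subseteq \cP$ with $H \cap R = L(s)$ directly into the transition relation of the product $\cT \bowtie \cA_\varphi$, which avoids the factor $2^{|\cP\setminus R|}$ in the state space (harmless in your version once, as you note, attention is restricted to the propositions actually occurring in $\varphi$).
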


\begin{proof}
We apply standard techniques to 
build a nondeterministic B\"uchi automaton (NBA) 
$\cA = \cA_{\psi} = (Y,\delta_{\cA},\init_{\cA},F)$ 
for $\psi$ over the alphabet $2^{\cP}$. That is, the accepted language
$\cL(\cA)$ of $\cA$ are the infinite words $f \in 2^{\cP}$ such that $f \LTLmodels \psi$.
We now consider the product 
$\cT \bowtie \cA = (S',\to'\nolinebreak,\Init',Y,L')$ 
defined as follows:
\begin{itemize}
\item
  The state space $S'$ is $S \times Y$.
\item
  The transition relation ${\to}' \, \subseteq \, (S \times Y)^2$ is defined 
  as follows.
  If $(s,y), (s',y')\in S \times Y$ then
  $(s,y) \to' (s',y')$ iff 
        there exists $H \subseteq \cP$
         such that 
         $s \to s'$, $H \cap R=L(s)$ and $y'\in \delta_{\cA}(y,H)$.
\item
  The labels of the states are mostly irrelevant, we only 
  need to distinguish states
  $(s,y)$ where $y \in F$ from those where $y \notin F$.
  Formally, we can deal with the state space of $\cA$ as atomic propositions
  and
  the labeling function $L' : S' \to Y$ given by
  $L'(s,y) = \{y\}$.
\item
  The set of initial states is
  $\Init' = \bigcup_{s\in S} \Init_{\cT \bowtie \cA}(s)$
  where for $s \in S$,
  $
   \Init_{\cT\bowtie \cA}(s) \ = \ 
   \bigl\{ \ (s,y) \ : \ \exists H \subseteq \cP \text{ s.t. }
             H \cap R = L(s), \
             y \in \delta_{\cA}(\init_{\cA},H) \ 
   \bigr\}$.
\end{itemize}
We then apply standard graph algorithms (backward search from the SCCs)
to determine the set 
$\Sat_{\cT \bowtie \cA}(\exists \Box \Diamond F)$ 
consisting of all states
$(s,y)$ in $\Init'$
that have a path visiting
$S \times F$ infinitely often.
Then, $\Sat_{\cT}(\exists \varphi)$ is the set
of all states $s$ in $\cT$
with
$\Init_{\cT\bowtie \cA}(s) \cap \Sat_{\cT \bowtie \cA}(\exists \Box \Diamond F) \neq \varnothing$.
Thus, $\Sat_{\cT}(\exists \varphi)$ 
is computable with known techniques 
in time exponential in the length of $\varphi$ and polynomial in the
size of $\cT$.
The polynomial space bound for the decision problem 
can be obtained with the same techniques
as for standard automata-based
LTL model checking \cite{VarWolp-LICS86,VarWolp-InfComp94}.
\end{proof}


\subsection{Soundness of the default history-DFA}

\label{sec:soundness-history-DFA}

\begin{lemma}
 \label{lemma:DFA-construction}
 \label{lemma:properties-default-history-DFA}
Let $\cD = \pow(\cT,\Obsset,U)$ be as
in Definition \ref{def:default-history-DFA}.
Then: 
\begin{enumerate}
\item [(a)]
  For $h \in (2^P)^+$,
  $\delta_{\cD}(\init_{\cD},h)$ 
  equals the set of all states $s$ in $\cT$ 
  such that
  there exist an initial, finite path $\pi$ in $\cT$
  to $s$ and a word $h' \in (2^P)^+$ with
  $\trace(\pi)=\proj{h'}{R}$ and
  $\proj{h}{\Obsset} = \proj{h'}{\Obsset}$.

\item [(b)]
  $\cD$ is $\Obsset$-deterministic, i.e.,
  if $h_1,h_2 \in (2^P)^+$ with 
  $\proj{h_1}{\Obsset}=\proj{h_2}{\Obsset}$ then
  $\delta_{\cD}(\init_{\cD},h_1)=\delta_{\cD}(\init_{\cD},h_2)$.

\item [(c)]
  The language $\cL(\cD)$ of $\cD$
  equals the set of all words $h \in (2^P)^+$ such that 
  there exist $h'\in (2^P)^+$ 
  with $\proj{h}{\Obsset}=\proj{h'}{\Obsset}$ and
  $U \cap \Reach(\cT,h') \not= \varnothing$.
\end{enumerate}
\end{lemma}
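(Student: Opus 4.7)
The plan is to prove part (a) by induction on the length $n$ of the history $h$, and then derive parts (b) and (c) as essentially immediate consequences. Part (a) is the workhorse because it gives an intrinsic, trace-level characterization of the subset-states reachable in $\cD$, and both determinism and the language characterization are just shadows of this.

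For part (a), let me denote by $X_n(h)$ the right-hand side, i.e., the set of last states of initial finite paths $\pi$ in $\cT$ of length $n$ for which some $h' \in (2^P)^n$ satisfies $\trace(\pi) = \proj{h'}{R}$ and $\proj{h}{\Obsset} = \proj{h'}{\Obsset}$. In the base case $h = H_0$, unfolding the definition of $\delta_{\cD}(\init_{\cD}, H_0)$ gives exactly the set of $s \in \Init$ with $L(s) \cap \Obsset = H_0 \cap \Obsset$, which coincides with $X_1(H_0)$ by taking $h' = (L(s) \cap R \setminus \Obsset) \cup (H_0 \cap \Obsset)$, say, and verifying the projection constraints; conversely any witnessing $h'$ forces $L(s) \cap \Obsset = H_0 \cap \Obsset$. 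For the inductive step, write $h = h_0 \cdot H$ with $|h_0| = n$. Given $s' \in \delta_{\cD}(\init_{\cD}, h)$ we get $s \in \delta_{\cD}(\init_{\cD}, h_0)$ with $s \to s'$ and $L(s') \cap \Obsset = H \cap \Obsset$; by IH we obtain $\pi$ to $s$ and $h_0'$ matching $h_0$ on $\Obsset$, and we extend by one step using $h' = h_0' \cdot H'$ where $H' \subseteq P$ is any set with $H' \cap R = L(s')$ and $H' \cap \Obsset = H \cap \Obsset$ (which exists since $L(s') \cap \Obsset = H \cap \Obsset$ and $\Obsset \cap R \subseteq R$, so we can define $H'$ by its coordinates on $R$ and $\Obsset \setminus R$ separately). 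The reverse direction uses the same bookkeeping in the other direction.

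Part (b) is then immediate: the characterization of $\delta_{\cD}(\init_{\cD}, h)$ in (a) only references $h$ through $\proj{h}{\Obsset}$, so any two histories with equal observation projection yield equal sets. Part (c) follows by chaining (a) with the definition of $F_{\cD}$: $h \in \cL(\cD)$ iff $\delta_{\cD}(\init_{\cD}, h) \cap U \neq \varnothing$, which by (a) is equivalent to the existence of a state $s \in U$ reached via an initial path $\pi$ in $\cT$ and some $h' \in (2^P)^+$ with $\trace(\pi) = \proj{h'}{R}$ and $\proj{h}{\Obsset} = \proj{h'}{\Obsset}$; but $s \in \Reach(\cT, h')$ is by definition exactly the existence of such a $\pi$, so this rewrites to $U \cap \Reach(\cT, h') \neq \varnothing$ for some $h'$ with $\proj{h'}{\Obsset} = \proj{h}{\Obsset}$.

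The only slightly delicate point is the bookkeeping in the inductive step of (a): one has to be careful that the choice of the $R$-coordinates and the $\Obsset$-coordinates of the extending letter $H'$ are compatible, i.e., that $\Obsset$ and $R$ may overlap and one must ensure $H' \cap R \cap \Obsset$ agrees under both constraints. This is where the condition $L(s') \cap \Obsset = H \cap \Obsset$ (which holds on $\Obsset \cap R$ because $L(s') \subseteq R$ and $H \cap \Obsset$ is fixed) is exactly what is needed. Beyond that, everything is a routine unpacking of the definitions; no nontrivial combinatorial argument is required.
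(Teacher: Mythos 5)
Your proposal follows essentially the same route as the paper's proof: part (a) by induction on $|h|$, part (b) as an immediate consequence of the fact that the characterization in (a) depends on $h$ only through $\proj{h}{\Obsset}$, and part (c) by combining (a) with the definitions of $F_{\cD}$ and $\Reach(\cT,h')$. Your bookkeeping for the extending letter $H'$ (using that $L(s')\cap\Obsset = H\cap\Obsset$, with $\Obsset\subseteq R$ as in all uses of the construction where $R_\chi = P_a\cup\Obsset_\chi$) fills in exactly the details the paper leaves implicit, so the argument is correct.
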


\begin{proof}
Statement (a) can be shown
by induction on the length of $h$.
Statement (b) is a direct consequence of (a).

To prove statement (c),
let $\Hist_{\cT}(U)$ denote the set of all histories $h \in (2^P)^+$ such that 
  there exist $h'\in (2^P)^+$ 
  with $\proj{h}{\Obsset}=\proj{h'}{\Obsset}$ and
  $U \cap \Reach(\cT,h') \not= \varnothing$.
The task is to show $\Hist_{\cT}(U)=\cL(\cD)$.

Suppose first $h\in \Hist_{\cT}(U)$. Let $x = \delta_{\cD}(\init_{\cD},h)$.
We have to show that $x \in F_{\cD}$.
As $h \in \Hist_{\cT}(U)$ there exist $h'\in (2^P)^+$ 
with $\proj{h}{\Obsset}=\proj{h'}{\Obsset}$
and a state $s \in U \cap \Reach(\cT,h')$. 
By definition of $\Reach(\cT,h')$, there exists an initial, finite path
$\pi$ in $\cT$ such that $\last(\pi)=s$ and
$\trace(\pi)=\proj{h'}{R}$.
By statement (a), this implies $s\in x$. 
By definition of $F_{\cD}$ and because $s \in U$, we obtain $x\in F_{\cD}$.

Vice versa, suppose that $h\in \cL(\cD)$. Thus,
$\delta_{\cD}(\init_{\cD},h) \in F_{\cD}$.
By the definition of $F_{\cD}$, there is a state
$s \in \delta_{\cD}(\init_{\cD},h) \cap U$.
Statement (a) implies the existence of
an initial, finite path $\pi$ in $\cT$ 
and a history $h' \in (2^P)^+$ with  $\last(\pi)=s$,
$\trace(\pi)= \proj{h'}{R}$ and
$\proj{h}{\Obsset} = \proj{h'}{\Obsset}$.
But then $s \in \Reach(\cT,h')$.
This yields $h \in \Hist_{\cT}(U)$.
\end{proof}


\begin{lemma}
\label{lemma:properties-T-chi}
 Let $\chi = \StMod{a}{\varphi}$, let
 $\cT_{\chi}$, $\cT_{\chi}'$ be as in the
 step of induction of the model checking procedure and let $R = R_{\chi}$. Then:
 \begin{enumerate}
 \item [(a)] 
    If $h, h' \in (2^{\cP})^*$ with $\proj{h}{P}=\proj{h'}{P}$ then
    $\Reach(\cT_{\chi},h)=\Reach(\cT_{\chi},h') = \Reach(\cT_{\chi}',\proj{h}{P})$.
 \item [(b)]
    If $h \in (2^P)^+$ then
    $\Reach(\cT_{\chi}',h)$ equals the set of all states $((s,O),x_1,\ldots,x_k)$
    in $\cT_{\chi}'$ (or $\cT_{\chi}$)
    such that 
    $s \in \Reach(\cT_a,h)$, $O =\last(h) \cap R$,
    and there exists $h'\in (2^P)^+$
    with $\obs_{\chi}(h)=\obs_{\chi}(h')$ and $x_i=\delta_i(\init_i,h')$
    for all $i\in \{1,\ldots ,k\}$.
 \item [(c)]
    If $z = ((s,O),x_1,\ldots,x_k)$ is a state in $\cT_{\chi}'$ then
     $\Traces(\cT_{\chi}',z) =  \Traces(\cT_a^R,(s,O))$.
 \end{enumerate}
\end{lemma}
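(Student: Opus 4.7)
My plan is to prove the three parts in order, exploiting as the central structural fact that $\cT_{\chi}$ and $\cT_{\chi}'$ share their state spaces and transition relations, differing only in labelings: for every state $z = ((s,O),x_1,\ldots,x_k)$ we have $L_{\chi}'(z) = O$ and $L_{\chi}(z) = O \cup \{p_i : x_i \in F_i\}$ with $\{p_1, \ldots, p_k\} \cap P = \varnothing$. Hence $\trace_{\cT_{\chi}'}(\pi) = \proj{\trace_{\cT_{\chi}}(\pi)}{R}$ for every path $\pi$, and the transition rule of $\cT_{\chi}$ depends on a history only through the $R$-part of its labels.

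For (a), this observation shows that an initial path of $\cT_{\chi}$ matches $h$ iff its $\cT_{\chi}'$-view matches $\proj{h}{P}$; a short induction on $|h|$ then yields both equalities.

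For (b), I would analyse initial paths $\pi = z_0 z_1 \ldots z_n$ of $\cT_{\chi}'$ with $\trace(\pi) = \proj{h}{R}$ coordinate by coordinate, writing $z_i = ((s_i, O_i), x_{i,1}, \ldots, x_{i,k})$. The trace constraint forces $O_i = h[i] \cap R$, so in particular $O_n = \last(h) \cap R$. Projecting to the $\cT_a^R$-coordinate and then to $\cT_a$ yields an initial finite path in $\cT_a$ ending at $s_n$ whose trace equals $\proj{h}{P_a}$, hence $s_n \in \Reach(\cT_a, h)$. Projecting to each DFA coordinate, the initial-state clause of $\Init_{\chi}$ together with the transition rule supplies a sequence of guessed labels in $2^P$ which, once concatenated in order, forms a word $h' \in (2^P)^+$ with $\obs_{\chi}(h) = \obs_{\chi}(h')$ and $x_{n,j} = \delta_j(\init_j, h')$ for each $j$. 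For the converse direction I would lift any $(s, h')$ satisfying the stated conditions back to $\cT_{\chi}'$: choose an initial finite path in $\cT_a$ witnessing $s \in \Reach(\cT_a, h)$, lift it to $\cT_a^R$ using $O_i = h[i] \cap R$, and use successive letters of $h'$ as the DFA guesses at each step, which synchronises the product coordinates and ends at the desired state $((s, \last(h) \cap R), x_1, \ldots, x_k)$.

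For (c), the first-component projection $((s, O), x_1, \ldots, x_k) \mapsto (s, O)$ gives a trace-preserving bijection between infinite paths of $\cT_{\chi}'$ starting at $z$ and infinite paths of $\cT_a^R$ starting at $(s, O)$: the forward direction uses that $\to_{\chi}$ projected onto the first coordinate equals $\to_a^R$ and that $L_{\chi}'$ projected onto the first coordinate equals $L_a^R$; the backward direction lifts any path of $\cT_a^R$ by choosing, at every step, some $H' \subseteq P$ with the required $\Obsset_{\chi}$-part and evolving the DFA coordinates under $\delta_j(\cdot, H')$, which is always possible by totality of the DFA transition functions. The main bookkeeping obstacle will be in (b), since the initial-state clause of $\Init_{\chi}$ already performs one DFA read, so a path of $n+1$ states carries $n+1$ guessed labels; aligning these with the $n+1$ positions of $h$ so that $\obs_{\chi}(h) = \obs_{\chi}(h')$ requires careful attention to the indexing convention implicit in the transition rule.
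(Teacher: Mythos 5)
Your proposal is correct and follows essentially the same route as the paper's own (very terse) proof: (a) from the fact that $\cT_{\chi}$ and $\cT_{\chi}'$ share their states and transitions and differ only in labeling, (b) by decomposing/inducting along an initial path matching $h$ together with a lifting argument for the converse, and (c) by projecting paths of $\cT_{\chi}'$ onto the $\cT_a^R$-component and lifting paths back using totality of the $\delta_i$'s, which is exactly the paper's argument. Two minor remarks: the path correspondence in (c) is not a bijection (lifts are not unique), but trace-preservation in both directions is all your argument actually uses; and the indexing issue you flag in (b) is real but originates in the paper's own formulation of the transition rule of $\cT_{\chi}$ (where the guessed letter's $R$-part is tied to a state label) rather than in any defect of your proof plan.
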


\begin{proof}
  Part (a) is obvious. Part (b) can be shown by induction on $|h|$.
  Part (c) follows from the fact that the projection of each path $\pi_{\chi}$
  in $\cT_{\chi}'$ to the $\cT_a^P$-component is a path in $\cT_a^R$
  with the same trace.
  Vice versa, every path $\pi$ in $\cT_a^P$ can be lifted to a path 
  $\pi_{\chi}$ in
  $\cT_{\chi}'$ such that the projection of $\pi_{\chi}$ to the
  $\cT_a^R$-component equals $\pi$. 
\end{proof}


\begin{lemma}
 \label{lemma:soundness-default-history-DFA}
  Let $\chi = \StMod{a}{\varphi}$ be a subformula of $\phi$ and
  let $\cD_{\chi}$ 
  be the default history-DFA
  (Definition \ref{def:default-history-DFA}) constructed for $\chi$.
  That is,
  $\cD_{\chi}=\pow(\cT_a^P,\Obsset_{\chi},\Sat_{\cT_a^P}(\exists \varphi))$
  in the basis of induction and
  $\cD_{\chi}=\pow(\cT_{\chi}',\Obsset_{\chi},\Sat_{\cT_{\chi}}(\exists \varphi'))$
  in the step of induction. 
  Then,
  $\cD_{\chi}$ is a history-DFA for $\chi$, i.e.,
  $\Obsset_{\chi}$-deterministic and
  $\cL(\cD_{\chi}) \ = \ 
   \bigl\{ h \in (2^P)^+ : (*,h) \InnerModels{\chi} \chi \bigr\}$.
\end{lemma}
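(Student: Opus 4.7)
The plan is to derive both conclusions from Lemma~\ref{lemma:properties-default-history-DFA}. The $\Obsset_{\chi}$-determinism follows immediately from part~(b) of that lemma, so the bulk of the work is showing the language equality. I will proceed by structural induction on $\chi$, handling the basis and step of induction as they correspond to the two cases of the construction. Throughout, the strategy is to match the characterization of $\cL(\cD_{\chi})$ from Lemma~\ref{lemma:properties-default-history-DFA}(c) with the semantic definition of $\StMod{a}{\varphi}$ given in Figure~\ref{fig:semantics}, using the fact that $\Obsset_{\chi}$ is exactly the observation set $\Obsset_a$ (possibly parametrized by the appropriate $Q$) prescribed by $\InnerModels{\chi}$.

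For the basis case, $\varphi$ is an LTL formula and $\cT = \cT_a^R$ with $R = R_{\chi}$. Applying Lemma~\ref{lemma:properties-default-history-DFA}(c) with $U = \Sat_{\cT_a^R}(\exists \varphi)$, a history $h$ lies in $\cL(\cD_{\chi})$ iff there exists $h' \in (2^P)^+$ with $\proj{h}{\Obsset_{\chi}} = \proj{h'}{\Obsset_{\chi}}$ and a state $(s,O) \in \Reach(\cT_a^R,h') \cap \Sat_{\cT_a^R}(\exists \varphi)$, i.e., with some $f' \in \Traces^P(\cT_a^R,(s,O))$ satisfying $f' \LTLmodels \varphi$. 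Unfolding $\Reach(\cT_a^R,h')$ into a state $t \in \Reach(\cT_a,h')$ together with the side condition $O = \last(h') \cap R$, and using that $\varphi$ being LTL makes $f' \LTLmodels \varphi$ equivalent to $(f',h') \InnerModels{\chi} \varphi$ for any history with $\last(h') = \first(f')$, this is precisely the unfolded definition of $(*,h) \InnerModels{\chi} \StMod{a}{\varphi}$.

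For the step of induction, let $\chi_1,\ldots,\chi_k$ be the maximal standpoint subformulas of $\varphi$, let $\cD_1,\ldots,\cD_k$ be the corresponding history-DFAs (history-DFAs for them by the induction hypothesis), and let $\varphi' = \varphi[\chi_1/p_1,\ldots,\chi_k/p_k]$. The core claim is that for every history $h \in (2^P)^+$ and every initial path $\pi_{\chi}$ of $\cT_{\chi}'$ whose trace projects to $\proj{h}{R_\chi}$, the $p_i$-components of $L_{\chi}$ along $\pi_{\chi}$ record the truth values of $\chi_i$ at the corresponding history prefixes. This follows from the product construction $\cT_{\chi} = \cT_a^{R_{\chi}} \bowtie \cD_1 \bowtie \ldots \bowtie \cD_k$: by Lemma~\ref{lemma:properties-T-chi}(b), the $\cD_i$-component of a reachable state in $\cT_{\chi}'$ after reading $h$ is $\delta_i(\init_i,h)$, and by the induction hypothesis this state is accepting in $\cD_i$ iff $(*,h) \InnerModels{\chi_i} \chi_i$. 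Consequently, $(f',h') \InnerModels{\chi} \varphi$ iff the unique path of $\cT_{\chi}'$ lifting $f'$ above the initial state corresponding to $h'$ satisfies $\varphi'$ in the standard LTL sense. Plugging this into Lemma~\ref{lemma:properties-default-history-DFA}(c) with $U = \Sat_{\cT_{\chi}}(\exists \varphi')$ and $\Obsset = \Obsset_{\chi}$, together with Lemma~\ref{lemma:properties-T-chi}(c) which equates $\Traces(\cT_{\chi}',z)$ with $\Traces(\cT_a^{R_\chi},(s,O))$, yields the desired equivalence $h \in \cL(\cD_{\chi})$ iff $(*,h) \InnerModels{\chi} \StMod{a}{\varphi}$.

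The main obstacle is verifying that the ``$p_i$-labels record $\chi_i$'' correspondence is faithful for all five semantics, particularly the decremental and incremental ones where $\InnerModels{\chi}$ is parametrized by $Q$ that shifts at each standpoint modality. The key check is that $\Obsset_{\chi_i}$ in the construction coincides with the observation set prescribed by the semantic context of $\chi_i$ as a subformula of $\chi$, as specified in Remark~\ref{remark:context-dependency-syntax-tree}; this is built into the preprocessing step of the algorithm. A secondary subtlety is moving between histories in $(2^P)^+$ (used in the semantics) and in $(2^{R_\chi})^+$ or $(2^{\cP_{\chi}})^+$ (used in the automata), which is handled cleanly by Lemma~\ref{lemma:properties-T-chi}(a) guaranteeing that reachability in $\cT_{\chi}$ and $\cT_{\chi}'$ depends only on the projection to $P$ or $R_{\chi}$.
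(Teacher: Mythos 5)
Your overall strategy coincides with the paper's proof: determinism from part~(b) of Lemma~\ref{lemma:properties-default-history-DFA}, then an induction on the nesting depth in which the language equality is obtained by matching the characterization of $\cL(\cD_\chi)$ in part~(c) against the semantics of $\StMod{a}{\varphi}$, with a Claim-1-style correspondence (via the product with the $\cD_i$'s and the induction hypothesis) between the fresh propositions $p_i$ and the truth of the $\chi_i$'s, and with the $Q$-shift for the decremental/incremental relations tracked exactly as in Remark~\ref{remark:context-dependency-syntax-tree}.

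There is, however, one concrete step you gloss over, and as stated your basis case is not quite right: the characterization from Lemma~\ref{lemma:properties-default-history-DFA}(c) is \emph{not} ``precisely the unfolded definition'' of $(*,h)\InnerModels{\chi}\StMod{a}{\varphi}$. In the direction $h\in\cL(\cD_\chi)\Rightarrow(*,h)\InnerModels{\chi}\chi$, the lemma only gives you $h'$ and a state $(s,O)\in\Reach(\cT_a^{R},h')\cap\Sat_{\cT_a^{R}}(\exists\varphi)$ with some $f'\in\Traces^P(\cT_a^{R},(s,O))$, $f'\LTLmodels\varphi$; this forces $\first(f')\cap R=O=\last(h')\cap R$, but the semantics requires the witnesses to satisfy $\last(h')=\first(f')$ on \emph{all} of $P$, and these may disagree outside $R$. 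The paper repairs this explicitly (both in the basis and in the second half of its Claim~2) by replacing the last symbol of $h'$ with $\first(f')\cap P$, and then checking that this change is harmless because $\Obsset_\chi\subseteq R$ (so the observation is unchanged), $P_a\subseteq R$ (so $\Reach(\cT_a,\cdot)$ is unchanged), and, in the inductive step, because the $\cD_i$'s are $\Obsset_i$-deterministic with $\Obsset_i\subseteq R$ (so the states $\delta_i(\init_i,\cdot)$ are unchanged). Your appeal to Lemma~\ref{lemma:properties-T-chi}(a) does not cover this point: that lemma only says reachability depends on the projection of the history, not that a future-history pair satisfying $\last(h')=\first(f')$ can be produced. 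Adding this fix-up argument (and dropping the unneeded and generally false uniqueness of the lifted path in $\cT_\chi'$ --- existence of a lift suffices) makes your sketch match the paper's proof.
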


\begin{proof}
By part (b) of Lemma \ref{lemma:properties-default-history-DFA}, 
$\cD_{\chi}$ is $\Obsset_{\chi}$-deterministic.

We now prove by an induction on the nesting depth of standpoint modalities 
that for all standpoint subformulas $\chi$ of $\phi$ it holds that
\begin{center}
  $\cL(\cD_{\chi}) \ = \ 
   \bigl\{ h \in (2^P)^+ : (*,h) \InnerModels{\chi} \chi \bigr\}$
\end{center}
where $\cL(\cD_{\chi})$ is the accepted language of $\cD_{\chi}$.
In what follows, we simply write $\cD$ instead of $\cD_{\chi}$
and denote the components of $\cD$ by
$(\cX_{\cD},\delta_{\cD},\init_{\cD},F_{\cD})$.
Furthermore, we write $R$ instead of $R_{\chi}$.

{\it Basis of induction:}
We consider a subformula
 $\chi = \StMod{a}{\varphi}$ where $\varphi$
is an LTL formula over $P$. 

If $(*,h) \InnerModels{\chi} \chi$ 
then there exists a history $h'\in (2^P)^+$,
a state $s \in \Reach(\cT_a,h')$ 
and a trace $f'\in \Traces^P(\cT_a,s)$ such that
$\first(f')=\last(h')$,
$\obs_{\chi}(h)=\obs_{\chi}(h')$ and $f' \LTLmodels \varphi$.
Let $O=\first(f') \cap R$. Then,  $(s,O)$ is a state in $\cT_a^R$ with
$f'\in \Traces^P(\cT_a^R,(s,O))$ and therefore
$(s,O)\in \Sat_{\cT_a^R}(\exists \varphi)$.
Furthermore, as $O=\last(h') \cap R$ and $s \in \Reach(\cT_a,h')$ 
we have $(s,O)\in \Reach(\cT_a^R,h')$.
Statement (c) of Lemma \ref{lemma:properties-default-history-DFA}
yields $h \in \cL(\cD)$.

Suppose now that $h \in \cL(\cD)$. 
Then, $\delta_{\cD}(\init_{\cD},h) \in F_{\cD}$.
By definition of $F_{\cD}$, 
$\delta_{\cD}(\init_{\cD},h) \cap \Sat_{\cT_a^R}(\exists \varphi)$ is nonempty.
Pick a state 
$(s,O) \in \delta_{\cD}(\init_{\cD},h) \cap \Sat_{\cT_a^R}(\exists \varphi)$.
Then, there exists a word $f' \in \Traces^P(\cT_a^R,(s,O))$
with $f' \LTLmodels \varphi$ and a history $h' \in (2^P)^+$ such that
$(s,O)\in \Reach(\cT_a^R,h')$ and $\obs_{\chi}(h)=\obs_{\chi}(h')$.
But then, $\first(f') \cap R = \last(h') \cap R$ 
(as both agree with $L_a^R(s,O)= O$)
and $s \in \Reach(\cT_a,h')$.

Recall that $R = R_{\chi}=P_a \cup \Obsset_{\chi}$.
If $\first(f')$ and $\last(h')$ are different then we replace the last symbol
of $h'$ with $\first(h')$. 
In this way, we obtain a history $\tilde{h}$ such that
$(f',\tilde{h})$ is a future-history pair with
$\obs_{\chi}(\tilde{h})=\obs_{\chi}(h')=\obs_{\chi}(h)$,
$s\in \Reach(\cT_a,h)=\Reach(\cT_a,\tilde{h})$,
$f'\in \Traces^P(\cT_a^R,(s,O)) \subseteq \Traces^P(\cT_a,s)$ 
and $(f',\tilde{h})\InnerModels{\chi} \varphi$ (as $f' \LTLmodels \varphi$).
This yields $(*,h) \InnerModels{\chi} \chi$.

{\it Step of induction:}
Let $\chi = \StMod{a}{\varphi}$ and let
$\chi_1,\ldots,\chi_k$ be the maximal standpoint subformulas of $\varphi$
and $\cD_i = (X_i,\delta_i,\init_i,F_i)$ the history-DFA for the $\chi_i$'s, which exist by induction hypothesis.
Thus:
\begin{center}
 $\cL(\cD_i)=\{h \in (2^P)^+ : (*,h) \InnerModels{\chi_i} \chi_i\}.$
\end{center}
Moreover, $\InnerModels{\varphi}$ equals
$\InnerModels{\chi_i}$ for $i=1,\ldots,k$.
Thus,
$\cL(\cD_i)=\{h \in (2^P)^+ : (*,h) \InnerModels{\varphi} \chi_i\}$.

Recall that $R_{\chi}= P_a \cup \Obsset_{\chi} \cup \Obsset_1 \cup \ldots \cup \Obsset_k$ where $\Obsset_i = \Obsset_{\chi_i}$, $i=1,\ldots,k$.
In what follows, we will write $R$ instead of $R_{\chi}$.

As $\cD_i$ is a history-DFA for $\chi_i$, $\cD_i$ is $\Obsset_i$-deterministic,
which means that $\proj{h_1}{\Obsset_i}=\proj{h_2}{\Obsset_i}$ implies 
$\delta_{i}(\init_i,h_1)=\delta_i(\init_i,h_2)$.
As $\Obsset_i$ is a subset of $R$, 
the $\cD_i$'s are $R$-deterministic. That is:
\begin{center}
  Whenever $h_1, h_2 \in (2^P)^+$ with $\proj{h_1}{R}=\proj{h_2}{R}$ then
  $\delta_{i}(\init_i,h_1)=\delta_i(\init_i,h_2)$ for $i=1,\ldots,k$.
\end{center}

For every subformula $\psi'$ of 
    $\varphi' = \varphi[\chi_1/p_1,\ldots,\chi_k/p_k]$,
    let $\psi'[p_1/\chi_1,\ldots,p_k/\chi_k]$ 
    be the corresponding subformula
    of $\varphi$. 
  We then have $\varphi'[p_1/\chi_1,\ldots,p_k/\chi_k] = \varphi$
     and $p_i[p_1/\chi_1,\ldots,p_k/\chi_k] = \chi_i$.
Note that $\varphi'$ and its subformulas
  are LTL formulas over $\cP_{\chi}=P\cup \{p_1,\ldots,p_k\}$. 
  In what follows, we simply write $\cP$ rather than $\cP_{\chi}$.
We first show the following statement:

{\it Claim 1:}
    For each subformula $\psi'$ of $\varphi'$, 
    each history $h' \in (2^P)^+$,
    state $z = ((s,O),x_1,\ldots,x_k)$ in $\cT_{\chi}$ 
    such that $x_j = \delta_j(\init_j,h')$ for all $j\in \{1,\ldots,k\}$
    and  $f \in \Traces^{\cP}(\cT_{\chi},z)$ 
    with $\first(f)\cap P =\last(h')$:
    \begin{center}
       $f \LTLmodels \psi'$ \ iff \
       $(\proj{f}{P},h') \InnerModels{\varphi} 
            \psi'[p_1/\chi_1,\ldots,p_k/\chi_k]$ 
    \end{center}
  where $\InnerModels{\varphi}$ and $\InnerModels{\chi}$ 
  agree for the step, public-history and
  pure observation-based semantics, while 
  $\InnerModels{\varphi}$ is $\decrmodels{Q \cap P_a}$ if $\InnerModels{\chi}$ 
  is $\decrmodels{Q}$ 
  and
  $\InnerModels{\varphi}$ is $\incrmodels{Q \cup P_a}$ if $\InnerModels{\chi}$ 
  is $\incrmodels{Q}$.

The proof of Claim 1
is  by structural induction on the syntactic structure
  of $\psi'$. 
  Let us concentrate here on the most interesting case where
  $\psi'=p_i$ for some $i\in \{1,\ldots,k\}$,
  in which case $\psi'[p_1/\chi_1,\ldots,p_k/\chi_k] = \chi_i$.
  Let $h'$, $z = (s,x_1,\ldots,x_k)$ and $f$ be as in Claim 1.
  Then:
  \begin{center}
    $f \LTLmodels p_i$ \ iff \ $p_i\in \first(f)$ \ iff \ $p_i \in L_{\chi}(z)$
    \ iff \ $x_i \in F_i$
  \end{center}
  As we assume here the soundness of the history-DFA $\cD_i$ for $\chi_i$
  (induction hypothsis), 
  we have:  
  \begin{center}
    $x_i \in F_i$
    \ iff \ $h'\in \cL(\cD_i)$ \ iff \ $(*,h') \InnerModels{\chi_i} \chi_i$
  \end{center} 
  Putting things together we obtain:
  \begin{center}
    $f \LTLmodels \psi'=p_i$ \ iff \ $x_i \in F_i$ \ iff \ 
    $(*,h') \InnerModels{\varphi} \chi_i = \psi'[p_1/\psi_1,\ldots,p_k/\psi_k]$
   \end{center}
As $(\proj{f}{P},h') \InnerModels{\varphi} \chi_i$ iff
$(*,h') \InnerModels{\varphi} \chi_i$, 
this yields the claim for the case $\psi'=p_i$
for some $i\in \{1,\ldots,k\}$.
The other case $\psi'=p\in P$ of the basis of induction 
and the step of induction are obvious.

{\it Claim 2:} For each history $h \in (2^P)^+$:
$(*,h)\InnerModels{\chi} \chi$ iff $h \in \cL(\cD_{\chi})$.

By part (a) of Lemma \ref{lemma:properties-default-history-DFA},
$\delta_{\cD}(\init_{\cD},h)$ is the union of the sets
$\Reach(\cT_{\chi}',h')$ where $h'$ ranges over all histories 
$h'\in (2^P)^+$ with $\obs_{\chi}(h)=\obs_{\chi}(h')$ (i.e.,
$\proj{h}{\Obsset_{\chi}}= \proj{h'}{\Obsset_{\chi}}$).

{\it First part of the proof of Claim 2.}
Suppose first that $(*,h)\InnerModels{\chi} \chi$.
Then, there exists a word $h'\in (2^P)^+$, a state $s\in \Reach(\cT_a,h')$
and a trace $f'\in \Traces^P(\cT_a,s)$ such that 
$\obs_{\chi}(h)=\obs_{\chi}(h')$,
$\first(f')=\last(h')$ 
and $(f',h') \InnerModels{\varphi} \varphi$.

Let $O = \first(f') \cap R$. Then, $O=\last(h') \cap R$, 
$(s,O)\in \Reach(\cT_a^R,h')$ and $f'\in \Traces^P(\cT_a^R, (s,O))$.
Furthermore, let
$x_i=\delta_i(\init_i,h')$ for $i=1,\ldots,k$.
Then, $z = ((s,O),x_1,\ldots,x_k)$ is a state in $\cT_{\chi}$ and 
$\cT_{\chi}' = \proj{\cT_{\chi}}{R}$
with $z \in \Reach(\cT_{\chi}',h')$.

As $\obs_{\chi}(h)=\obs_{\chi}(h')$,
part (a) of Lemma \ref{lemma:properties-default-history-DFA} 
yields $z \in \delta_{\cD}(\init_{\cD},h)$.

We pick a path $\pi \in \Paths(\cT_a^R,(s,O))$ 
with $\trace(\pi) = \proj{f'}{R}$.
There is a path $\pi_{\chi}$ in $\cT_{\chi}$ from $z$ where
the projection of $\pi$ to the $\cT_a^R$-components agrees with $\pi$.

Then, $\trace(\pi_{\chi})$ 
is an infinite string over the alphabet
$2^{P_{\chi}}$ (recall that $P_{\chi} = R \cup \{p_1,\ldots,p_k\}$) with 
$\proj{\trace(\pi_{\chi})}{R}=\trace(\pi)=\proj{f'}{R}$.

On the other hand, $f' \in (2^P)^{\omega}$.
As $R = P_{\chi}\cap P$ and $f'$ and $\trace(\pi_{\chi})$ agree elementwise
on the truth values of the propositions in $R$,
we can ``merge'' $f'$ and $\trace(\pi_{\chi})$ to obtain a word 
$f \in \Traces^{\cP}(\cT_{\chi},z)$ such that
$\proj{f}{P}=f'$ and $\proj{f}{P_{\chi}}=\trace(\pi_{\chi})$.

Claim 1 applied to $f$ and $\psi'=\varphi'$ yields:
\begin{center}
  $f \LTLmodels \varphi'$ \ iff \ 
  $(f',h') \InnerModels{\varphi} \varphi$
\end{center}
As $(f',h') \InnerModels{\varphi} \varphi$ (by assumption), we
obtain $f \LTLmodels \varphi'$. As $f \in \Traces(\cT_{\chi},z)$, we get
$z \in \Sat_{\cT_{\chi}}(\exists \varphi')$.
Thus, $z \in \delta_{\cD}(\init_{\cD},h) \cap \Sat_{\cT_{\chi}}(\exists \varphi')$.
This yields
$\delta_{\cD}(\init_{\cD},h) \in F_{\cD}$.
Therefore, $h \in \cL(\cD)$.

{\it Second part of the proof of Claim 2.}
Let us assume now that $h \in \cL(\cD)$.
Statement (c) of Lemma \ref{lemma:properties-default-history-DFA} 
yields that there exists a history $h' \in (2^P)^+$ with
$\obs_{\chi}(h)=\obs_{\chi}(h')$ and 
$\Sat_{\cT_{\chi}}(\exists \varphi') \cap \Reach(\cT_{\chi}',h') \not= \varnothing$.

Pick an element 
$z \in \Sat_{\cT_{\chi}}(\exists \varphi') \cap \Reach(\cT_{\chi}',h')$
and a trace $f \in \Traces^{\cP}(\cT_{\chi},z)$ 
with $f \LTLmodels \varphi'$. (As before, $\cP=P \cup \{p_1,\ldots,p_k\}$.)

Recall that $\cT_{\chi}$ is
a transition system over $P_{\chi} = R \cup \{p_1,\ldots,p_k\}$.
The transition system $\cT_{\chi}'=\proj{\cT_{\chi}}{P}$ 
agrees with $\cT_{\chi}$, the only difference being that the labelings
are restricted to $P$. That is, the labeling function $L_{\chi}'$
of $\cT_{\chi}'$ is given by $L_{\chi}'(z)=L_{\chi}(z)\cap P$.

State $z$ has the form $((s,O),x_1,\ldots,x_k)$ where $(s,O)$ is a
state in $\cT_a^{R}$, 
i.e., $s$ is a state of $\cT_a$ and $O \subseteq R$ with 
$L_a(s) = O \cap P_a$.
As $z \in \Reach(\cT_{\chi}',h')$ we have:
\begin{itemize}
\item
   $x_i = \delta_i(\init_i,h')$ for $i=1,\ldots,k$
\item
   $\last(h') \cap R = L_{\chi}'(z)=O$
\end{itemize}
Using the fact that $f \in \Traces^{\cP}(\cT_{\chi},z)$ we get
$\first(f) \cap P_{\chi} = L_{\chi}(z)$. 
As $R \subseteq P_{\chi}$ we obtain:
\begin{center}
  $\first(f)\cap R \ = \ L_{\chi}(z) \cap R \ = \ L_{\chi}'(z)  \ = \ O 
  \ = \ \last(h') \cap R$
\end{center}
For the case where $\first(f) \cap P \not= \last(h)$, 
we proceed as in the basis
of induction. 
That is, we consider the history $\tilde{h}$ that arises from $h$ by
replacing the last symbol in $h$ with $\first(f) \cap P$.
Then, $\proj{h'}{R}=\proj{\tilde{h}}{R}$. This implies:
\begin{itemize}
\item
  $\obs_{\chi}(h)= \obs_{\chi}(h') = \obs_{\chi}(\tilde{h})$ 
  as $\Obsset_{\chi}\subseteq R$
\item
  $x_i =  \delta_i(\init_i,h')  =  \delta_i(\init_i,\tilde{h})$,
  $i=1,\ldots,k$ as the $\cD_i$'s are $R$-deterministic 
\item
  $z \in \Reach(\cT_{\chi}',h')=\Reach(\cT_{\chi}',\tilde{h})$ 
  and therefore $s \in \Reach(\cT_a,\tilde{h})$ 
\end{itemize}
Moreover, $\last(\tilde{h})=\first(f) \cap P$ (by definition of $\tilde{h}$).
Thus, $(\proj{f}{P},\tilde{h})$ is a future-history pair and:
\begin{center}
  $\proj{f}{P} \in 
  \Traces^P(\cT_{\chi}',z) \ = \ \Traces^P(\cT_a^R,(s,O)) 
  \ \subseteq \ \Traces^P(\cT_a,s)$
\end{center}
By Claim 1 and as $f \LTLmodels \varphi'$, we obtain
$(\proj{f}{P}, \tilde{h}) \InnerModels{\varphi} \varphi$.
 
Putting things together, with $f'=\proj{f}{P}$ we have:
$\obs_{\chi}(h)=\obs_{\chi}(\tilde{h})$, 
$f' \in \Traces^P(\cT_a,s)$, $s \in \Reach(\cT_a,\tilde{h})$ 
and $(f', \tilde{h}) \InnerModels{\varphi} \varphi$.
But this shows that $(*,h) \InnerModels{\chi} \StMod{a}{\varphi}=\chi$.
\end{proof}




\section{Proofs for Section \ref{sec:complexity}}

\label{sec:app-complexity}


\subsection{Structure of history-DFAs and time bounds}

We start by providing the proof for 
Lemma \ref{lemma:single-exp-run-time-step-public-decr},
which consists of four parts:
step semantics (Lemma \ref{lemma:step-semantics-history-DFA-single-exp}),
public-history semantics
(Lemma \ref{lemma:public-semantics-history-DFA-single-exp}),
decremental semantics
(Lemma \ref{lemma:decr-semantics-history-DFA-single-exp})
and, finally, the pure observation-based semantics under the
additional assumption that the $P_a$'s are pairwise disjoint
(Lemma \ref{lemma:pobs-disjoint-semantics-history-DFA-single-exp}).

For the step, public-history and decremental semantics
we will make use of Remark \ref{remark:exploit-obs-determinism}.
Recall from Remark \ref{T-a-R} that $R_{\chi}=P_a$ and 
$\cT_a^{R_{\chi}}=\cT_a$
for the step,
pure observation-based and decremental semantics.

If $\varphi$ is an $\StpLTL$ formula, let $\Ag(\varphi)$
denote the set of agents $b\in \Ag$ such that $\varphi$ has a subformula
of the form $\StMod{b}{\psi}$.

We start with the step semantics.
Here, we have $\Obsset_{\chi} = \varnothing$ for all $\chi$.
With the notations used in Remark \ref{remark:exploit-obs-determinism},
the history-DFA $\cD_1,\ldots,\cD_k$ are
$\Obsset_{\chi}$-deterministic. Thus, we can think of
$\cD_{\chi}$ to be defined as a product 
$\pow(\cT_a,\varnothing) \bowtie \cD_1 \bowtie \ldots \bowtie \cD_k$.
But now, each of the $\cD_i$'s also has this shape.
In particular, if \( \chi_i = \StMod{a}{\psi_i} \), then the projection of the first coordinate of the states and the transitions between them in the reachable fragment of \( \cD_i \) matches exactly those of \( \pow(\cT_a, \varnothing) \).
Thus, one can incorporate the information on final states
 and drop $\pow(\cT_a,\varnothing)$ from the product.
The same applies to the history-DFAs for nested subformulas 
$\StMod{a}{\psi'}$ of the $\chi_i$'s.
As a consequence, the default history-DFA $\cD_{\chi}$ 
can be redefined for the step
semantics such that
the state space of $\cD_{\chi}$ is contained in
$\prod_{b\in \Ag(\chi)} 2^{S_b}$.
In particular, 
$\cD_{\chi}$ has the shape $\prod_{a\in \Ag(\chi)} \pow(\cT_a,\varnothing)^{k_a}$
for some $k_a > 0$, and one can merge redundant components if $k_a \geqslant 2$.

\begin{lemma}
\label{lemma:step-semantics-history-DFA-single-exp}
  Under the step semantics, for each standpoint formula
  $\chi$ there is a history-DFA where, besides an additional initial state, the state space is contained in
  $\prod_{a\in \Ag(\chi)} 2^{S_a}$.
\end{lemma}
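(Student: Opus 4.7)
The plan is to argue by induction on the maximum nesting depth of standpoint modalities in $\chi$, using Remark~\ref{remark:exploit-obs-determinism} at each step to replace the powerset construction on the synchronous product by the product itself, and then to merge duplicate agent components so that each $b \in \Ag(\chi)$ contributes at most one factor $2^{S_b}$.

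First I would dispose of the base case $\chi = \StMod{a}{\varphi}$ with $\varphi \in \mathrm{LTL}$. Since $\Obsset_{\chi}=\varnothing$ and $R_{\chi}=P_a$ (so $\cT_a^{R_\chi}=\cT_a$ by Remark~\ref{T-a-R}), the default history-DFA is the reachable fragment of $\pow(\cT_a,\varnothing,\Sat_{\cT_a}(\exists\varphi))$, whose state space sits in $2^{S_a}$. As $\Ag(\chi)=\{a\}$, this matches the claim.

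For the inductive step, write $\chi=\StMod{a}{\varphi}$ with maximal standpoint subformulas $\chi_1=\StMod{b_1}{\psi_1},\ldots,\chi_k=\StMod{b_k}{\psi_k}$. By the induction hypothesis, each $\cD_{\chi_i}$ has (apart from its initial state) state space contained in $\prod_{b\in\Ag(\chi_i)}2^{S_b}$, and each $\cD_{\chi_i}$ is $\varnothing$-deterministic by Lemma~\ref{lemma:properties-default-history-DFA}(b) since $\Obsset_{\chi_i}=\varnothing$. Consequently, with all indices lying in the set $I$ of Remark~\ref{remark:exploit-obs-determinism}, I can realise $\cD_{\chi}$ as the reachable fragment of $\pow(\cT_a,\varnothing)\bowtie \cD_{\chi_1}\bowtie\cdots\bowtie \cD_{\chi_k}$, with acceptance read off from $U_\chi=\Sat_{\cT_\chi}(\exists\varphi')$. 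This places the state space inside $2^{S_a}\times\prod_{i=1}^k\prod_{b\in\Ag(\chi_i)}2^{S_b}$, which has one $2^{S_b}$-factor for every ``occurrence'' of agent $b$ rather than a single one per agent.

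The main obstacle, and the heart of the argument, is the merging step. I would observe that for any history $h\in(2^P)^+$, every ``$2^{S_b}$-slot'' appearing anywhere inside the product must carry exactly the set $\Reach(\cT_b,h)$: by Lemma~\ref{lemma:properties-default-history-DFA}(a), the state of any $\pow(\cT_b,\varnothing)$-layer after reading $h$ depends only on $|h|$, and this property propagates through the inductive construction. Therefore, on reachable tuples, all copies of $2^{S_b}$ in the decomposition agree, and they can be identified without losing information; moreover, each $\cD_{\chi_i}$-component of a reachable tuple is determined by its projection onto $\bigl(2^{S_b}\bigr)_{b\in\Ag(\chi_i)}$, so the acceptance condition $F_\chi$ descends unambiguously to the merged product. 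After merging, the reachable state space (together with the extra initial state coming from $\pow(\cdot,\varnothing)$) lies in $\prod_{b\in\Ag(\chi)}2^{S_b}$, where $\Ag(\chi)=\{a\}\cup\bigcup_i\Ag(\chi_i)$, completing the induction.
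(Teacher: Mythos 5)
Your proof is correct and follows essentially the same route as the paper: induction on the nesting depth of standpoint modalities, exploiting that under $\stepmodels$ every history-DFA is $\varnothing$-deterministic so that, via Remark~\ref{remark:exploit-obs-determinism}, the powerset construction factors through the product $\pow(\cT_a,\varnothing)\bowtie \cD_{\chi_1}\bowtie\cdots\bowtie\cD_{\chi_k}$, after which duplicated agent components are merged (the paper formalizes the merged object as $\prod_{a\in\Ag(\chi)}\pow(\cT_a,\varnothing)$ with an explicit input-independent transition function). One small inaccuracy: after reading $h$, a $2^{S_b}$-slot does not hold $\Reach(\cT_b,h)$ but the larger set $\bigcup_{h'\,:\,|h'|=|h|}\Reach(\cT_b,h')$, i.e.\ all states of $\cT_b$ reachable in $|h|$ steps irrespective of labels; your parenthetical justification --- that the slot's content depends only on $|h|$ --- is the correct statement, and it is all the merging step actually needs.
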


\begin{proof}
For $a \in \Ag$, $H \subseteq P$ and $s \in S_a$, let 
$\delta_a(s,H)$ denote the set of states $s' \in S_a$ such that
$s \to_a s'$ and $L_a(s')=H \cap P_a$.
For $U \subseteq S_a$, $\delta_a(U,H)=\bigcup_{s \in U} \delta_a(s,H)$.
This definition can be extended inductively to histories by
$\delta_a(U,\varnothing)= U$, 
$\delta_a(U,Hh) = \delta_a( \delta_a(U, H), h)$ for $H \in 2^P$ and
$h \in (2^P)^+$.

Let $\Delta_a : 2^{S_a} \to 2^{S_a}$ be given by
$\Delta_a(U)=\bigcup_{H \subset P} \delta_a(U,H)$. That is, $\Delta_a(U)$
is the set of states $s' \in S_a$ such that $s \to_a s'$
for some state $s \in U$.

For $A \subseteq \Ag$, 
let $D_A = \prod_{a \in A} 2^{S_a}$. 
We define
$\Delta_A : D_A \to D_A$ as follows:
$\Delta_A( (U_a)_{a\in A}) = (\Delta_a(U_a))_{a \in A}$.

Furthermore, let $\delta_A^{\tinystep} : D_A \times 2^P \to D_A$ be given by
$$
  \delta_A^{\tinystep}( (U_a)_{a\in A}, H) \ = \ 
  \Delta_A( (U_a)_{a\in A}) 
$$
for each $H \subseteq P$.
Thus, $\delta_A^{\tinystep}( (U_a)_{a\in A}, H) = 
       \delta_A^{\tinystep}( (U_a)_{a\in A}, H')$
for all $H, H' \subseteq P$.

We now show that for each standpoint formula $\chi = \StMod{a}{\varphi}$
there exists a history-DFA $\cD$ of the form 
\begin{center}
  $\cD = (D_A \cup \{\init_{\cD}\},\delta_A^{\tinystep},\init_{\cD},F_{\chi})$ 
\end{center}
for some $F_{\chi} \subseteq D_A$ where $A = \Ag(\chi)$ and
$\delta_A^{\tinystep}(\init_{\cD},H)= (\{\init_a\})_{a\in A}$ 
for each $H \subseteq P$.  Note that the structure $(D_A \cup \{\init_{\cD}\},\delta^{\tinystep}_A, init_{\cD})$ equals $\prod_{a\in \Ag(\chi)} \pow(\cT_a,\varnothing)$.

Recall from Remark \ref{T-a-R} that $R_{\chi}=P_a$ and that
we may deal with $\cT_a$ rather than $\cT_a^{R_{\chi}}$ in the basis
of induction and in the induction step.

The proof is by induction on the nesting depth of standpoint modalities.
The claim is obvious if $\chi = \StMod{a}{\varphi}$ 
contains no nested standpoint modalities 
(in which case $\varphi$ is an LTL formula)
as then $\Ag(\chi)=\{a\}$ and the default history-DFA
arises by the powerset construction applied to $\cT_a$ and 
$\Obsset = \varnothing$. And indeed, 
$\cD = \pow(\cT_a,\varnothing,U)$
(defined as in Definition \ref{def:default-history-DFA})
is $(D_{\{a\}}\cup \{init_{\{\cD\}}\},\delta_{\{a\}}^{\tinystep}, init_{\{\cD\}}, F)$ for some 
$F \subseteq D_{\{a\}}$.

Suppose now that $\varphi$ has $k$ maximal standpoint subformulas, say
$\chi_1 = \StMod{b_1}{\psi_1},\ldots,\chi_k = \StMod{b_k}{\psi_k}$.
Let $B_i = \Ag(\chi_i)$. 
Then, $\Ag(\varphi) = B_1 \cup \ldots \cup B_k$
and $\Ag(\chi)=\{a\} \cup \Ag(\varphi)$.
By induction hypothesis,
there are history-DFA $\cD_1,\ldots,\cD_k$ 
for $\chi_1,\ldots,\chi_k$ where 
$\cD_i = (D_{B_i}\cup \{\init_{B_i}\},\delta_{B_i}^{\tinystep},\init_{B_i},F_i)$.
We now revisit the definition of the transition system $\cT_{\chi}$.
In Section \ref{sec:model-checking}, we defined the state space $Z_{\chi}$ 
as the set of tuples $(s,x_1,\ldots,x_k)$ 
where $s \in S_a$ and $x_i$ is a state
in $\cD_i$.
Thus, 
the reachable states in $\cT_{\chi}$ have the
form $(s,(U_b)_{b \in B_1},\ldots, (U_b)_{b\in B_k})$ 
and contain redundant components
if there are agents $b\in \Ag$ that belong to multiple $B_i$'s.
That is, we can redefine the state space of $\cT_{\chi}$ to be
$S_a\times \prod_{b \in \Ag(\varphi)} 2^{S_b}$, so $\cT_{\chi}$ can be seen as the product 
$\cT_a \bowtie \cD_{\Ag(\varphi)}$
where $\cD_A$ is the structure $(D_A \cup \{\init_A\},\delta_A^{\tinystep},\init_A)$.
As every state in $\cD_{\Ag(\varphi)}$ has a single successor that is
reached for every input $H \in 2^P$,
the reachable states in the default history-DFA 
$\cD_{\chi}=
 \pow(\cT_{\chi},\varnothing,P_{\chi},\Sat_{\cT_{\chi}}(\exists \varphi'))$ 
defined as in
Definition \ref{def:default-history-DFA} 
have the form $(U,\{ V \})$ where $U \subseteq S_a$ and 
$V \in \prod_{b\in \Ag(\varphi)} 2^{S_b}$.
Thus, $V$ is a tuple $(V_b)_{b\in \Ag(\varphi)}$
where $V_b \subseteq S_b$.
Moreover, if $a \in \Ag(\varphi)$ then $U = V_a$.
Thus, we can redefine the state space of $\cD_{\chi}$ to be
$D_{\Ag(\chi)}$.
In this way, we obtain a history-DFA for $\chi$ 
that has the form 
$(D_{Ag(\chi)} \cup \{init_{Ag(\chi)}\},\delta_{\Ag( \chi)}^{\tinystep},\init_{\Ag(\chi)},F_{\chi})$.
\end{proof}


The situation is similar for the public-history semantics.
Here, $\Obsset_{\chi} = P$ for all $\chi$. Thus, all history-DFA
are $P$-deterministic. With arguments as for the step semantics, 
the history-DFA $\cD_{\chi}$ can be defined in such a way that
the state space of $\cD_{\chi}$ has the shape 
$\prod_{b\in \Ag(\chi)} \pow(\cT_a^P,P)$.
That is, the non-initial states of $\cD_{\chi}$ are elements
of $\prod_{b\in \Ag(\chi)} (2^{S_b}\times 2^P)$. 
Moreover, the reachable fragment
contains only states of the form $(T_b,O_b)_{b\in \Ag(\chi)}$
with $T_b \subseteq S_b$ and $O_b \subseteq P$
where $O_b = O_a$ for all $a,b \in \Ag(\chi)$.
Thus, the state space of $\cD_{\chi}$ can even be reduced to
$(\prod_{b\in \Ag(\chi)} 2^{S_b}) \times 2^P$.
The details are shown in the proof of the following lemma:

\begin{lemma}
\label{lemma:public-semantics-history-DFA-single-exp}
  Under the public-history semantics, for each standpoint formula
  $\chi$ there is a history-DFA where, besides an additional initial state, the state space is contained in
  $(\prod_{a\in \Ag(\chi)} 2^{S_a}) \times 2^P$.
\end{lemma}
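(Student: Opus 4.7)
The approach parallels the proof of Lemma \ref{lemma:step-semantics-history-DFA-single-exp}, proceeding by induction on the nesting depth of standpoint modalities in $\chi$. Two facts make the public-history setting especially amenable: first, $\Obsset_{\chi} = P$ for every standpoint subformula $\chi$, so every history-DFA we construct is automatically $P$-deterministic; second, $R_{\chi} = P_a \cup P = P$, so in the induction we work with the transition system $\cT_a^P$, whose states are pairs $(s, O) \in S_a \times 2^P$ with $L_a(s) = O \cap P_a$ and whose label at $(s,O)$ is exactly $O$.

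For the base case $\chi = \StMod{a}{\varphi}$ with $\varphi$ an LTL formula, the default history-DFA is $\pow(\cT_a^P, P, U)$ for $U = \Sat_{\cT_a^P}(\exists \varphi)$. Inspecting Definition \ref{def:default-history-DFA}, every non-initial reachable state $x = \delta_{\cD}(\init_{\cD}, h)$ satisfies $L_a^P(s', O') = O' = \last(h)$ for each $(s', O') \in x$, so the $O'$-component is the same throughout $x$. We can therefore represent $x$ by the pair $(T, O) \in 2^{S_a} \times 2^P$ with $T = \{s' : (s', O) \in x\}$ and $O = \last(h)$; the transition on input $H \subseteq P$ sends $(T, O)$ to $(T_H, H)$, where $T_H$ is determined by $T$ and $H \cap P_a$ via the $\to_a$-relation and the labeling $L_a$. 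This yields a history-DFA whose non-initial state space lies inside $2^{S_a} \times 2^P = (\prod_{b \in \Ag(\chi)} 2^{S_b}) \times 2^P$, since here $\Ag(\chi) = \{a\}$.

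For the inductive step, let $\chi_1 = \StMod{b_1}{\psi_1}, \ldots, \chi_k = \StMod{b_k}{\psi_k}$ be the maximal standpoint subformulas of $\varphi$, with history-DFAs $\cD_1, \ldots, \cD_k$ of the claimed form by induction hypothesis. Since each $\cD_i$ is $P$-deterministic, hence $\Obsset_{\chi}$-deterministic, Remark \ref{remark:exploit-obs-determinism} lets us view the default history-DFA $\cD_{\chi}$ as a product $\pow(\cT_a^P, P) \bowtie \cD_1 \bowtie \cdots \bowtie \cD_k$ instead of a powerset construction over the entire product $\cT_{\chi}'$. Each factor contributes a $2^P$-coordinate recording the current label, but in every reachable joint state these coordinates are synchronised (they all equal $\last(h)$) and can be collapsed into a single $2^P$-component. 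Likewise, whenever an agent $b$ lies in several of the sets $\{a\} \cup \bigcup_i \Ag(\chi_i) = \Ag(\chi)$, the $2^{S_b}$-coordinates of the corresponding factors evolve in lockstep, because each of them is determined by $\cT_b$ alone and by the input history read so far. After merging these duplicates, the non-initial state space sits inside $(\prod_{b \in \Ag(\chi)} 2^{S_b}) \times 2^P$, as desired.

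The main obstacle is the bookkeeping required to justify the merging of repeated coordinates, both the shared $2^{S_b}$-components and the shared $2^P$-component. One must verify that the acceptance set $F_{\chi}$, derived from $\Sat_{\cT_{\chi}}(\exists \varphi')$, descends consistently to the merged representation, and that after merging, the transition on each shared coordinate agrees with its definition in every factor that contained it. Both follow from the observation that, at every reachable joint state, the coordinates being identified already take identical values by construction, so the merged automaton accepts exactly the same language as the unmerged product and remains $P$-deterministic.
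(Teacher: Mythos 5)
Your proof is correct and follows essentially the same route as the paper's: induction on the nesting depth, observing that with $\Obsset_{\chi}=P$ every reachable state of $\pow(\cT_a^P,P,U)$ carries a uniform $2^P$-component that can be factored out, and then using $P$-determinism of the sub-DFAs (via Remark \ref{remark:exploit-obs-determinism}) to collapse the powerset over the product and merge the duplicated agent- and $2^P$-coordinates, exactly as in the paper's argument. The level of detail on the merging/acceptance bookkeeping matches what the paper itself provides.
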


\begin{proof}
  The arguments are fairly similar to 
the proof of Lemma \ref{lemma:step-semantics-history-DFA-single-exp}.

For $A \subseteq \Ag$, $a \in \Ag$, let $D_A$ and $\delta_a$ be defined
as in the proof of Lemma \ref{lemma:step-semantics-history-DFA-single-exp}. 
We define a transition function
 $\delta_A : D_A \times 2^P \times 2^P \to D_A \times 2^P$  as follows.
For $H \subseteq P$, $T_a \subseteq S_a$ and $O \subseteq P$, let
\begin{center}
  $\delta_a^{\tinypublic}\bigl( \bigl((T_a)_{a\in A},O\bigr), H \bigr) \ = \ 
      \bigl( (\delta_a(T_a,H))_{a\in A}, H \bigr)$
\end{center}
Furthermore, let 
$I_a^H=\{\init_a\}$ if $L_a(\init_a)= H \cap P_a$ and
$I_a^H=\varnothing$ otherwise.
We now show that for each standpoint formula $\chi = \StMod{a}{\varphi}$
there exists a history-DFA $\cD$ of the form 
\begin{center}
  $\cD = 
  (D_A \times 2^P \cup \{\init_{\cD}\},\delta_A^{\tinypublic},\init_{\cD},F_{\chi})$ 
\end{center}
for some $F_{\chi} \subseteq D_A \times 2^P$ where $A = \Ag(\chi)$ and
$\delta_A^{\tinypublic}(\init_{\cD},H)= \bigl((I_a^H)_{a\in A}, H \bigr)$ 
for each $H \subseteq P$.

  Again, we use an induction on the nesting depth of standpoint modalities. The case when $\chi = \StMod{a}{\varphi}$, where $\varphi$ is an LTL formula is obvious. The default history-DFA arises by the powerset construction applied to $\cT_a^P$ and $\Obsset_{\chi} = P$. Recall the corresponding definition of the transition function of the default history-DFA $\cD$
\begin{center}  
    \begin{tabular}{r}
  	$\delta_{\cD}(x,H)$  =          
  	$\bigr\{ s' \in S : \text{ there exists } s \in x \text{ with }$
  	\ \ \\
  	
  	$s \to s'$ \text{ and } $L(s')  = H  \bigr\}.$
  	\\[1ex]
  \end{tabular}   
\end{center}
  Then the reachable fragment of $\cD$ contains only states which are sets of tuples $(s, O)$ s.t. if $(s_1, O_1), (s_2, O_2) \in x $, where $x$ is a state in the default history-DFA $\cD$,
then $O_1 = O_2$.

  Suppose now that $\varphi$ has $k$ maximal standpoint subformulas, say
  $\chi_1 = \StMod{b_1}{\psi_1},\ldots,\chi_k = \StMod{b_k}{\psi_k}$. 
  Let $B_i=\Ag(\chi_i)$. Then, $\Ag(\varphi) = B_1 \cup \ldots \cup B_k$
  and $\Ag(\chi)=\{a\} \cup \Ag(\varphi)$. 
By induction hypothesis, there are history-DFAs $\cD_i$ for $\chi_i$ such that $\cD_i$ has the shape
$(D_{B_i}\times 2^P \cup\{\init_i\}, \delta_{B_i}^{\tinypublic},\init_i,F_i)$.
Let us revisit the definition of the transition system $\cT_{\chi}$.
  In Section \ref{sec:model-checking}, we defined the state space $Z_{\chi}$ 
  as the set of tuples $((s, O), x_1,\ldots,x_k)$ 
  where $(s, O) \in S_a^P$ and $x_i$ is a state
  in $\cD_i$.
  Thus, 
  the reachable states in $\cT_{\chi}$ have the
  form $((s,O),(T_b, O)_{b \in B_1},\ldots, (T_b, O)_{b\in B_k})$ 
  and contain redundant components
  if there are agents $b\in \Ag$ that belong to multiple $B_i$'s.
  These components as well as the duplicated $O$-components can be merged.
  But then, the reachable states in the default history-DFA 
  $\cD_{\chi}=
  \pow(\cT_{\chi},P,\Sat_{\cT_{\chi}}(\exists \varphi'))$ 
  defined as in
  Definition \ref{def:default-history-DFA} 
  have the form $(U,\{ V \})$ where $U = (U_a, R) \in 2^{S_a}\times 2^P$ and 
  $V = (V_b, R)_{b\in  \Ag(\varphi)}$ with $V_b \subseteq S_b$.
  Moreover, if $a \in \Ag(\varphi)$ then $U_a = V_a$. 
  In this case, the redundant component can be dropped again.
\end{proof} 


We now turn to the decremental semantics. 
If $\chi =\StMod{a}{\varphi}$ then
$\Obsset_{\chi_i} \subseteq \Obsset_{\chi}$ 
for all maximal standpoint
subformulas $\chi_i$ of $\varphi$, and hence their history-DFAs $\cD_i$
are $\Obsset_{\chi}$-deterministic.
In contrast to the step semantics, we cannot guarantee that
  the state space of the generated history-DFA $\cD_{\chi}$ 
  is contained in $\prod_{a\in \Ag(\chi)} 2^{S_a}$. 
  For example, 
  the reachable states of the history-DFA 
  for $\chi=\StMod{a}{ \neXt \StMod{b}{ \neXt \StMod{a}{p} }}$
  under the decremental semantics 
  consist of three components $(T_1,T_2,T_3)$
  where $T_1,T_3 \subseteq S_a$ and $T_2 \subseteq S_b$. 
  Under the decremental semantics, we cannot merge
  the first and the third component as they rely on different observation
  sets ($P_a$ for the first component and $P_a \cap P_b$ 
  for the third component). Thus, $T_1$ and $T_3$ can be different.
However,
 the inductive model checking approach of Section \ref{sec:model-checking}
corresponds to a bottom-up processing of the syntax tree of $v$.
One can now define history-DFAs $\cD_v$ for the nodes in the syntax
tree that have the shape
$\prod_{w} \pow(\cT_{a_w},\Obsset^{\tinydecr}_w)$
where $w$ ranges 
over all nodes in the syntax subtree of the node representing
$\chi$ such that the formula given by $w$ is a standpoint formula
$\StMod{a_w}{\varphi_w}$.

Let, as before, $\phi$ be the $\StpLTL$ formula to be model checked. For each node $v$ in the syntax tree of $\phi$, we write $\phi_v$ for the subformula represented by $v$ and $\Tree(v)$ for the subtree of node $v$. Thus, $\phi  = \phi_r$ for the root node $r$ of the syntax tree. 
All nodes $v$ where $\phi_v$ is a standpoint formula are called standpoint nodes.  $\StpNodes_v$ denotes the set of standpoint nodes in $\Tree(v)$ excluding $v$.
Thus, if $v$ is a standpoint node then 
$\{\phi_w : w \in \StpNodes(v)\}$ 
is the set of all proper standpoint subformulas of $\phi_v$ 
(excluding $\phi_v)$.
The sets $Q^{\tinydecr}_v$ are defined as in Remark \ref{remark:context-dependency-syntax-tree}. Recall that $\Obsset^{\tinydecr}_v=Q^{\tinydecr}_v \cap P_a$ if $v$ is a standpoint node and $\phi_v =\StMod{a}{\varphi_v}$.

Recall that $\pow(\cT,\Obsset)$ denotes the structure defined 
as in Definition \ref{def:default-history-DFA} 
without the declaration of final states.

\begin{lemma}
\label{lemma:decr-semantics-history-DFA-single-exp}
  Under the decremental semantics, for each standpoint node $v$
  there is a history-DFA $\cD_v$ of the shape 
  $\prod_{w \in \StpNodes_v} \pow(\cT_{a_w},\Obsset^{\tinydecr}_w)$.
\end{lemma}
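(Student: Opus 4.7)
The plan is to induct on the nesting depth of standpoint modalities in $\phi_v$, following the same pattern as the proofs of Lemmas \ref{lemma:step-semantics-history-DFA-single-exp} and \ref{lemma:public-semantics-history-DFA-single-exp}, but tracking history-DFAs per syntax tree node rather than per agent. The reason we must index by nodes is that two different occurrences of a standpoint modality for the same agent can carry different parameters $Q^{\tinydecr}$, and therefore different observation sets $\Obsset^{\tinydecr}$, so they cannot be merged into a single $\pow(\cT_a,\cdot)$ factor as they could under $\stepmodels$.

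The key enabling fact is a \emph{monotonicity of $Q^{\tinydecr}$ along the syntax tree}: whenever $w$ is a standpoint node lying strictly below a standpoint node $v$ in $\Tree(v)$, unfolding the recursive definition in Remark \ref{remark:context-dependency-syntax-tree} gives $Q_w^{\tinydecr} \subseteq Q_v^{\tinydecr} \cap P_{a_v}$, and hence $\Obsset_w^{\tinydecr} = Q_w^{\tinydecr} \cap P_{a_w} \subseteq Q_v^{\tinydecr} \cap P_{a_v} = \Obsset_v^{\tinydecr}$. In particular, any DFA that is $\Obsset_w^{\tinydecr}$-deterministic is \emph{a fortiori} $\Obsset_v^{\tinydecr}$-deterministic, since agreement on the larger observation set implies agreement on the smaller. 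This monotonicity is exactly the hypothesis needed to invoke Remark \ref{remark:exploit-obs-determinism} at each inductive step.

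The base case is $\phi_v = \StMod{a_v}{\varphi_v}$ with $\varphi_v$ an LTL formula; since $\Obsset_v^{\tinydecr} \subseteq P_{a_v}$, Remark \ref{T-a-R} gives $R_{\chi_v} = P_{a_v}$ and $\cT_{a_v}^{R_{\chi_v}} = \cT_{a_v}$, so the default construction yields $\cD_v = \pow(\cT_{a_v},\Obsset_v^{\tinydecr},U)$ for an appropriate $U$, which is of the claimed product shape with a single factor. For the inductive step, let $w_1,\ldots,w_k$ be the nodes of the maximal standpoint subformulas of $\varphi_v$. By induction each $\cD_{w_i}$ has the claimed product shape, and by the monotonicity argument each $\cD_{w_i}$ is $\Obsset_v^{\tinydecr}$-deterministic. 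Applying Remark \ref{remark:exploit-obs-determinism} we may replace the full powerset construction for $\cD_v$ by the factored product $\pow(\cT_{a_v},\Obsset_v^{\tinydecr}) \bowtie \cD_{w_1} \bowtie \cdots \bowtie \cD_{w_k}$, with accepting set transported from $\Sat_{\cT_{\chi_v}}(\exists \varphi_v')$ via the bijection between reachable states of the two formulations. Expanding each $\cD_{w_i}$ by the induction hypothesis and noting that the standpoint nodes of $\Tree(v)$ decompose as $\{v\}$ together with the standpoint nodes of the $\Tree(w_i)$'s, one obtains the desired product form.

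The main obstacle is verifying that this factored reformulation really is a correct history-DFA for $\chi_v$: it must accept exactly $\{h \in (2^P)^+ : (*,h) \decrmodels{Q_v^{\tinydecr}} \chi_v\}$ and be $\Obsset_v^{\tinydecr}$-deterministic. Determinism follows cleanly from the monotonicity argument above, and language correctness is inherited from the soundness of the default history-DFA (Lemma \ref{lemma:soundness-default-history-DFA}) together with the state-identification provided by Remark \ref{remark:exploit-obs-determinism}. A secondary but important bookkeeping point, absent in the step and public-history cases, is that one cannot collapse two factors carrying the same agent but different observation sets; this is why the product is indexed over syntax tree nodes and why the resulting bound on $|\cD_v|$ is single-exponential in the combined size of $\phi$ and the $\cT_a$'s rather than in $|\Ag|$.
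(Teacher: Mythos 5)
Your proposal is correct and matches the paper's own argument: both proceed by induction over the syntax tree, use the monotonicity $\Obsset_w^{\tinydecr}\subseteq\Obsset_v^{\tinydecr}$ to conclude that the sub-DFAs $\cD_{w_i}$ are $\Obsset_v^{\tinydecr}$-deterministic, and then invoke Remark \ref{remark:exploit-obs-determinism} to factor $\pow(\cT_a \bowtie \cD_{w_1}\bowtie\ldots\bowtie\cD_{w_k},\Obsset_v^{\tinydecr})$ as $\pow(\cT_a,\Obsset_v^{\tinydecr})\bowtie\cD_{w_1}\bowtie\ldots\bowtie\cD_{w_k}$, with the base case handled by the default construction (using $R_\chi=P_a$ as in Remark \ref{T-a-R}). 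Your additional remarks on language correctness and on not merging factors for the same agent with different observation sets are consistent with the paper's surrounding discussion.
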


\begin{proof}
  The proof is by induction on $|\StpNodes_v|$.

  The basis of induction is obvious as $|\StpNodes_v|=0$ implies that
  $\phi_v$ has no proper standpoint subformulas.
  That is, $\varphi_v$ is an LTL formula and we can simply
  use the default history-DFA $\cD_v$ defined as in
  Definition \ref{def:default-history-DFA}. 
  
  Assume the induction hypothesis holds for the maximal standpoint subformulas \( w_i \) for \( i \in \{1, \dots, k\} \), where the associated history-DFA are given by
  \[
  \cD_{w_i} = \prod_{w \in \StpNodes_{w_i}} \pow(\cT_{a_w}, \Obsset^{\tinydecr}_w).
  \]
  
  Now, consider the history-DFA \(\cD_\nu\), where $\phi_v =\StMod{a}{\varphi_v}$. Since \( \Obsset^{\tinydecr}_{w_i} \subseteq \Obsset^{\tinydecr}_\nu \) for \( i \in \{1, \dots, k\} \), it follows that all \(\cD_{w_i}\) are \(\Obsset^{\tinydecr}_\nu\)-deterministic. So,  $ \pow(\cT_a \bowtie \cD_{w_{1}} \bowtie \ldots  \bowtie \cD_{w_{k}}, \Obsset^{\tinydecr}_\nu) = \pow(\cT_a , \Obsset^{\tinydecr}_v) \bowtie \cD_{w_{1}} \bowtie \ldots  \bowtie \cD_{w_{k}}$, which completes the proof. 
\end{proof}

	The product 
	\[
	\prod_{w \in \StpNodes_v} \pow(\cT_{a_w}, \Obsset^{\tinydecr}_w)
	\] 
	can be simplified by merging redundant components. Specifically, if \(w, r \in \StpNodes_v\) with 
	\((a_w, \Obsset^{\tinydecr}_w) = (a_r, \Obsset^{\tinydecr}_r)\), then the corresponding terms can be merged.

We now consider the pure observation-based semantics under the additional
assumption that $P_a \cap P_b =\varnothing$ for $a,b\in \Ag$ with $a \not= b$.
Then, for each family $(H_a)_{a\in \Ag}$ where $H_a \subseteq P_a$ there exists
$H \subseteq P$ with $H_a = H \cap P_a$ for all $a \in \Ag$.
Suppose $\chi=\StMod{a}{\varphi}$ is a standpoint subformula
of $\phi$ as in the step of induction and regard the definition of the
transition system $\cT_{\chi}$ which was defined as a product 
$\cT_{\chi} = \cT_a \bowtie \cD_1 \bowtie \ldots \bowtie \cD_k$
where the $\cD_i$'s are history-DFA for the maximal standpoint subformulas
$\chi_i=\StMod{b_i}{\psi_i}$ of $\varphi$. 
(Recall from Remark \ref{T-a-R} that we assume $\cT_a = \cT_a^R$.)
As the $\cD_i$'s are $P_{b_i}$-deterministic, we may think of the $\cD_i$
to be DFA over the alphabet $2^{P_{b_i}}$. 
We can now rephrase the transition relation $\to_{\chi}$ of $\cT_{\chi}$ 
as follows. 
Let $B = \{ b_i : i=1,\ldots,k\}$ and
for $b\in \Ag$, $I_b = \{i \in \{1,\ldots,k\} : b_i = b\}$.
Then, there is a transition  
 $(s,x_1,\ldots,x_k) \to_{\chi} (s',x_1',\ldots,x_k')$ in $\cT_{\chi}$
if and only if $s \to_a s'$ in $\cT_a$ and 
$x_i'=\delta_i(x_i,L_a(s'))$ for $i \in I_a$ and
for each $b \in B \setminus \{a\}$ there exists $\tilde{H}_b \subseteq P_b$
such that 
$x_i'= \delta_i(x_i,\tilde{H}_b)$ for $i \in I_b$.
Thus, the local transitions of $\cD_i$ for $i \notin I_a$ do not depend on
the local state $s$ of $\cT_a$. Furthermore, there is only synchronisation
between DFAs $\cD_i, \cD_j$ where $b_i=b_j$.   
This observation can be used to show by induction on the nesting depth
of standpoint formulas $\chi=\StMod{a}{\varphi}$ 
that the definition of default history-DFAs $\cD_{\chi}$ can be modified
such that the state space is contained in 
$2^{S_a} \times \prod_{b \in \Ag(\varphi)} 2^{S_b}$.

\begin{lemma}
 \label{lemma:pobs-disjoint-semantics-history-DFA-single-exp}
  Suppose that the $P_a$'s are pairwise disjoint.
  Under the pure observation-based semantics, for each standpoint formula
  $\chi = \StMod{a}{\varphi}$ there is a
  history-DFA $\cD_{\chi}$ where the state space is contained in
  $2^{S_a} \times \prod_{b \in \Ag(\varphi)} 2^{S_b}$ along with an additional initial state.
 
\end{lemma}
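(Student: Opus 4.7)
The plan is to proceed by induction on the standpoint-nesting depth of $\chi = \StMod{a}{\varphi}$, showing at each step that the powerset construction on $\cT_\chi$ collapses into a product that has one $2^{S_b}$ factor per agent. For the base case, $\varphi$ is an LTL formula, so $\Obsset_\chi = P_a$ gives $R_\chi = P_a$ and $\cT_a^{R_\chi} = \cT_a$ (Remark \ref{T-a-R}); the default history-DFA $\pow(\cT_a,P_a,U)$ of Definition \ref{def:default-history-DFA} then has state space $2^{S_a}$, matching the bound since $\Ag(\varphi)=\varnothing$.

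For the inductive step, let $\chi_i=\StMod{b_i}{\psi_i}$ be the maximal standpoint subformulas of $\varphi$. By IH, each $\cD_i$ has state space contained in $\prod_{b \in \Ag(\chi_i)} 2^{S_b}$. I would first invoke the key observation highlighted in the paragraph preceding the lemma: in the transition rule of $\cT_\chi$, the guess $H'$ must satisfy $H' \cap P_a = H \cap P_a$, but its projections $H' \cap P_{b_i}$ for $b_i \neq a$ are otherwise free, and (by pairwise disjointness) the projections to distinct $P_{b_i}$, $P_{b_j}$ are independent. Since $\cD_i$ is $P_{b_i}$-deterministic, its transition depends only on $H' \cap P_{b_i}$. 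Consequently, when unrolling a history $h$, the choices driving each family $\{\cD_i : b_i = b\}$ for $b \neq a$ are independent of $\cT_a$ and of the families for other agents.

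I would then argue that the reachable subsets of $Z_\chi$ in the powerset construction factor as a Cartesian product indexed by agents: a subset $T_a \subseteq S_a$ determined by $h \cap P_a$, together with, for each $b \in \Ag(\varphi)$, a single ``aggregated'' state capturing all $\cD_i$-states with $b_i = b$. Using the IH to unfold each $\cD_i$-component into its per-agent factors, and then merging shared agent slots across different $\cD_i$'s (all of which, inductively, track the same piece of information, namely the reachable $\cT_b$-states under a consistent observation of $P_b$), this aggregated state collapses into a single element of $2^{S_b}$ for each $b$. The resulting modified $\cD_\chi$ has state space contained in $2^{S_a} \times \prod_{b \in \Ag(\varphi)} 2^{S_b}$, and soundness is preserved because the factorization exactly mirrors the independence of the guesses.

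The main obstacle will be the merging step: formally justifying that when several $\cD_i$'s share an agent $b$, their $b$-components can be replaced by a single $2^{S_b}$-valued component without losing information. This requires a careful inductive strengthening asserting that the $b$-component of every $\cD_i$ with $b \in \Ag(\chi_i)$ records precisely the same data — the set of $\cT_b$-states consistent with the $P_b$-projection of the (partially free) guessed history — and hence agrees across all $i$ whenever it is defined. Once this invariant is in place, the product bound follows immediately, and the rest (transition function, accept condition derived from $\Sat_{\cT_\chi}(\exists \varphi')$, and $\Obsset_\chi$-determinism) is a routine adaptation of the default construction.
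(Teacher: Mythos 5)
Your overall route---induction on the nesting depth, the base case via $\pow(\cT_a,P_a,U)$, and an inductive step that exploits disjointness of the $P_b$'s together with $P_{b_i}$-determinism of the $\cD_i$'s to collapse the product---is the same skeleton as the paper's proof. The genuine gap is in the merging/aggregation step, and the invariant you propose to close it is not correct as stated. In $\cT_\chi = \cT_a \bowtie \cD_1 \bowtie \ldots \bowtie \cD_k$ the components tracking a fixed agent $b \neq a$ do \emph{not} all record the same data: the top-level component of a $\cD_i$ with $b_i=b$ is, as a standalone history-DFA for $\StMod{b}{\psi_i}$, necessarily driven by the $P_b$-projection of the word read (by Lemma \ref{obs-a-lemma} its language genuinely depends on that projection), whereas a $b$-slot sitting inside another $\cD_j$ with $b\in\Ag(\psi_j)$, $b\neq b_j$ evolves input-independently. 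So ``all $b$-components agree'' fails, and likewise your claim that the reachable macro-states of the powerset construction already factor into a single subset of $S_b$ per agent does not hold for the unmodified construction: since the guesses $H'\cap P_b$ are free, a reachable macro-state is in general a set of tuples of per-guess subsets of $S_b$, not a product of single subsets.

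What is actually needed---and what the paper does---is a canonical-form strengthening of the induction hypothesis about the \emph{shape} of the DFA rather than about agreement of components: the history-DFA for $\StMod{a}{\varphi}$ is built so that only its $a$-component is input-driven (via $\delta_a(\cdot,H)$, i.e.\ by $H\cap P_a$), while every other agent's component evolves blindly via the step-like map $\Delta_b$ (all $\to_b$-successors), cf.\ the transition function $\delta_{a,B}^{\tinypobs}$ in the paper's proof and the analogous Lemma \ref{lemma:step-semantics-history-DFA-single-exp}. The crux is then a separate coarsening argument: in the outer product the $P_b$-driven top-level components for $b\neq a$ may be replaced by this blind evolution precisely because $P_b\cap\Obsset_\chi=\varnothing$, so the $P_b$-projection of the quantified history and future is unconstrained and enters only existentially; after this flattening all $b$-slots trivially coincide and can be merged, the combined auxiliary DFA becomes $P_a$-deterministic, and Remark \ref{remark:exploit-obs-determinism} collapses the powerset construction to states consisting of one subset of $S_a$ (with the duplicate $a$-slot removed) and one subset of $S_b$ for each $b\in\Ag(\varphi)$. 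Your proposal asserts an agreement that does not hold, where the argument has to establish a language-preserving coarsening; without that, the decisive step does not go through.
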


\begin{proof}
For $a \in \Ag$ and $B \subseteq \Ag$, 
let $\delta_a : 2^{S_a} \times (2^P)^* \to 2^{S_a}$,
$\Delta_a : 2^{S_a} \to 2^{S_a}$ and $\Delta_B : D_B \to D_B$ 
(where $D_B = \prod_{b\in B} 2^{S_b}$) and
$\delta_B^{\tinystep} : D_B \times 2^P \to D_B$
be defined as in
the proof of Lemma \ref{lemma:step-semantics-history-DFA-single-exp}.

Given a pair $(a,B)\in \Ag \times 2^{\Ag}$, we define
$D'_{a,B}= 2^{S_a} \times D_B$ and
$\delta_{a,B}^{\tinypobs} : D'_{a,B} \times 2^P \to D'_{a,B}$ as follows:

$$
  \delta_{a,B}^{\tinypobs}\bigl( (U, (U_b)_{b \in B}), H \bigr) 
  \ = \ 
  \bigl( \delta_a(U,H), \Delta_B^{\tinystep}((U_b)_{b\in B}) \bigr)
$$
for all $H \subseteq P$, $U \subseteq S_a$ and $U_b \subseteq S_b$ for $b\in B$.
Furthermore, we define $$
\delta_{a,B}^{\tinypobs}( init_{a, B}, H)=(\{\init_a\}, (\{\init_b\})_{b\in B}),
$$ when $L_a(init_a) = H\cap P_a$.

We shall write $\cD_{a,B}$ for the structure
$(D_{a,B},\delta_{(a,B)}^{\tinypobs},\init_{a,B})$, where $D_{a,B} = D'_{a,B}\cup \{init_{a, B}\}$
and $\cD_B$ for  $(D_{B}\cup \{\init_{B}\},\delta_{B}^{\tinystep},\init_{B})$,
where $\delta_{B}^{\tinystep}(\init_B, H) = (\{\init_b\})_{b\in B}$, for each $H\subseteq P$.

We now show by induction on the nesting depth of standpoint formulas
that each standpoint formula
$\chi = \StMod{a}{\varphi}$ has a history-DFA of the form 
$(D_{a,E},\delta_{a,E}^{\tinypobs},\init_{a,E},F_{\chi})$
where 
$E = \Ag(\varphi)\setminus \{a\}$ 
if $\varphi$ has no subformula $\StMod{a}{\psi}$ 
where the alternation depth is strictly smaller than $\ad(\varphi)=\ad(\chi)$
and $E = \Ag(\varphi)$ otherwise.
That is, $a \in E$ if and only if 
if $\varphi$ has a subformula $\StMod{a}{\psi}$ 
with $\ad(\StMod{a}{\psi}) < \ad(\chi)$

The remaining steps are now similar to the proof of
Lemma \ref{lemma:step-semantics-history-DFA-single-exp}.
The basis of induction is obvious as we can deal with 
the history-DFA $\pow(\cT_a,P_a,\ldots)$ 
defined as in Definition \ref{def:default-history-DFA}.

Suppose now that $\varphi$ has $k$ maximal standpoint subformulas, say
$\chi_1 = \StMod{b_1}{\psi_1},\ldots,\chi_k = \StMod{b_k}{\psi_k}$.
By induction hypothesis,
there are history-DFA $\cD_1,\ldots,\cD_k$ 
for $\chi_1,\ldots,\chi_k$ that have the form
$\cD_i = (D_{b_i,C_i},\delta_{b_i,C_i}^{\tinypobs},\init_{b_i,C_i},F_i)$
where $C_i = \Ag(\psi_i)$ or $C_i = \Ag(\psi_i)\setminus \{b_i\}$,
depending on whether $\psi_i$ contains a standpoint subformula for agent $b_i$
with alternation depth strictly smaller than $\ad(\chi_i)$.

We now revisit the definition of the transition system $\cT_{\chi}$.
In Section \ref{sec:model-checking}, we defined the state space $Z_{\chi}$ 
as the set of tuples $(s,x_1,\ldots,x_k)$ 
where $s \in S_a$ and $x_i$ is a state
in $\cD_i$. (Recall from Remark \ref{T-a-R} that we may assume that
$\cT_a^R = \cT_a$.)
Thus, 
the reachable states in $\cT_{\chi}$ have the
form $(s,(b_1,(U_c)_{c \in C_1}),\ldots, (b_k,(U_{c\in C_k})))$ 
and contain redundant components
if there are agents $c\in \Ag$ that belong to multiple $C_i$'s
or if an agent $b$ coincides with multiple $b_i$'s.
Thus, we may can redefine the state space of $\cT_{\chi}$ to be
$S_a\times \prod_{b \in B} 2^{S_b} \times
     \prod_{c \in C} 2^{S_c}$ where $B = \{b_1,\ldots,b_k\}$ and
$C = C_1 \ldots \cup C_k$. Note that 
$B \cap C \not= \varnothing$ as well as $a \in B \cup C$ is possible.

Using the assumption that $P_a \cap P_b = \varnothing$ for $a \not= b$,
the transitions in $\cT_{\chi}$ are given by:
$$
  \bigl( s, (T_b)_{b\in B}, (U_c)_{c\in C} \bigr) \to_{\chi}
  \bigl( s', (T_b')_{b\in B}, (U_c')_{c\in C} \bigr)
$$
if and only if the following conditions hold:
\begin{itemize}
\item 
   $s \to_a s'$
\item 
   $(U_c')_{c\in C} = \Delta_C ((U_c)_{c\in C}$
\item
    for each $b \in B$:
    \begin{itemize}
    \item if $b = a$ then $T_a'=\delta_a(T_a,L_a(s'))$ 
    \item if $b \not= a$ then $T_b' = \Delta_b(T_b)$ 
           (as $P_a \cap P_b =\varnothing$)
    \end{itemize}
\end{itemize}
Note that $\delta_a(T_a,L_a(s')) = \delta_a(T_a,H)$
          for each $H \subseteq P$ with $H \cap P_a = L_a(s')$ by definition
of $\delta_a$.

Thus, if $b \in B \cap C$ and $b \not= a$ 
then $U_b = T_b$ for all reachable states 
$\bigl( s, (T_b)_{b\in B}, (U_c)_{c\in C} \bigr)$
in $\cT_{\chi}$. Hence, we can further merge components of
the states in $\cT_{\chi}$ and redefine $\cT_{\chi}$ 
such that its state space $Z_{\chi}$ is contained
\begin{itemize}
\item 
    in $S_a\times  \prod_{c \in E} 2^{S_c}$ if $a \notin B$
\item 
    in $S_a\times 2^{S_a} \times \prod_{c \in E} 2^{S_c}$ if $a \in B$
\end{itemize}
where $E = (B \setminus \{a\}) \cup C$.
Thus, $\cT_{\chi}$ can be redefined as the product 
$\cT_a \bowtie \cD_{a,E}$ if $a \in B$ and
as the product $\cT_a \bowtie \cD_{E}$ if $a \notin B$.

We now consider the powerset construction used to define the
default history-DFA $\cD_{\chi}= \pow(\cT_{\chi}',P_a,\ldots)$.
(Recall that $\cT_{\chi}'$ is essentially the same $\cT_{\chi}$ except
the labelings are restricted to $P$.)
Both $\cD_{a,E}$ and $\cD_E$ ar $P_a$-deterministic.
This yields:
\begin{itemize}
\item
  If $a \in B$ then the reachable states of 
  $\cD_{\chi} = \pow(\cT_a \bowtie \cD_{a,E}, P_a,\ldots)$ have the form
  $(T,\{ (U,(U_c)_{c\in E}\})$ where  $U=T$.
\item
  If $a \notin B$ then the states of 
  $\cD_{\chi} = \pow(\cT_a \bowtie \cD_{E}, P_a,\ldots)$ have the form
  $(T,\{ (U_c)_{c\in E}\})$.
\end{itemize}
In both cases, we can redefine the default history-DFA such that
$\cD_{\chi}$ has the form 
$(D_{a,E},\delta_{a,E}^{\tinypobs},\init_{a,E},F_{\chi})$.
\end{proof}


\subsection{Embedding of $\StpLTL$ into LTLK}

  In Lemma \ref{lemma:embdding-pobs-in-LTLK}
  we saw that standpoint LTL under 
  the pure observation-based semantics can be embedded into
  LTLK. 
  We now show  (see Lemma \ref{lemma:step-public-in-LTLKone}
  and Lemma \ref{lemma:embedding-incr-in-LTLK} below) that
  the embedding proposed in Lemma \ref{lemma:embdding-pobs-in-LTLK}
  can be modified for the other four semantics.
  While the proposed embedding for the pure observation-based
  semantics is fairly natural and preserves
  the set of agents, the embeddings presented in the proofs of
  Lemma \ref{lemma:step-public-in-LTLKone}
  and Lemma \ref{lemma:embedding-incr-in-LTLK} modify the agent sets.
  In the cases of the step and public-history semantics
  it suffices to deal with a single agent in the constructed LTLK structure.
  In the case of the incremental and decremental semantics,
  we add new agents which can be seen as copies of the agents $a\in \Ag$
  with different indistinguishable relations in the LTLK structure.

\begin{lemma}
 \label{lemma:step-public-in-LTLKone}
   For $*\in \{\step,\public\}$,
   the $\StpLTL^*$ model checking problem 
   is polynomially reducible to the LTLK$_1$ model checking
   problem.
\end{lemma}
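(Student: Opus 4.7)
The plan is to adapt the construction from the proof of Lemma~\ref{lemma:embdding-pobs-in-LTLK} so that the resulting LTLK structure uses only a single agent $\alpha$. First I would build essentially the same synchronous-product transition system $\cT'$ over $P' = P \cup \{\bot_a : a \in \Ag \cup \{0\}\}$ from the completions $\cT_a^{\bot}$, but equip it with one equivalence relation $\sim_{\alpha}$ instead of a family $(\sim_a)_{a \in \Ag}$. Specifically, I would take $\sim_{\alpha}$ to be the universal relation for the step semantics, and $\sigma \sim_{\alpha} \theta$ iff $L'(\sigma) \cap P = L'(\theta) \cap P$ for the public-history semantics.

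Next, I would reuse the formula translation $\varphi \leadsto \primed{\varphi}$ from the proof of Lemma~\ref{lemma:embdding-pobs-in-LTLK}, with the one change that every knowledge modality refers to $\alpha$: concretely, $\primed{(\StMod{a}{\varphi})} = \overline{K}_{\alpha} (\primed{\varphi} \wedge \Box \neg \bot_a)$, and $\ltlkstep{\phi} = \Box \neg \bot_0 \to \primed{\phi}$. Since $\alpha$ is the only agent occurring in $\ltlkstep{\phi}$, the alternation depth of the translated formula is at most $1$ regardless of $\ad(\phi)$, and the whole construction is visibly polynomial.

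The core task is to verify the inductive claim that for every subformula $\varphi$ of $\phi$, every path $\pi = s_0 s_1 \ldots$ of $\cT'$ and every $n \in \Nat$,
\begin{center}
$(\pi, n) \LTLKmodels \primed{\varphi}$ \ iff \ $(\proj{\trace(\suffix{\pi}{n})}{P}, \proj{\trace(\prefix{\pi}{n})}{P}) \models_{*} \varphi$,
\end{center}
where $\models_{*} \in \{\stepmodels,\publicmodels\}$. In the inductive step for $\StMod{a}{\cdot}$, universality of $\sim_{\alpha}$ makes $\overline{K}_{\alpha}$ quantify over arbitrary paths $\pi'$ of $\cT'$, which exactly realises the existential choice of an arbitrary history $h'$ of length $n{+}1$ in the definition of $\stepmodels$; the conjunct $\Box \neg \bot_a$ then forces the $\cT_a^{\bot}$-track of $\pi'$ to stay inside $\cT_a$ (since $\bot_a$ is absorbing), which via the product constraint $L_0^{\bot}(\cdot) \cap P_a = L_a^{\bot}(\cdot) \cap P_a$ matches the requirements $\Reach(\cT_a, h') \neq \varnothing$ and $f' \in \Traces^P(\cT_a, t)$. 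For the public case, $\sim_{\alpha}$ additionally pins down $\proj{\trace(\prefix{\pi'}{n})}{P} = \proj{\trace(\prefix{\pi}{n})}{P}$, i.e., $h' = h$, which is precisely the observation condition built into $\publicmodels$.

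The main obstacle will be managing, within a single LTLK agent, the fact that distinct original agents $a, b, \ldots$ require distinct ``stay-inside-$\cT_a$''/``stay-inside-$\cT_b$'' constraints; the per-agent flags $\bot_a$ are exactly the mechanism that lets the single modality $\overline{K}_{\alpha}$ carry this information, and correctness under nesting reduces to the fact that the $\bot_a$ components evolve independently in the synchronous product $\cT'$. Conceptually, the crucial observation driving the proof is that for $\stepmodels$ and $\publicmodels$ the observation function $\obs_a$ does not depend on $a$, so all knowledge modalities can be collapsed into the single relation $\sim_{\alpha}$, whereas for $\pobsmodels$ the relations $\sim_a$ genuinely differ across agents.
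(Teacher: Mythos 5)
Your proposal matches the paper's proof essentially verbatim: the same single-agent LTLK structure over the synchronous product of the completions $\cT_a^{\bot}$, the universal relation $\sim_{\alpha}$ for the step semantics, a label-based relation for the public-history semantics, and the same translation $\primed{(\StMod{a}{\varphi})} = \overline{K}_{\alpha}(\primed{\varphi} \wedge \Box \neg \bot_a)$ with outer formula $\Box \neg \bot_0 \to \primed{\phi}$, which trivially has alternation depth $1$. The only deviation is that in the public case you define $\sim_{\alpha}$ via agreement of $L'$ restricted to $P$, whereas the paper uses agreement of the full labels including the $\bot$-flags; your restriction to $P$ is if anything the more careful choice, since the $\bot_a$-flags along a product path are not determined by its $P$-trace, so tying the quantified path to them would over-constrain the witness.
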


\begin{proof}
  We revisit the translation of
  $\StpLTL$ structures $\fT$ and $\StpLTL$ formulas $\phi$
  into an LTLK structure $\ltlk{\fT}$ and an LTLK formula $\ltlk{\phi}$
  presented in the proof of Lemma \ref{lemma:embdding-pobs-in-LTLK}
  for the pure observation-based semantics.

  For the step semantics, we redefine the LTLK structure 
  $\ltlk{\fT}$ as follows. The transition system $\cT'$
  is the same as in the proof of Lemma \ref{lemma:embdding-pobs-in-LTLK},
  but now we deal with a single agent, say $\alpha$, and define
  $\sim_{\alpha}$ to be the equivalence relation that identifies all states
  in $\cT'$.
  The translation of the $\StpLTL$ formula $\phi$ into an LTLK formula
  $\ltlkstep{\phi}$ is the same as for the pure observation-based setting,
  except that we replace $\psi= \StMod{a}{\varphi}$ with
  $\primed{\psi} = 
    \overline{K}_{\alpha}{(\primed{\varphi} \wedge \Box \neg \bot_a)}$.

  In this way, we obtain an LTLK formula $\ltlkstep{\phi}$ over a
  singleton agent set. 
  Such formulas have alternation depth at most 1.
  Thus, $\ltlkstep{\phi}$ is an LTLK$_1$ formula.
  Soundness in the sense that $\fT \stepmodels \phi$ if and only if
  $\ltlk{\fT} \LTLKmodels \ltlk{\phi}$ can be shown as in the proof
  of Lemma \ref{lemma:embdding-pobs-in-LTLK}
  (see Lemma \ref{app:lemma:embdding-pobs-in-LTLK}).

  For the public-history semantics,
  the reduction is almost the same as for the step semantics,
  except that we deal  with the equivalence relation
  $\sim_{\alpha}$ that identifies those states
  $\sigma$ and $\theta$ in $\cT'$ where 
  $L'(\sigma)=L'(\theta)$.
\end{proof}

The statement of Theorem \ref{thm:step-public-PSPACE-completeness}
is now a consequence of Lemma \ref{lemma:step-public-in-LTLKone}
in combination with the known PSPACE-completeness result for
LTLK$_1$.


\begin{lemma}
 \label{lemma:embedding-incr-in-LTLK}
  Let  $N=|\Ag|$, $d \geqslant 1$ and $*\in \{\decr,\incr\}$.
  The $\SLTL{*}{d}$ model checking problem 
  is polynomially reducible to the LTLK$_M$ model checking
  problem where $M = \min \{N,d\}$.
\end{lemma}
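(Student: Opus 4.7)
The plan is to extend the embedding of Lemma \ref{lemma:embdding-pobs-in-LTLK} by indexing LTLK agents over the context sets $A_\chi$ attached to standpoint occurrences in the syntax tree. I would keep the product transition system $\cT'$ and the auxiliary propositions $\bot_a$ from the pure observation-based construction unchanged. For each occurrence $\chi = \StMod{a}{\varphi}$ of a standpoint subformula in $\phi$, let $A_\chi \subseteq \Ag$ be $\{a\}$ together with all ancestor standpoint agents along the path from the root of the syntax tree to $\chi$; this is computable in linear time from the $Q_\chi$ of Remark \ref{remark:context-dependency-syntax-tree} and captures $\Obsset_\chi = \bigcup_{b \in A_\chi} P_b$ in the incremental case and $\Obsset_\chi = \bigcap_{b \in A_\chi} P_b$ in the decremental case. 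For each subset $A \subseteq \Ag$ arising as some $A_\chi$ in $\phi$ (at most $|\phi|$ of them), introduce an LTLK agent $\alpha_A$ and declare $\sigma \sim_{\alpha_A} \theta$ iff $L'(\sigma) \cap R_A = L'(\theta) \cap R_A$, where $R_A = \Obsset_A \cup \{\bot_b : b \in A\}$. The formula translation copies Lemma \ref{lemma:embdding-pobs-in-LTLK} except that each occurrence $\chi = \StMod{a}{\varphi}$ becomes $\overline{K}_{\alpha_{A_\chi}}(\primed{\varphi} \wedge \Box \neg \bot_a)$, and finally $\ltlk{\phi} = \Box \neg \bot_0 \to \primed{\phi}$. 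The reduction is polynomial.

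Soundness would follow from a structural induction parallel to the one in the proof of Lemma \ref{app:lemma:embdding-pobs-in-LTLK}. At the step $\chi = \StMod{a}{\varphi}$, the existential witness in $\overline{K}_{\alpha_{A_\chi}}$ produces a path $\pi'$ matching the current path on $\Obsset_\chi$, supplying the $\StpLTL$ witness $h'$; the absorbing nature of the $\bot_a$-states in $\cT_a^\bot$ combined with $\Box \neg \bot_a$ then forces $\pi'$ to be $\cT_a$-valid throughout, yielding $f' \in \Traces^P(\cT_a, t)$ and $t \in \Reach(\cT_a, h')$. The inclusion of $\bot_b$-matching for every $b \in A_\chi$ in $\sim_{\alpha_A}$ is faithful because a witness passed through an ancestor $\overline{K}_{\alpha_{A'}}$ with $b \in A'$ was already required to be $\cT_b$-valid, so the $\bot_b$-constraint only propagates validity that the nested chain of $\StpLTL$ quantifiers already demands.

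For alternation depth, along any root-to-leaf path traversing standpoint subformulas with agents $a_1, \ldots, a_k$, the context sets satisfy $A_{\chi_1} \subseteq A_{\chi_2} \subseteq \ldots \subseteq A_{\chi_k}$ with $A_{\chi_{i+1}} = A_{\chi_i} \cup \{a_{i+1}\}$. Consecutive LTLK agents $\alpha_{A_{\chi_i}}$ and $\alpha_{A_{\chi_{i+1}}}$ differ iff $a_{i+1} \notin A_{\chi_i}$, which simultaneously (i) strictly enlarges the chain, so the number of such events along the path is at most $|A_{\chi_k}| - |A_{\chi_1}| \leq N - 1$, and (ii) forces $a_{i+1} \neq a_i$ (since $a_i \in A_{\chi_i}$), so each LTLK-alternation corresponds to an $\StpLTL$-alternation and the count is bounded by the alternations of $\phi$. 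Hence $\ad(\primed{\phi}) \leq \min\{N, d\}$, completing the reduction.

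The main obstacle is the soundness argument in the decremental case, where $\Obsset_A$ shrinks as the context $A$ grows while the set of enforced $\bot_b$ markers grows. I expect this to be handled by exploiting the freedom left in the $P_b \setminus \Obsset_A$ labels of $\cT'$: the chain of outer $\overline{K}$-existentials can always be chosen so that the $\bot_b$-matching reflects only constraints that some nested $\StpLTL$ quantifier over $\cT_b$-reachable traces will impose further down, so no spurious restriction of the existential scope arises.
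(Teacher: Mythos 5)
Your reduction is essentially the paper's own: the same product of completed systems $\cT'$, LTLK agents indexed by the ancestor sets $A_\chi$ with indistinguishability defined over $\Obsset_{A_\chi}$ together with the $\bot$-markers, the translation $\StMod{a}{\varphi} \mapsto \overline{K}_{A_\chi}(\varphi' \wedge \Box\neg\bot_a)$ wrapped in $\Box\neg\bot_0 \to \cdot$, and the same counting argument bounding the alternation depth by $\min\{N,d\}$. Note that the paper likewise carries out the soundness induction only for the incremental semantics and dismisses the decremental case as ``analogous and omitted,'' so the obstacle you flag there is not resolved in more detail in the paper either.
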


\begin{proof}
Let $\fT$ be a given $\StpLTL$ structure and $\phi$ a 
$\StpLTL_d$ formula.
We modify the construction presented in the proof of
Lemma \ref{lemma:embdding-pobs-in-LTLK} 
for the pure observation-based
semantics and redefine the LTLK structure $\ltlk{\fT}$ 
and the LTLK formula $\ltlk{\phi}$.

The transition system $\cT'$ of $\ltlk{\fT}$ is the same as in the
proof of Lemma \ref{lemma:embdding-pobs-in-LTLK}.
But now we deal with more agents. The idea is to add agents for the standpoint
subformulas $\chi = \StMod{a}{\varphi}$ of $\phi$ where $Q^{\tinyincr}_{\chi}$
is a proper superset of $P_a$.

We present the details for the incremental semantics.
The arguments for the decremental semantics are analogous and omitted here.

We consider the syntax tree of $\phi$. For each node $v$ in the syntax tree, 
let $\phi_v$ be the formula represented by $v$.
Let $V$ be the set of nodes $v$
where $\phi_v$ is a standpoint formula, say
$\phi_v = \StMod{a_v}{\phi_w}$ for the child $w$ of $v$. 
Given $v \in V$, let $A_v$ denote the set
of all agents $a \in \Ag$ such that the unique path from the root to  
$v$ contains at least one node $w \in V$ with $a_w=a$.
(With the notations used in the paragraph before
Lemma \ref{lemma:N-EXP-incremental-semantics} we have
$A_v=\Ag_v \cup \{a\}$.)

The observation set for the occurrence of $\phi_v$ as a subformula of $\phi$
represented by node $v$ is 
$\Obsset_{v}^{\tinyincr} = P_{A_v}$
where $P_A = \bigcup_{a \in A} P_a$  for $A \subseteq \Ag$. 
The set $Q_v^{\tinyincr}$ defined in 
Remark \ref{remark:context-dependency-syntax-tree} is
either $P_{A_v}$ or $P_{A_v}\setminus \{a_v\}$.
More precisely,
$Q_v^{\tinyincr} = \varnothing$ if $v$ is the root. For all other nodes $v$,
if $w$ is the father of $v$, then $Q_v^{\tinyincr} = P_{A_v} \setminus \{a_v\}$ 
if $a_v \notin P_{A_w}$ and $Q_v^{\tinyincr} = P_{A_v}$ if $a_v \in P_{A_w}$.

The new agent set of the constructed LTLK structure $\ltlk{\fT}$ is given by:
\begin{center}
  $\Ag' \ = \ 
  \bigl\{ A_v : v \in V, A_v \not= \varnothing \bigr\}$
\end{center}
Thus, $\Ag'$ is a subset of $2^{\Ag}$. 
Its size $|\Ag'|$ is bounded by the size of the syntax tree of $\phi$
and therefore linearly bounded in the length of $\phi$.

The equivalence relations $\sim_{A}$ for agent $A\in \Ag'$
is given by $\sigma \sim_{A} \theta$ if and only if 
$L'(\sigma) \cap P_A^{\bot} = L' (\theta) \cap  P_A^{\bot}$
where $P_A^{\bot}= \bigcup_{a\in A} P_A^{\bot} = P_A \cup \{\bot_a : a \in \Ag \}$.

Intuitively, agent $a$ in $\fT$
corresponds to agent $\{a\}$ in $\ltlk{\fT}$ while the new agents $A$ with
$|A|\geqslant 2$ in $\ltlk{\fT}$ can be seen as coalitions of 
agents who share their
information about the history.

The translation $\varphi \leadsto \primed{\varphi}$ of $\StpLTL$ formulas
$\varphi$ into ``equivalent'' LTLK formulas $\primed{\varphi}$ 
presented in the proof of Lemma \ref{lemma:embdding-pobs-in-LTLK}
is now replaced
with a translation of the $\StpLTL$ formulas $\phi_v$ for the nodes
in the syntax tree of $\phi$ into LTLK formulas.
If $v$ is a leaf (i.e., $\phi_v  \in \{\true\} \cup P$) then $\phi_v'=\phi_v$.
Let $v$ now be an inner node.
If 
$\phi_v = \phi_w \wedge \phi_u$ or $\phi_v = \phi_w \Until \phi_u$
(where $w$ and $u$ are the childs of $v$) then
$\phi_v' = \phi_w' \wedge \phi_u'$ or $\phi_v' = \phi_w' \Until \phi_u'$,
respectively.
The definition is analogous for the cases 
$\phi_v = \neg \phi_w$ or $\phi_v = \neXt \phi_w$.
Suppose now that $v \in V$, i.e., $\phi_v = \StMod{a_v}{\phi_w}$
for the child $w$ for $v$.
Then, we define:
\begin{center}
   $\phi_v' \  = \ 
    \overline{K}_{A_v}{(\phi_w' \wedge \Box \neg \bot_{a_v})}$
\end{center}
For each path $\pi$ in $\cT'$, each $n \in \Nat$ and node $v$ 
in the syntax tree
of $\phi$ we have (where we write $Q_v$ instead of $Q_v^{\tinyincr}$):
\begin{center}
  $(\pi,n) \LTLKmodels \phi_v'$
  \ iff \ 
  $(\proj{\trace(\suffix{\pi}{n})}{P},
    \proj{\trace(\prefix{\pi}{n})}{P}) \incrmodels{Q_v} \phi_v$
\end{center}
Finally, we define
$\ltlk{\phi}$ as in the proof of
Lemma \ref{lemma:embdding-pobs-in-LTLK} by
$\ltlk{\phi}= \Box \neg \bot_0 \to \phi'$
where $\phi' = \phi_v'$ for the root node $v$ for the syntax tree of $\phi$.
It is now easy to see that
$\fT \incrmodels{} \phi$ iff
$\ltlk{\fT} \LTLKmodels \ltlk{\phi}$.
Furthermore,
the alternation depth of 
$\ltlk{\phi}$ is bounded by both $N=|\Ag|$ and the alternation depth of $\phi$.
\end{proof}


Corollary \ref{cor:incr-N-1-EXPSPACE} is now a consequence of
Lemma \ref{lemma:embedding-incr-in-LTLK} 
in combination with the results of
\cite{MC-LTLK-and-beyond-2024}.


\subsection{Generic model checking algorithm with improved space complexity}

\label{sec:m-minus-1-EXPSPACE}

The complexity of 
the generic $\StpLTL^*$ model checking algorithm of
Section \ref{sec:model-checking}
is $m$-fold exponentially time-bounded
with $m=1$ for $* \in \{\step,\public,\decr\}$ (see Lemma \ref{lemma:single-exp-run-time-step-public-decr}),  $m = |\Ag|$ for $* = \incr$ (see Lemma \ref{lemma:N-EXP-incremental-semantics})
and $m = \ad(\phi)$ for $*=\pobs$ (see Lemma \ref{d-exp-time-bound}).
To match the space bounds stated in
Corollary 5.5 for $\pobsmodels$ and $\incrmodels{}$, Theorem 5.6 and
Corollary \ref{cor:incr-N-1-EXPSPACE},
the algorithm of Section \ref{sec:model-checking}
can be adapted to obtain an $(m{-}1)$-EXPSPACE algorithm.

Let us briefly sketch the ideas for the pure observation-based semantics.
The idea for $\StpLTL_1$ formulas is to combine classical on-the-fly
automata-based LTL model checking techniques with an on-the-fly
construction of history automata, 
similar to the
techniques proposed in \cite{MC-LTLK-and-beyond-2024} for CTL*K.
This yields a polynomially-space bounded algorithm for $\StpLTL$ formulas
of alternation depth 1.

Suppose now that $d=\ad(\phi)\geqslant 2$.
Let $\chi_1,\ldots\chi_{\ell}$ be the maximal standpoint subformulas
of $\phi$ with $\ad(\chi_i) = d$, $i=1,\ldots,\ell$,
and $\chi_{\ell+1},\ldots,\chi_k$ the maximal standpoint subformulas
of $\phi$ with  $\ad(\chi_i) < d$, $i=\ell{+}1,\ldots,k$.
We apply the model checking
algorithm of Section \ref{sec:model-checking} to generate history-DFAs
$\cD_{\psi}$ for all standpoint subformulas $\psi$ of $\phi$ where either
$\psi \in \{\chi_{\ell {+}1},\ldots,\chi_k\}$ 
or $\psi$ is a standpoint subformula of one of the formulas
$\chi_i$, $i \in \{1,\ldots,\ell\}$ with $\ad(\psi) = d{-}1$ and maximal with
this property (i.e., there is no standpoint 
subformula $\psi'$ of $\chi_i$ where
$\psi$ is a proper subformula of $\psi'$ and $\ad(\psi') = d{-}1$).
Let $\Psi$ denote the set of these subformulas $\psi$ of $\phi$.
The sizes of the history-DFAs $\cD_{\psi}$ 
for $\psi \in \Psi$ are $(d{-}1)$-fold
exponentially bounded.
We now proceed in a similar way as we did in the induction step 
of the model checking algorithm
and introduce pairwise distinct, fresh atomic propositions $p_{\psi}$ 
for each $\psi \in \Psi$.
Then, $\phi' = \phi[\psi/p_{\psi} : \psi \in \Psi]$ is a $\StpLTL$ formula
over $\cP = P \cup \{p_{\psi} : \psi \in \Psi \}$
and has alternation depth 1. We then modify the
$\StpLTL$ structure $\fT$ by replacing
$\cT_0$ with the product of $\cT_0$ with the history-DFAs
for $\chi_1,\ldots,\chi_{\ell}$ and modifying the $\cT_a$'s accordingly.
The size of the resulting structure $\fT'$ 
is $(d{-}1)$-fold exponentially bounded.
We finally run a polynomially space-bounded algorithm to $\fT'$ and $\phi'$.

\end{appendix}

\end{document}